\newtheorem{theorem}{Theorem}[section]
\newtheorem{lemma}[theorem]{Lemma}
\newtheorem{corollary}[theorem]{Corollary}
\newtheorem{remark}{Remark}
\newcommand{\bigoh}{\mathcal{O}}
\newif\iflong
\newif\ifshort
\newcommand{\Nat}{\mathbb{N}}
\tikzset{
    ncbar angle/.initial=90,
    ncbar/.style={
        to path=(\tikztostart)
        -- ($(\tikztostart)!#1!\pgfkeysvalueof{/tikz/ncbar angle}:(\tikztotarget)$)
        -- ($(\tikztotarget)!($(\tikztostart)!#1!\pgfkeysvalueof{/tikz/ncbar angle}:(\tikztotarget)$)!\pgfkeysvalueof{/tikz/ncbar angle}:(\tikztostart)$)
        -- (\tikztotarget)
    },
    ncbar/.default=0.5cm,
}
\tikzset{square left brace/.style={ncbar=0.2cm}}
\tikzset{square right brace/.style={ncbar=-0.2cm}}
\tikzset{round left paren/.style={ncbar=0.5cm,out=120,in=-120}}
\tikzset{round right paren/.style={ncbar=0.5cm,out=60,in=-60}}
\tikzset{decorate sep/.style 2 args={decorate,decoration={shape backgrounds,shape=circle,shape size=#1,shape sep=#2}}}
\def\boxit#1{\vbox{\hrule\hbox{\vrule\kern4pt
  \vbox{\kern1pt#1\kern1pt}
\kern2pt\vrule}\hrule}}
\newcommand{\NP}{\mbox{{\sf NP}}}
\newcommand{\FPT}{\mbox{{\sf FPT}}}
\newcommand{\W}{{\cal W}}
\begin{document}

\title{\Large Near-Optimal Algorithms for Point-Line Covering Problems}

\author{
Jianer Chen\thanks{Department of Computer Science and Engineering, Texas A\&M University, College Station,  TX 77843, USA. Email: {\tt chen@cse.tamu.edu.}} \and 
Qin Huang\thanks{Department of Computer Science and Engineering,
    Texas A\&M University, College Station,  TX 77843, USA. Email: {\tt huangqin@tamu.edu.}}
    
\and Iyad Kanj\thanks{School of Computing, DePaul University, Chicago, IL 60604, USA. Email: {\tt ikanj@cs.depaul.edu}.} 
  
    \and Ge Xia\thanks{Department of Computer Science, Lafayette College, Easton, PA 18042, USA. Email: {\tt xiag@lafayette.edu.}} 
    }

\date{}

\maketitle





\thispagestyle{empty} 
\begin{abstract} 

We study fundamental point-line covering problems in computational geometry, in which the input is a set $S$ of points in the plane. The first is the {\sc Rich Lines} problem, which asks for the set of all lines that each covers at least $\lambda$ points from $S$, for a given integer parameter $\lambda \geq 2$; this problem subsumes the {\sc 3-Points-on-Line} problem and the {\sc Exact Fitting} problem, which---the latter---asks for a line containing the maximum number of points. The second is the \NP-hard problem {\sc Line Cover}, which asks for a set of $k$ lines that cover the points of $S$, for a given parameter $k \in \Nat$. Both problems have been extensively studied. In particular, the {\sc Rich Lines} problem is a fundamental problem whose solution serves as a building block for several algorithms in computational geometry.

For {\sc Rich Lines} and {\sc Exact Fitting}, we present a randomized Monte Carlo algorithm that achieves a lower running time than that of Guibas et al.'s algorithm [{\it Computational Geometry} 1996], for a wide range of the parameter $\lambda$. We derive lower-bound results showing that, for $\lambda =\Omega(\sqrt{n \log n})$, the upper bound on the running time of this randomized algorithm matches the lower bound that we derive on the time complexity of {\sc Rich Lines} in the algebraic computation trees model.

For {\sc Line Cover}, we present two kernelization algorithms: a randomized Monte Carlo algorithm and a deterministic algorithm. Both algorithms improve the running time of existing kernelization algorithms for {\sc Line Cover}. We derive lower-bound results showing that the running time of the randomized algorithm we present comes close to the lower bound we derive on the time complexity of kernelization algorithms for {\sc Line Cover} in the algebraic computation trees model. 

\end{abstract}

\section{Introduction}
We study fundamental problems in computational geometry pertaining to covering a set $S$ of $n$ points in the plane with lines. The first problem, referred to as {\sc Rich Lines}, is defined as:

\vspace*{-3mm}
\begin{quote}
{\sc Rich Lines}: Given a set $S$ of $n$ points and an integer parameter $\lambda \geq 2$, compute the set of lines that each covers at least $\lambda$ points.
\end{quote}


A special case of {\sc Rich Lines} that has received attention is the {\sc Exact Fitting} problem~\cite{Guibas1996}, which asks for computing a line that covers the maximum number of points in $S$. {\sc Exact Fitting} subsumes the well-known {\sc 3-Points-on-Line} problem in an obvious way.  

 The {\sc Rich Lines} problem is a fundamental problem whose solution serves as a building block for several algorithms in computational geometry~\cite{bends,castro,iyad,Grantson2006,efficient,things}, including algorithms for the fundamental {\sc Line Cover} problem, which is our other focal problem: 
\vspace*{-3mm}
\begin{quote}
    {\sc Line Cover}: Given a set $S$ of $n$ points and a parameter $k \in \Nat$, decide if there exist at most $k$ lines that cover all points in $S$.
\end{quote}


\iflong
See Figure~\ref{fig:workflowedge} for an illustration of {\sc Rich Lines}, {\sc Exact Fitting}, and {\sc Line Cover}. 

\begin{figure}[!h]
  \centering
    \includegraphics[width=0.4\textwidth]{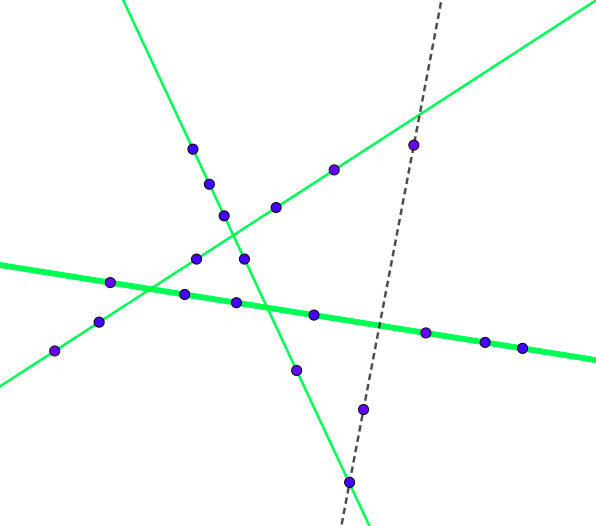}
  \caption{Illustration of an instance of {\sc Line Cover} with $k = 4$, an instance of {\sc Rich Lines} with $\lambda = 5$, and an instance of {\sc Exact Fitting}. The set of all lines is the solution to the {\sc Line Cover} instance; the set of solid green lines is the solution to the {\sc Rich Lines} instance; and the bold green line is a solution to the {\sc Exact Fitting} instance.}
  \label{fig:workflowedge}
\end{figure}

\fi
The {\sc Line Cover} problem is \NP-hard~\cite{megiddo}, and has been extensively studied in parameterized complexity~\cite{Afshani2016,cao,castro,Grantson2006,kratsch,things,wangjx}, especially with respect to kernelization. Guibas et al.'s algorithm~\cite{Guibas1996} for {\sc Rich Lines} was used to give a simple  kernelization algorithm that computes a kernel of size\footnote{In this paper, the size of the kernel for {\sc Line Cover} stands for the number of points in the kernel.}
 at most $k^2$, and this upper bound on the kernel size was proved to be essentially tight by Kratsch et al.~\cite{kratsch}.
 
The current paper derives both upper and lower bounds on the time complexity of {\sc Rich Lines}, and the time complexity of the kernelization of {\sc Line Cover}. Most of the algorithmic upper-bound results we present are randomized Monte Carlo algorithms, providing guarantees on the running time of the algorithms, but may make one-sided errors with a small probability. Our work is motivated by the applications of both problems to on-line data analytics~\cite{bigdataalgorithm}, where massive data processing within a guaranteed time upper bound is required (e.g., dynamic or streaming environments~\cite{alman,spaa2015,bigdataalgorithm}). In such settings, where the data set has an enormous size, classical algorithmic techniques become infeasible, and timely pre-processing the very large input in order to reduce its size becomes essential. Therefore, we seek algorithms whose running time is nearly linear and whose space complexity is low, trading off the optimality/correctness of the algorithm with a small probability.

\subsection{Related Work}
Both {\sc Rich Lines} and {\sc Exact Fitting} were studied by Guibas et al.~\cite{Guibas1996}, motivated by their applications in statistical analysis (e.g., linear regressions), computer vision, pattern recognition, and computer graphics~\cite{app1,app2}. Guibas et al.~\cite{Guibas1996} developed an $\bigoh(\min\{\frac{n^2}{\lambda}\ln \frac{n}{\lambda}, n^2\})$-time algorithm for {\sc Rich Lines}, and used it to solve {\sc Exact Fitting} within the same time upper bound.  Guibas et al.'s algorithm~\cite{Guibas1996}  was subsequently used in many algorithmic results~\cite{bends,castro,iyad,Grantson2006,efficient,things} pertaining to geometric covering problems and their applications. 

The {\sc Line Cover} problem has been extensively studied with respect to several computational frameworks, including approximation~\cite{cao,Grantson2006} and parameterized complexity~\cite{Afshani2016,cao,castro,kratsch,things,wangjx}. 
The problem is known to be APX-hard~\cite{kumar} and is approximable within ratio $\log n$, being a special case of the set cover problem~\cite{johnson}. 

From the parameterized complexity perspective, several fixed-parameter tractable algorithms for {\sc Line Cover} were developed~\cite{Afshani2016,Grantson2006,things,wangjx}, leading to the current-best algorithm that runs in time $(c \cdot k/\log k)^k n^{\bigoh(1)}$~\cite{Afshani2016}, for some constant $c > 0$. Guibas et al.'s algorithm~\cite{Guibas1996} was used in several works to give a kernel of size $k^2$ that is computable in time $\bigoh(\min \{\frac{n^2}{k} \log(\frac{n}{k}),n^2\})$~\cite{castro,kratsch,things}. 
This quadratic kernel size was shown to be essentially tight by Kratsch et al.~\cite{kratsch}, who showed that: For any $\epsilon > 0$, unless the polynomial-time hierarchy collapses to the third level, {\sc Line Cover} has no kernel of size $\bigoh(k^{2-\epsilon})$. 

Kernelization algorithms for {\sc Line Cover} have drawn attention in recent research in massive-data processing; Mnich~\cite{bigdataalgorithm} discusses how the {\sc Line Cover} problem is used in such settings, where the point set represents a 
very large collection of observed (accurate) data, and the solution sought is a model consisting of at most $k$ linear predictors~\cite{freedman}.

Chitnis et al.~\cite{spaa2015} studied {\sc Line Cover} in the streaming model, and Alman et al.~\cite{alman} studied the problem in the dynamic model.  We mention that Chitnis et al.'s streaming algorithm~\cite{spaa2015} may be used to give a Monte Carlo kernelization algorithm for {\sc Line Cover} running in time $\bigoh(n (\log n)^{\bigoh(1)})$, and the dynamic algorithm of Alman et al.~\cite{alman} may be used to give a deterministic kernelization algorithm for {\sc Line Cover} running in time $\bigoh(n k^2)$.

We finally note that there has been considerable work on randomized algorithms for geometric problems (see~\cite{agarwal,clarkson1,clarkson3,randomsampling}, to name a few). The most relevant of which to our work is the randomized algorithm for approximating geometric set covering problems~\cite{bronnimann,clarkson4} (see also~\cite{agarwal}), which implies an $\bigoh(\log k)$-factor approximation algorithm for the optimization version of {\sc Line Cover} whose expected running time is $\bigoh(nk (\log n) (\log k))$.

\subsection{Results and Techniques}
In this paper, we develop new tools to derive upper and lower bounds on the time complexity of {\sc Rich Lines} and the kernelization time complexity of {\sc Line Cover}. Our results and techniques are summarized as follows.

\subsubsection{Results for {\sc Rich Lines}} We present a randomized one-sided errors Monte Carlo algorithm for {\sc Rich Lines} that, with probability at least $1-\frac{3}{n^2}$, returns the correct solution set, where $n$ is the number of points. The algorithm achieves a lower running time upper bound than Guibas et al.'s algorithm~\cite{Guibas1996} for a wide range of the parameter $\lambda$, namely for $\lambda= \Omega(\log n)$, and matches its running time otherwise. For instance,  when $\lambda =\Theta(\sqrt{n \log n})$, the running time of our algorithm is $\bigoh(n \log n)$, whereas that of Guibas et al.'s algorithm is $\bigoh(n^{3/2} \sqrt{ \log n})$, yielding a $(\sqrt{n/\log n})$-factor improvement. We show that, for $\lambda =\Omega(\sqrt{n \log n})$, the upper bound of $\bigoh(n \log(\frac{n}{\lambda}))$ on the running time of our randomized algorithm matches the lower bound that we derive on the time complexity of the problem in the algebraic computation trees model\ifshort ~($\spadesuit$)\fi. The algorithm for {\sc Rich Lines} implies an algorithm for {\sc Exact Fitting} with the same performance guarantees---as shown by Guibas et al.~\cite{Guibas1996}, obtained by binary-searching for the value of $\lambda$ that corresponds to the line(s) containing the maximum number of points. 

The crux of the technical contributions leading to the randomized algorithm we present is a set of new tools we develop pertaining to point-line incidences and sampling. The aforementioned tools allow us to show that, by sampling a smaller subset of the original set of points, with high probability, we can reduce the problem of computing the set of $\lambda$-rich lines in the original set to that of computing the set of $\lambda'$-rich lines in the smaller subset, where $\lambda'$ is a smaller parameter than $\lambda$. 

The time lower-bound result we present\ifshort ~($\spadesuit$)\fi~is obtained via a 2-step reduction. The first employs Ben-Or's framework~\cite{benor} to show a time lower bound of $\Omega(n \log(\frac{n}{\lambda}))$ in the algebraic computation trees model on a problem that we define, referred to as the {\sc Multiset Subset Distinctness} problem.  We then compose this reduction with a reduction from {\sc Multiset Subset Distinctness} to {\sc Rich Lines}, thus establishing the $\Omega(n \log(\frac{n}{\lambda}))$ lower-bound result for {\sc Rich Lines}. We note that these reductions are very ``sensitive'', and hence need to be crafted carefully, as the lower-bound results apply for \emph{every} value of $n$ and $\lambda$.

\subsubsection{Results for {\sc Line Cover}} 
We derive a lower bound on the time complexity of kernelization algorithms for {\sc Line Cover} in the algebraic computation trees model and show that the running time of any such algorithm must be at least $c n\log k$ for some constant $c > 0$\ifshort ~($\spadesuit$)\fi; more specifically, one cannot asymptotically improve either of the two factors $n$ or $\log k$ in this term. This result particularly rules out the possibility of a kernelization algorithm that runs in $\bigoh(n)$ time (i.e., in linear time).   We derive this lower bound by combining a lower-bound result by Grantson and Levcopoulos~\cite{Grantson2006} on the time complexity of {\sc Line Cover} with a result that we prove in this paper connecting the time complexity of {\sc Line Cover} to its kernelization time complexity. In fact, it is not difficult to develop a kernelization algorithm for {\sc Line Cover} that runs in time $\bigoh(n \log k + g(k))$ for some computable function $g(k)$, and computes a kernel of size $k^2$. This can be done by processing the input in ``batches'' of size roughly $k^2$ each; this is implied by the algorithm in~\cite{Grantson2006}, which runs in time $\Omega(n \log k + k^4 \log k)$, and approximates the optimization version of {\sc Line Cover}.  Therefore, we focus on 
developing kernelization algorithms where the function $g(k)$ in their running time is as small as possible. Since we can assume that $n \geq k^2$ (otherwise, the instance is already kernelized), we may assume that $g(k) =\Omega(k^2 \log k)$. Therefore, we endeavor to develop a kernelization algorithm for which the function $g(k)$---in its running time---is as close as possible to $\bigoh(k^2 \log k)$, and hence, a kernelization algorithm whose running time is as close as possible to $\bigoh(n \log k)$. In addition, reducing the function $g(k)$ serves well our purpose of obtaining near-linear-time kernelization algorithms for {\sc Line Cover} due to their potential applications~\cite{freedman,bigdataalgorithm}. 

We present two kernelization algorithms for {\sc Line Cover}. The first is a randomized one-sided errors Monte Carlo algorithm that runs in time $\bigoh(n\log k+ k^2(\log^2 k)(\log\log k)^2)$ and space $\bigoh(k^2 \log^2 k)$ and, with probability at least $1-\frac{2}{k^3}$, computes a kernel of size at most $k^2$.  The second is a deterministic algorithm that computes a kernel of size at most $k^2$ in time $\bigoh(n\log k+k^3(\log^3 k) (\sqrt{\log\log k}))$. 
Both algorithms improve the running time of existing kernelization algorithms for {\sc Line Cover}~\cite{alman,spaa2015,castro,Grantson2006,kratsch,things}. Moreover, the running time of the randomized algorithm comes within a factor of $\log k(\log\log k)^2$ from the derived lower bound on the time complexity of kernelization algorithms for {\sc Line Cover}.  
 
The key tool leading to the improved kernelization algorithms is partitioning the ``saturation range'' of the saturated lines (i.e., the lines that each contains at least $k+1$ points and must be in the solution) in the batch of points under consideration into intervals, thus defining a spectrum of saturation levels. Then the algorithm for {\sc Rich Lines} (either the randomized or Guibas et al.'s algorithm~\cite{Guibas1996}) is invoked starting with the highest saturation threshold, and iteratively decreasing the threshold until either: the saturated lines computed cover ``enough'' points of the batch under consideration, or the total number of saturated lines computed is ``large enough'' thus making ``enough progress'' towards computing the line cover. This scheme enables us to amortize the running time of the algorithm that computes the saturated lines, creating a win/win situation and improving the overall running time.  

\iflong
\subsection{Organization of the Paper} Section~\ref{sec:prelim} briefly introduces the necessary terminologies and background. Section~\ref{sec:exactfitting} contains the results for {\sc Rich Lines} and Section~\ref{sec:linecover} contains those for {\sc Line Cover}. Section~\ref{sec:lowerbounds} contains the lower-bound results.
Section~\ref{sec:conclusion} contains concluding remarks. 
\fi

\section{Preliminaries}
\label{sec:prelim}
We assume familiarity with basic geometry, probability, and parameterized complexity and refer to the following standard textbooks on some of these subjects~\cite{Cygan2015,fptbook,flumgrohe,upfal,niedermeier}. For a positive integer $i$, we write $[i]$ for $\{1,2,\ldots,i\}$. We write ``\emph{w.h.p.}''~as an abbreviation for ``with high probability'', and we write ``\emph{u.a.r.}''~as an abbreviation for ``uniformly at random''.

{\bf Probability.} The \emph{union bound} states that, for any probabilistic events $E_1, E_2, \ldots, E_j$, we have:
$\Pr\left (\bigcup_{i=1}^{j} E_i \right)\le \sum_{i=1}^j \Pr(E_i)$.  For any discrete random variables
$X_1, \ldots, X_n$ with finite expectations, it is well known that: $E[\sum_{i=1}^n X_i]=\sum_{i=1}^n E[X_i]$, where $E[X]$ denotes the expectation of $X$.

\begin{theorem}[Theorem 4 in~\cite{wassily}] \label{the:neg}
Let $\mathscr{C}=\{x_1, \ldots, x_N\}$, where $x_i\in \{0,1\}$ for $i\in[N]$. 
Let $X_1, X_2, \ldots, X_j$ denote a 
random sample without replacement from $\mathscr{C}$ and let $Y_1, Y_2, \ldots, Y_j$ 
denote a random sample with replacement from $\mathscr{C}$. 
If the function $f(x)$ is continuous and convex then $E[f(\sum_{i=1}^j X_i)] \le E[f(\sum_{i=1}^j Y_i)]$. 
Moreover, $E[e^{h(\sum_{i=1}^j X_i)}] \le E[e^{h(\sum_{i=1}^j Y_i)}]$, where $h$ is a constant. 
\end{theorem}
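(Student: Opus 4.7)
The first statement unpacks to the combinatorial inequality
\[
\binom{N}{j}^{-1} \sum_{I \in \binom{[N]}{j}} f\!\left(\textstyle\sum_{i \in I} x_i\right) \;\le\; N^{-j} \sum_{(i_1,\ldots,i_j) \in [N]^j} f(x_{i_1}+\cdots+x_{i_j}),
\]
since the LHS is exactly $E[f(\sum X_i)]$ and the RHS is exactly $E[f(\sum Y_i)]$. The second (exponential-moment) statement is then immediate once the first is established, since $t \mapsto e^{ht}$ is convex for every real constant $h$. So the entire difficulty lies in the first inequality.

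My plan is a two-layer conditioning argument. Realize $Y_\ell = x_{U_\ell}$ with $U_1,\ldots,U_j$ i.i.d.\ uniform on $[N]$, and let $D$ be the event that $U_1,\ldots,U_j$ are all distinct. A direct computation shows $(Y_1,\ldots,Y_j) \mid D \stackrel{d}{=} (X_1,\ldots,X_j)$, so the law of total expectation reduces matters to showing $E[f(\sum Y_\ell)\mid D^c] \ge E[f(\sum X_\ell)]$. I would decompose $D^c$ by the collision partition $\mathcal{P}$ of $[j]$ induced by $(U_1,\ldots,U_j)$ (positions $s,t$ are in the same block iff $U_s = U_t$). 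Conditional on $\mathcal{P}$ with blocks of sizes $w_1,\ldots,w_m$ (where $m<j$ since we are in $D^c$), the block representatives form a uniformly random ordered $m$-tuple of distinct indices $(J_1,\ldots,J_m)$, so $\sum Y_\ell = \sum_k w_k x_{J_k}$. This reduces the claim to a convexity lemma: for any positive integer weights with $\sum_k w_k = j$,
\[
E\!\left[f\!\left(\textstyle\sum_k w_k x_{J_k}\right)\right] \;\ge\; E\!\left[f\!\left(\textstyle\sum_\ell x_{K_\ell}\right)\right],
\]
where $(J_1,\ldots,J_m)$ and $(K_1,\ldots,K_j)$ are uniform ordered tuples of distinct indices of lengths $m$ and $j$.

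I would prove this lemma by an inductive ``de-merging'' step: whenever some $w_k \ge 2$, compare the current configuration to the one obtained by replacing $w_k$ with $w_k-1$ and adjoining a new singleton block. Conditioning on all other $J$'s fixes a remainder $R$, and conditioning further on the unordered pair of ``split'' indices $\{a,b\}$ reduces the comparison to
\[
\textstyle\frac{1}{2}\bigl[f(R+(w_k{-}1)x_a+x_b)+f(R+(w_k{-}1)x_b+x_a)\bigr] \;\le\; \textstyle\frac{1}{2}\bigl[f(R+w_k x_a)+f(R+w_k x_b)\bigr].
\]
Both pairs have the same sum, and the right-hand pair is more spread out (differences $w_k(x_a-x_b)$ versus $(w_k-2)(x_a-x_b)$), so convexity of $f$ gives the inequality; iterating down to all-ones weights proves the lemma. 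The main obstacle I foresee is the bookkeeping in the partition decomposition: one must verify carefully that, conditional on any fixed collision pattern $\mathcal{P}$ with $m$ blocks, the $m$-tuple of block labels is indeed uniform over ordered $m$-tuples of distinct indices in $[N]$, so that the reduction to the pair inequality is legitimate. Once this standard exchangeability check is in place, the convexity step is clean and, as noted above, the exponential-moment statement requires no additional work.
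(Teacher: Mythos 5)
Your argument is correct. Note first that the paper does not prove this statement at all---it is imported verbatim as Theorem 4 of Hoeffding's paper~\cite{wassily}---so the only meaningful comparison is with Hoeffding's own proof. Hoeffding's route is a conditional-Jensen argument: he exhibits a coupling under which the without-replacement sum equals the conditional expectation of the with-replacement sum given a suitable $\sigma$-field, whence $f(\sum X_i)=f(E[\sum Y_i\mid\mathcal{F}])\le E[f(\sum Y_i)\mid\mathcal{F}]$, and taking expectations finishes. You replace that single slick coupling with an explicit combinatorial decomposition: condition on the collision partition of the i.i.d.\ indices, observe that the block labels are then uniform over ordered distinct tuples, and run a de-merging induction whose atomic step is a two-point majorization inequality for convex $f$. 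Both of your key reductions check out. Given a fixed collision pattern $\mathcal{P}$ with $m$ blocks, every ordered distinct $m$-tuple of labels has the same conditional probability $(1/N)^j/\Pr(\mathcal{P})$, so the exchangeability you flag does hold; and in the de-merging step the pair $\bigl(R+(w_k-1)x_a+x_b,\;R+(w_k-1)x_b+x_a\bigr)$ has the same sum as, and lies inside the interval spanned by, $\bigl(R+w_kx_a,\;R+w_kx_b\bigr)$ (differences $(w_k-2)(x_a-x_b)$ versus $w_k(x_a-x_b)$ with $w_k\ge 2$), so convexity closes the step; the conditioning on the $m-1$ shared labels is legitimate because their marginal is the same in both configurations, and the ``draw an unordered pair, then pick one of the two uniformly'' device does produce a uniform single label. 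What your version buys: it is entirely elementary and self-contained, it never uses the hypothesis $x_i\in\{0,1\}$ (so it proves the general finite-population statement), and continuity of $f$ is not actually needed. What it costs is length---the partition bookkeeping and the induction occupy several steps where Hoeffding's proof is essentially one application of conditional Jensen. The exponential-moment clause is, as you say, just the case $f(t)=e^{ht}$.
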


The following lemma, for the sum of a random 
sample without replacement from a finite set, can be viewed as an application of 
Chernoff's bounds---customized to our needs---to negatively correlated random variables:

\begin{lemma} \ifshort {\rm ($\spadesuit$)}\fi \label{new-chernoff}
Let $\mathscr{C}=\{x_1, \ldots, x_N\}$, where $x_i\in \{0,1\}$ for $i\in[N]$. 
Let $X_1, X_2, \ldots, X_j$ denote a 
random sample without replacement from $\mathscr{C}$. 
Let $X=\sum_{i=1}^j X_i$,
$\mu=E[X]$, 
and $\mu_1, \mu_2$ be any two values such that $\mu_1\le \mu \le \mu_2$. Then,
(A) for any $\delta>0$, we have
$\Pr(X\ge (1+\delta)\mu_2) \le \left ( \frac{e^{\delta}}{(1+\delta)^{(1+\delta)}} \right)^{\mu_2}$; and (B) for any $0<\delta<1$, we have
$\Pr(X\le (1-\delta)\mu_1) \le \left ( \frac{e^{-\delta}}{(1-\delta)^{(1-\delta)}} \right)^{\mu_1}$.
\end{lemma}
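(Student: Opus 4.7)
The plan is to follow the standard Chernoff-bound derivation, using Theorem~\ref{the:neg} as a black box to replace sampling without replacement by sampling with replacement inside the moment-generating function. First I would apply Markov's inequality to the exponentiated sum: for any constant $t > 0$, $\Pr(X \ge a) = \Pr(e^{tX} \ge e^{ta}) \le e^{-ta}\,E[e^{tX}]$. By Theorem~\ref{the:neg} (taking $h = t$), $E[e^{tX}] \le E[e^{tY}]$, where $Y = \sum_{i=1}^j Y_i$ is the corresponding sum under sampling with replacement. Since the $Y_i$ are i.i.d.\ Bernoulli with parameter $p = \mu/j$ (using that $E[X] = E[Y] = \mu = j \cdot \tfrac{1}{N}\sum_i x_i$), the standard inequality $1 - p + p e^t \le e^{p(e^t-1)}$ gives $E[e^{tY}] \le e^{\mu(e^t - 1)}$.

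For part (A) I would then set $a = (1+\delta)\mu_2$ and optimize over $t > 0$ by the standard choice $t = \ln(1+\delta)$, which yields
\[
\Pr\bigl(X \ge (1+\delta)\mu_2\bigr) \;\le\; \exp\bigl(\mu\,\delta - (1+\delta)\mu_2 \ln(1+\delta)\bigr).
\]
Since $\delta > 0$ and $\mu \le \mu_2$ by hypothesis, we have $\mu\,\delta \le \mu_2\,\delta$, and so the exponent is at most $\mu_2\delta - (1+\delta)\mu_2 \ln(1+\delta)$. Exponentiating gives exactly $\bigl(e^{\delta}/(1+\delta)^{1+\delta}\bigr)^{\mu_2}$, as claimed.

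For part (B) I would mirror the argument with $t < 0$: Markov's inequality gives $\Pr(X \le a) \le e^{-ta}\,E[e^{tX}]$, Theorem~\ref{the:neg} again bounds $E[e^{tX}]$ by $E[e^{tY}] \le e^{\mu(e^t-1)}$, and setting $a = (1-\delta)\mu_1$ and $t = \ln(1-\delta) < 0$ produces
\[
\Pr\bigl(X \le (1-\delta)\mu_1\bigr) \;\le\; \exp\bigl(-\mu\,\delta - (1-\delta)\mu_1\ln(1-\delta)\bigr).
\]
Because $\mu \ge \mu_1$ and $\delta > 0$, we have $-\mu\,\delta \le -\mu_1\,\delta$, so the exponent is at most $-\mu_1\delta - (1-\delta)\mu_1\ln(1-\delta)$, which gives the claimed bound $\bigl(e^{-\delta}/(1-\delta)^{1-\delta}\bigr)^{\mu_1}$.

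There is no real obstacle here beyond bookkeeping; the only point that deserves some care is the substitution of $\mu$ by $\mu_2$ (respectively $\mu_1$) in the linear-in-$\mu$ part of the exponent, which is where the flexibility of allowing any pair of bounds $\mu_1 \le \mu \le \mu_2$ (rather than just $\mu$ itself) enters. This is needed for the intended applications, where the exact expectation may not be known but one-sided bounds on it are available.
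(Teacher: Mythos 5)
Your proposal is correct and follows essentially the same route as the paper's proof: Markov's inequality on $e^{tX}$, Theorem~\ref{the:neg} to pass from sampling without replacement to i.i.d.\ sampling, the bound $E[e^{tY_i}]\le e^{p(e^t-1)}$, and the standard choices $t=\ln(1+\delta)$ and $t=\ln(1-\delta)$. The only (immaterial) difference is that you optimize over $t$ before replacing $\mu$ by $\mu_2$ (resp.\ $\mu_1$), whereas the paper does the replacement first using the sign of $e^t-1$.
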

\iflong
\begin{proof}
This proof proceeds exactly as the proof of the Chernoff bound~\cite{upfal} with minor changes. 
Let $Y_1, Y_2, \ldots, Y_j$ 
denote a random sample with replacement from $\mathscr{C}$.
Note that $Y_1, \ldots, Y_j$ are independent and identically distributed, i.e., $p=\Pr(Y_i=1)=$ 
$\frac{x_1+\cdots x_N}{N}$, and $\Pr(Y_i=0)=1-p$. 
Let  $Y=\sum_{i=1}^j Y_i$. Clearly, $E[X]=E[Y]=jp=\mu$. 
Applying Markov's inequality, for any $t>0$ we have 
\begin{eqnarray}
    \Pr(X \ge (1+\delta)\mu_2) &=& \Pr(e^{tX} \ge e^{t(1+\delta)\mu_2}) \nonumber\\     
                                           &\le&  \frac{E[e^{tX}]}{e^{t(1+\delta)\mu_2}}   \label{CE1}   \\
                                           &\le&    \frac{E[e^{tY}]}{e^{t(1+\delta)\mu_2}}    \label{CE2}   \\
                                           &=&  \frac{\prod_{i=1}^j E[e^{tY_i}]}{e^{t(1+\delta)\mu_2}},  \label{CE3}
\end{eqnarray}
where inequality (\ref{CE2}) is obtained from (\ref{CE1}) by Theorem \ref{the:neg}, and 
inequality (\ref{CE3}) is derived from (\ref{CE2}) because $Y_1, \ldots, Y_j$ are independent and 
identically distributed. Since $\Pr(Y_i=1)=p$ and $\Pr(Y_i=0)=1-p$, we have 
$E[e^{tY_i}]=pe^t+(1-p)=1+p(e^t-1) \le e^{p(e^t-1)}$, where in the last inequality we have used 
the fact that, for any $y$, $1+y \le e^y$. Plugging this into equality (\ref{CE3}), we have:
\begin{eqnarray}
    \Pr(X \ge (1+\delta)\mu_2) &\le&  \frac{e^{\mu(e^t-1)}}{e^{t(1+\delta)\mu_2}} \nonumber  \\
                                               &\le& \frac{e^{\mu_2(e^t-1)}}{e^{t(1+\delta)\mu_2}},  \label{CE4}
\end{eqnarray}
where inequality (\ref{CE4}) is obtained because $\mu\le \mu_2$ and $t>0$. It is not difficult to verify that the 
function (\ref{CE4}) attains its minimum value of $(\frac{e^{\delta}}{(1+\delta)^{1+\delta}})^{\mu_2}$ by 
setting $t=\ln (1+\delta)>0$. Hence, 
$$\Pr(X \ge (1+\delta)\mu_2) \le \left (\frac{e^{\delta}}{(1+\delta)^{1+\delta}} \right)^{\mu_2}.$$

Similarly, we prove that for any $0<\delta<1$,
$\Pr(X\le (1-\delta)\mu_1) \le \left ( \frac{e^{-\delta}}{(1-\delta)^{(1-\delta)}} \right)^{\mu_1}$ holds.
For any $t<0$, applying the Markov's inequality, we have
\begin{eqnarray}
    \Pr(X \le (1-\delta)\mu_1) &=& \Pr(e^{tX} \ge e^{t(1-\delta)\mu_1}) \nonumber\\     
                                           &\le&  \frac{E[e^{tX}]}{e^{t(1-\delta)\mu_1}}   \label{CE5}   \\
                                           &\le&    \frac{E[e^{tY}]}{e^{t(1-\delta)\mu_1}}    \label{CE6}   \\
                                           &=&  \frac{\prod_{i=1}^j E[e^{tY_i}]}{e^{t(1-\delta)\mu_1}}  \label{CE7} \\
                                           &\le&  \frac{e^{\mu(e^t-1)}}{e^{t(1-\delta)\mu_1}} \label{CE8}  \\
                                           &\le& \frac{e^{\mu_1(e^t-1)}}{e^{t(1-\delta)\mu_1}}.  \label{CE9}
\end{eqnarray}

Inequality (\ref{CE6}) is obtained from (\ref{CE5}) by Theorem \ref{the:neg}. 
Inequality (\ref{CE7}) is derived from (\ref{CE6})  because $Y_1, \ldots, Y_j$ are 
independent and identically distributed. Inequality (\ref{CE8}) is obtained from (\ref{CE7}) because 
$E[e^{tY_i}]=1+p(e^t-1) \le e^{p(e^t-1)}$. Inequality (\ref{CE9}) is obtained from (\ref{CE8}) because $t<0$ and 
$\mu \ge \mu_1$. 
It is not difficult to verify that the 
function (\ref{CE9}) attains its minimum value of $(\frac{e^{-\delta}}{(1-\delta)^{1-\delta}})^{\mu_1}$ by 
setting $t=\ln (1-\delta)<0$. Hence, 
$$ \Pr(X \le (1-\delta)\mu_1) \le \left(\frac{e^{-\delta}}{(1-\delta)^{1-\delta}} \right)^{\mu_1}, $$
thus completing the proof.
\end{proof}
\fi

{\bf Point-Line Incidences.}  Let $S$ be a set of points. A line $l$ \emph{covers} a point $p \in S$ if $l$ passes through $p$ (i.e., contains $p$). A set $L$ of lines {\it covers} $S$ if every point in $S$ is covered by at least one line in $L$. A line $l$ is \emph{induced} by $S$ if $l$ covers at least 2 points of $S$, and a set $L$ of lines is \emph{induced} by $S$ if every line in $L$ is induced by $S$. For a set $L$ of lines, we define $I(L, S)$ as 
$I(L, S)=|\{(q, l)\mid q\in S \cap l, l\in L\}|$; that is, $I(L, S)$ is the number of incidences between $L$ and $S$. For a line $l$, let $I(l, S)=|\{(q, l) \mid q\in S\cap l\}|$. 
The following theorems upper bound $I(L, S)$ and the complexity of computing it:   

\begin{theorem}[\cite{comb3}] \label{IncidenceBound}
$I(L,S)\le \frac{5}{2}(mn)^{2/3}+m+n$, where $n=|S|$ and $m=|L|$.
\end{theorem}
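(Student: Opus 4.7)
The plan is to prove this Szemerédi--Trotter-type incidence bound via Székely's crossing number argument, which gives clean constants and handles the two degenerate regimes ($e < 4n$, lines containing at most one point) with simple additive corrections. Throughout, let $n=|S|$, $m=|L|$, and write $I = I(L,S)$ and $I_l = I(l,S)$ for the number of points of $S$ lying on $l \in L$.

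First I would set up an auxiliary topological graph $G$ drawn in the plane. The vertex set of $G$ is $S$, and for each line $l \in L$ with $I_l \ge 2$, I place edges between consecutive points of $S \cap l$ along $l$. A line $l$ with $I_l \ge 2$ contributes exactly $I_l - 1$ edges, so summing over all such lines the number of edges is
\[
e = \sum_{l:\, I_l \ge 2}(I_l - 1) \;\ge\; I - m,
\]
since lines with $I_l \le 1$ contribute at most one incidence each, absorbed in the $+m$ term. Because any two lines meet in at most one point, two edges of $G$ lying on distinct lines cross at most once, and edges on the same line do not cross at all. Hence the crossing number satisfies $\mathrm{cr}(G) \le \binom{m}{2} \le m^2/2$.

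Next I would invoke the crossing number inequality: for any simple topological graph with $n$ vertices and $e \ge 4n$ edges, $\mathrm{cr}(G) \ge e^3/(64 n^2)$ (or the sharper variant with constant $1/29$; whichever yields the stated $5/2$ in \cite{comb3}). Combining the two bounds in the regime $e \ge 4n$ gives
\[
\frac{e^3}{64\, n^2} \;\le\; \frac{m^2}{2}, \qquad \text{hence} \qquad e \;\le\; (32)^{1/3}\,(mn)^{2/3}.
\]
Therefore $I \le e + m \le (32)^{1/3}(mn)^{2/3} + m$. In the complementary regime $e < 4n$, one obtains $I \le 4n + m$ directly from the edge count bound. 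Combining the two cases and bookkeeping the additive terms yields $I \le C(mn)^{2/3} + m + n$ for an absolute constant $C$; using the sharper crossing number constant available in the literature gives precisely $C = 5/2$, matching the theorem statement.

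The main obstacle is pinning down the leading constant $5/2$, since the naive crossing number constant $1/64$ gives $(32)^{1/3} \approx 3.17$, which is larger than $5/2$. The fix is to appeal to the refined crossing lemma (e.g.\ Pach--Tóth's $\mathrm{cr}(G) \ge e^3/(29 n^2)$ for $e \ge \tfrac{7}{2}n$, or a further sharpening), which tightens the cube-root constant to at most $5/2$. Everything else --- the two-regime split, the additive $+m+n$ for degenerate lines and the small-$e$ case, and the reduction from $I$ to $e$ --- is routine once the crossing number framework is in place. Since the paper cites \cite{comb3} for the exact constant, one could alternatively just quote that reference verbatim; but the proof sketch above explains where $(mn)^{2/3}$, $m$, and $n$ each come from.
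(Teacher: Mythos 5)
First, a point of reference: the paper does not prove this statement at all---it is imported verbatim from the cited source \cite{comb3} (this explicit constant $\tfrac{5}{2}$ is the one obtained by Pach, Radoi\v{c}i\'c, Tardos and T\'oth via their improved crossing lemma). So there is no in-paper proof to compare against; your sketch should be judged on its own. Your route---Sz\'ekely's crossing-number argument with the graph of consecutive points along each line, $e \ge I - m$, and $\mathrm{cr}(G) \le \binom{m}{2}$---is exactly the standard derivation, and your diagnosis of the constant issue is right: the naive $e^3/(64n^2)$ lemma only yields $(32)^{1/3} \approx 3.17$, and one must invoke the refined crossing lemma (e.g.\ $\mathrm{cr}(G) \ge e^3/(29n^2)$ above a linear-in-$n$ edge threshold) to get a cube-root constant below $\tfrac{5}{2}$.

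There is, however, a genuine gap in your handling of the sparse regime, and ``bookkeeping the additive terms'' does not close it. If $e < cn$ (where $c$ is the threshold of whichever crossing lemma you use, $c = 4$, $\tfrac{7}{2}$, or $6.95$), your argument yields only $I \le cn + m$, and $cn + m \not\le \tfrac{5}{2}(mn)^{2/3} + m + n$ in general: take $m$ constant and $n$ large, so that $(mn)^{2/3} = o(n)$ and the right-hand side is $n + o(n)$ while your bound is $cn + O(1)$ with $c > 1$. The statement is of course still true there (with $m=1$ one trivially has $I \le n$), but your proof does not establish it. The standard repair is a third case: when $m = O(\sqrt{n})$, use the elementary dual bound $I \le n + \binom{m}{2}$, which follows because each point lying on $k \ge 2$ lines certifies $\binom{k}{2} \ge k-1$ distinct pairs of lines and any two lines share at most one point; one then checks that $\binom{m}{2} \le \tfrac{5}{2}(mn)^{2/3} + m$ in that range, and that the crossing-number case covers the complementary range $m = \Omega(\sqrt{n})$, where $cn \le \tfrac{5}{2}(mn)^{2/3} + n$ does hold. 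Matching the two thresholds so that the cases overlap is precisely the delicate constant-chasing that \cite{comb3} carries out; without the $I \le n + \binom{m}{2}$ ingredient your two-regime split is incomplete. Since the paper only cites the result, simply quoting \cite{comb3}, as you suggest at the end, is the faithful reproduction of what the paper actually does.
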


The theorem below follows from Theorem~3.1 in~\cite{incidences} after a slight modification, as was also observed by~\cite{Hopcroft's}:

\begin{theorem}[\cite{Hopcroft's,incidences}] \label{hopcroft}
Let $S$ be a set of $n$ points and $L$ a set of $m$ lines in the plane. The set of incidences between $S$ and $L$, and hence $I(L, S)$, can be computed in (deterministic) time $\bigoh(n\log m+m\log n+(mn)^{2/3}2^{\bigoh(\log^*(n+m))})$. Moreover, within the same running time, we can compute for each line $l \in L$ the set of points in $S$ that are contained in $l$.
\end{theorem}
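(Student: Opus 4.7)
The plan is to adapt the hierarchical-cutting algorithm that underlies Theorem~3.1 of~\cite{incidences} so that, in addition to producing the numerical count $I(L,S)$, it actually reports the bipartite incidence relation and then groups it by line. The central tool is a $(1/r)$-cutting of the arrangement of the $m$ lines: a partition of the plane into $\bigoh(r^2)$ constant-complexity trapezoidal cells, each of which is crossed by at most $m/r$ lines. Matou\v{s}ek-style constructions build such a cutting, together with the conflict list of each cell, deterministically in time $\bigoh(mr\cdot 2^{\bigoh(\log^* r)})$, which is where the peculiar $2^{\bigoh(\log^*(n+m))}$ factor in the claimed bound originates.

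First I would pick $r$ so that the incidence work balances across the recursion levels, yielding the Szemer\'edi-Trotter exponent $2/3$; this is the standard choice $r\approx(n/m)^{1/3}$ (capped appropriately when the parameters are lopsided). Next, perform point location to distribute $S$ among the cells of the cutting, at cost $\bigoh(n\log r)$; this produces, for each cell $\sigma$, a subset $S_\sigma\subseteq S$ and a line set $L_\sigma$ with $|L_\sigma|\le m/r$. Recurse on every pair $(S_\sigma,L_\sigma)$: incidences lying in the (open) interior of a cell are discovered by the recursive calls, while incidences involving points that sit on a cell boundary are processed once, directly, through the conflict lists of the boundary segments and vertices of the cutting. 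The recursion bottoms out once $m'$ or $n'$ falls below a fixed constant, at which point a brute-force $\bigoh(m'n')$ scan reports the remaining incidences. A standard level-by-level accounting shows that the total work is $\bigoh(n\log m+m\log n+(mn)^{2/3}\,2^{\bigoh(\log^*(n+m))})$, where the cutting overhead telescopes out of the recursion because $\log^*$ is sub-constant across depth.

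The ``slight modification'' needed on top of~\cite{incidences} is the output step. For each line $l\in L$ I would keep an initially empty bucket; whenever the base case (or the boundary-handling routine) certifies a pair $(p,l)$ as an incidence, I would append $p$ to the bucket of $l$. Since Theorem~\ref{IncidenceBound} caps the total number of incidences by $\bigoh((mn)^{2/3}+m+n)$, this bookkeeping is absorbed inside the claimed time bound, and after the recursion terminates each bucket is exactly the set of points of $S$ contained in $l$.

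I expect the main obstacle to be the boundary bookkeeping: a point can lie simultaneously on a line of $L$ and on a cell boundary, and naive recursion either misses such an incidence (if the point is assigned to the ``wrong'' cell) or reports it several times (once per incident cell). I would fix this by using a canonical tie-breaking rule that assigns every point to a unique cell (e.g., lexicographic order on cell identifiers), and by processing boundary-lying points exactly once against the conflict list of each arrangement vertex or edge on which they sit. Amortizing this extra work against the number of cells and the conflict-list sizes keeps the added cost inside the $(mn)^{2/3}\,2^{\bigoh(\log^*(n+m))}$ term, which is the only place in the analysis where a sloppy implementation could realistically damage the advertised running time.
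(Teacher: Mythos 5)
This theorem is imported by the paper with no proof of its own: the authors simply invoke Theorem~3.1 of the cited incidence-counting work (Matou\v{s}ek-style hierarchical cuttings for Hopcroft's problem) and note that a ``slight modification'' yields the reported incidences, the key point being that the output size is bounded by $\bigoh((mn)^{2/3}+m+n)$ via Theorem~\ref{IncidenceBound}. Your proposal reconstructs exactly that algorithm and identifies exactly that modification, so it is correct and takes essentially the same route as the paper.
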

Let $P$ be a subset of $S$, and let $x \in \Nat$. We say that a line $l$ is \emph{$x$-rich} for~$P$ if $l$ covers at least $x$  points from $P$; when $P$ is clear from the context, we will simply say that $l$ is $x$-rich.

We will use the following result, which is implied from a more general theorem in~\cite{chazelle}, to answer point-line incidency queries:

\begin{theorem}[\cite{chazelle}] \label{point-location}
Given a set $L$ of $m$ lines in the plane, in $O(m^2)$ (deterministic) time and space we can preprocess $L$ so that, given any point $p$, we can determine if $p$ is covered by (a line in) $L$ 
in $O(\log m)$ time. 
\end{theorem}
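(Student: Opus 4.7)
The plan is to realize the desired data structure by computing the full arrangement of $L$, triangulating it, and then layering Kirkpatrick's hierarchical point-location structure on top. I would proceed in three phases.

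First, I would compute the arrangement $\mathcal{A}(L)$ of the $m$ lines. It is classical that $\mathcal{A}(L)$ has combinatorial complexity $\Theta(m^2)$ — at most $\binom{m}{2}$ vertices together with $O(m^2)$ edges and faces — and that it can be built in deterministic $O(m^2)$ time and space as a doubly connected edge list, either by the incremental sweep of Edelsbrunner--O'Rourke--Seidel or by the topological sweep of Edelsbrunner--Guibas. While building $\mathcal{A}(L)$, for every edge I would store a pointer to the line of $L$ that supports it.

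Second, I would triangulate $\mathcal{A}(L)$ into a planar subdivision of the same asymptotic size. Every bounded face of a line arrangement is convex and every unbounded face is a convex cone (after clipping to a sufficiently large bounding box), so each face of size $s$ can be fan-triangulated in $O(s)$ time; summed over all faces this totals $O(m^2)$. I would mark each triangle edge as either \emph{original} (lying on some line of $L$) or \emph{Steiner} (introduced by the triangulation). Third, I would apply Kirkpatrick's hierarchy to this triangulated subdivision of complexity $N = O(m^2)$, which gives an $O(N)$-space structure answering point-location queries in $O(\log N) = O(\log m)$ time after $O(N) = O(m^2)$ preprocessing. A query point $p$ is first located in a unique triangle $T$; then $p$ is covered by a line of $L$ iff $p$ lies on an \emph{original} edge of $T$ or on a vertex of $T$. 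These incidence tests reduce to evaluating the signs of at most three line equations and run in constant time, preserving the $O(\log m)$ query bound.

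The main obstacle I expect is keeping the preprocessing at $O(m^2)$ rather than $O(m^2 \log m)$. Generic planar point-location structures such as Sarnak--Tarjan persistent search trees incur an extra $\log N$ factor, which would be fatal here; that is precisely why I would choose Kirkpatrick's hierarchy, whose construction is linear in the size of an already triangulated input. The subtle ingredient is therefore the $O(m^2)$ triangulation step: it is crucial to avoid invoking a generic polygon-triangulation routine (which could cost $\Theta(N \log N)$ in the worst case) and instead exploit the convexity of every face of a line arrangement so that a simple fan triangulation suffices. Once that is in place, the combination of arrangement construction, linear-time triangulation, and Kirkpatrick's hierarchy yields the claimed $O(m^2)$ time and space preprocessing with $O(\log m)$ query time.
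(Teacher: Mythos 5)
Your construction is essentially correct, but it necessarily takes a different route from the paper, because the paper does not prove this statement at all: it imports it as a black box, noting only that it is ``implied from a more general theorem in'' Chazelle's work. What you supply instead is a self-contained, elementary derivation from three classical ingredients --- the $O(m^2)$ incremental (or topological-sweep) construction of the arrangement $\mathcal{A}(L)$, a linear-time fan triangulation that exploits the convexity of every face of a line arrangement, and Kirkpatrick's hierarchy with its linear-time construction from a triangulated input. You correctly identify the one place where a naive combination would blow the budget (an $O(N\log N)$ generic point-location or triangulation step applied to $N=\Theta(m^2)$ features), and your fix --- convex-face fans plus Kirkpatrick rather than, say, Sarnak--Tarjan persistence --- is the right one. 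The trade-off between the two approaches is the usual one: the citation is shorter and inherits whatever generality Chazelle's theorem offers (e.g., higher dimensions, more general query semantics), while your argument is transparent, verifiable, and makes the $O(m^2)$ preprocessing bound visibly tight against the $\Theta(m^2)$ complexity of the arrangement itself.

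Two small points deserve care if you were to write this up fully. First, the clipping to a bounding box leaves query points \emph{outside} the box unhandled, and such a point can still lie on a line of $L$ since lines are unbounded; you need either a symbolic bounding triangle ``at infinity,'' a projective compactification, or a separate $O(\log m)$ search among the $2m$ pairwise non-crossing rays that the lines induce outside a disk containing all $\binom{m}{2}$ intersection points. Second, your final membership test should be stated slightly more carefully: a vertex of the returned triangle may be a bounding-box corner (not on any line of $L$), and a query point coinciding with an arrangement vertex may be incident only to Steiner edges of the particular triangle returned; both issues disappear if you store, per vertex and per edge, a flag (and a supporting line pointer) recording whether it lies on a line of $L$, which keeps the query at $O(1)$ extra work. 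Neither issue affects the asymptotic bounds.
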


The following extends Theorem~2 in~\cite{endre1983}, which applies only when the constant $c <1$:

\begin{theorem}\ifshort {\rm ($\spadesuit$)}\fi
\label{RichlineBound1}
Let $S$ be a set of $n$ points, let $c>0$ be a constant, and let $k$ be an integer such that $2\le k \le  c\sqrt{n}$. Let $L$ be the set of 
$k$-rich lines for~$S$. 
Then $|L| <\max\{\frac{40n^2}{k^3},\frac{40c^2n^2}{k^3}\}$. 
\end{theorem}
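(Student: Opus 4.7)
The plan is to invoke the Szemer\'edi--Trotter-style incidence bound (Theorem~\ref{IncidenceBound}) and combine it with the $k$-richness of every line in $L$. Let $m = |L|$. Since each line in $L$ covers at least $k$ points of $S$, we obtain the lower bound $I(L,S) \geq km$. Pairing this with Theorem~\ref{IncidenceBound} yields the master inequality
$$km \;\leq\; \tfrac{5}{2}(mn)^{2/3} + m + n, \quad \text{i.e.,} \quad (k-1)\,m \;\leq\; \tfrac{5}{2}(mn)^{2/3} + n.$$

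The proof then proceeds by a case analysis on which of the two summands on the right dominates. In the case $\tfrac{5}{2}(mn)^{2/3} \geq n$, the $n$ term can be absorbed to give $(k-1)m \leq 5(mn)^{2/3}$; isolating $m$ and cubing yields $m \leq C_1\, n^2/(k-1)^3$ for a universal constant $C_1$ coming from the Szemer\'edi--Trotter constant $\tfrac{5}{2}$. Because $k \geq 2$, replacing $(k-1)^3$ by a fixed fraction of $k^3$ converts this into a bound of the form $m \leq C_1'\, n^2/k^3$ that is independent of $c$. In the complementary case $\tfrac{5}{2}(mn)^{2/3} < n$, the right-hand side is dominated by $n$, so $(k-1)m \leq 2n$ and hence $m \leq 2n/(k-1)$. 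Here I would invoke the hypothesis $k \leq c\sqrt{n}$, or equivalently $k^2 \leq c^2 n$, to upgrade the linear-in-$n/k$ bound into the desired cubic form: $2n/(k-1) \leq O(n \cdot k^2/k^3) = O(c^2 n^2/k^3)$. Taking the maximum over the two branches produces a bound of the form $\max\{C\,n^2/k^3,\, C'\,c^2 n^2/k^3\}$ exactly as claimed.

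The main obstacle is the careful bookkeeping of constants needed to certify the explicit numerical bound of $40$, rather than a generic $\bigoh(\cdot)$. This is delicate at the two extremes of the allowed range of $k$: near $k=2$, the slack in the reduction $k - 1 \geq k/2$ is tight, and near $k = c\sqrt{n}$, the conversion $n/k \leq c^2 n^2/k^3$ is tight. The essential new content compared to Theorem~2 of~\cite{endre1983}, which requires $c < 1$, lies precisely in the second branch of the maximum: the factor $c^2$ is what permits the argument to tolerate $k$ as large as $c\sqrt{n}$ for arbitrary constants $c > 0$, since for such $c$ the linear-in-$n/k$ regime can actually dominate and must be bounded separately.
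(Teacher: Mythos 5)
Your overall strategy is the same as the paper's: lower-bound the incidences by $km$ via $k$-richness, upper-bound them by Theorem~\ref{IncidenceBound}, and use the hypothesis $k\le c\sqrt{n}$ (i.e.\ $n\ge k^2/c^2$) to convert the linear-in-$n$ contribution into the $c^2n^2/k^3$ branch of the maximum. The gap is in the branch where the Szemer\'edi--Trotter term dominates: the absorption you propose provably cannot certify the constant $40$. From $(k-1)m\le 5(mn)^{2/3}$ you get $m\le 125\,n^2/(k-1)^3$, and $125\,n^2/(k-1)^3 > 40\,n^2/k^3$ for every $k\ge 2$ (indeed $125k^3>40(k-1)^3$ always), so even before the lossy step $(k-1)^3\ge \mathrm{const}\cdot k^3$ you are above the claimed bound; after it you land near $10^3\,n^2/k^3$. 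This is not bookkeeping that can be tightened within your case split --- splitting on which right-hand summand dominates and doubling the winner inherently costs a factor that pushes the constant past $40$. This matters here because the paper uses the numerical value $40$ downstream (e.g.\ the thresholds $2500n^2/\lambda^3$ in Lemma~\ref{l1-bound2} and $160k$ in the kernelization analysis).

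The paper avoids this by arguing by contradiction \emph{at the threshold}: assume $m\ge x$ with $x=40\max\{1,c^2\}n^2/k^3$, and compare $f(x)=kx-n$ with $g(x)=\frac52(xn)^{2/3}+x+n$ directly. The point is that at this specific $x$ the Szemer\'edi--Trotter term equals $\frac52\cdot 40^{2/3}\,c^{4/3}n^2/k^2<30\max\{1,c^2\}n^2/k^2$ (using $c^{4/3}\le\max\{1,c^2\}$), so it eats only $30/40$ of the main term $kx=40\max\{1,c^2\}n^2/k^2$, leaving a surplus $10\max\{1,c^2\}n^2/k^2$ that absorbs the remaining $x+2n$ precisely when $k\ge 5$ and $k\le c\sqrt n$; its case split is on $c\ge 1$ versus $c<1$ (to decide which branch of the max is in force), not on which summand of the incidence bound dominates. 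You are also missing the base cases $k\in\{2,3,4\}$, which the paper dispatches with the trivial bound $|L|\le\binom n2$ because the surplus argument needs $1-\frac4k\ge\frac15$. If you restructure your proof this way, your observation about the role of $c^2$ in the $n$-dominated regime carries over unchanged.
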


\iflong
\begin{proof}
Since the total number of lines, and hence the number of lines containing at least $k$ points, is at most 
${ n\choose 2}$, the statement of the theorem holds for $k=2, 3, 4$ obviously. In the following, we prove that the statement holds for 
$k\ge 5$. 

Let $x = \max \{ \frac{40n^2}{k^3}, \frac{40c^2n^2}{k^3}\}$. We proceed by contradiction, and assume that $|L| \ge x$.
 Then $L$ contains a subset $L'$ of $m=\lfloor {x} \rfloor$ lines, each containing at least $k$ points in $P$. 
Thus, $I(L', P)\le \frac{5}{2}(mn)^{2/3}+m+n$ by Theorem \ref{IncidenceBound}. Since $m=\lfloor{x} \rfloor$, 
$I(L', P)\le \frac{5}{2}(xn)^{2/3}+x+n$. Let $g(x)=\frac{5}{2}(xn)^{2/3}+x+n$. Since each line in $L'$ 
covers at least $k$ points, we have $I(L',P) \ge km>k(x-1)\ge kx-n$. Let $f(x)=kx-n$. 
Then $f(x)-g(x)=kx-n-(\frac{5}{2}(xn)^{2/3}+x+n)$ and $g(x)\ge I(L',P)>f(x)$. We will derive a contradiction by showing below that $f(x) \geq g(x)$.
We distinguish two cases: \\

{\bf Case 1:} If $c\ge 1$, then $x=\frac{40c^2n^2}{k^3}$ and we have:

\begin{alignat}{2}
    f(x)-g(x) &= kx-n-(\frac{5}{2}(xn)^{2/3}+x+n) \nonumber\\
                  &= \frac{40c^2n^2}{k^2}-n-\frac{5}{2}(\frac{40c^2n^3}{k^3})^{2/3}-\frac{40c^2n^2}{k^3}-n \label{inequ12} \\
                  &> \frac{40c^2n^2}{k^2}-2n-\frac{30c^2n^2}{k^2}-\frac{40c^2n^2}{k^3} \label{inequ13}\\
                  &= \frac{10c^2n^2}{k^2}-2n-\frac{40c^2n^2}{k^3} \nonumber \\
                  &= \frac{10c^2n^2}{k^2}(1-\frac{4}{k})-2n \label{inequ15} \\
                  &\ge \frac{10c^2n^2}{k^2}(1-\frac{4}{5})-2n \label{inequ16}\\
                  &\ge \frac{2c^2n^2}{(c\sqrt{n})^2}-2n \label{inequ17} \\
                  &= 0. \nonumber
\end{alignat}

Inequality (\ref{inequ13}) is obtained from (\ref{inequ12}) because $c\ge 1$.
Inequality (\ref{inequ16}) is obtained from (\ref{inequ15}) because $k\ge 5$ by assumption. 
Inequality (\ref{inequ17}) is derived from (\ref{inequ16}) because $k\le c\sqrt{n}$. 
This contradicts the fact that $f(x) < g(x)$. 

{\bf Case 2:} If $c<1$, then $x=\frac{40n^2}{k^3}$.  Similarly, we have:
\begin{eqnarray}
    f(x)-g(x) &=& kx-n-(\frac{5}{2}(xn)^{2/3}+x+n) \nonumber\\
                  &=& \frac{40n^2}{k^2}-n-\frac{5}{2}(\frac{40n^3}{k^3})^{2/3}-\frac{40n^2}{k^3}-n \nonumber\\
                  &>& \frac{40n^2}{k^2}-2n-\frac{30n^2}{k^2}-\frac{40n^2}{k^3} \nonumber \\
                  &=& \frac{10n^2}{k^2}-2n-\frac{40n^2}{k^3} \nonumber \\
                  &=& \frac{10n^2}{k^2}(1-\frac{4}{k})-2n 
                  \label{inequ113}\\
                  &\ge& \frac{10n^2}{k^2}(1-\frac{4}{5})-2n \label{inequ114} \\
                  &\ge & \frac{2n^2}{(\sqrt{n})^2}-2n 
                  \label{inequ115}\\
                  &=& 0.  \nonumber
\end{eqnarray}
Inequality (\ref{inequ114}) is obtained from (\ref{inequ113}) since $k\ge5$. Inequality (\ref{inequ115}) is obtained from (\ref{inequ114}) since 
$k\le \sqrt{n}$.  This contradicts the fact that $f(x) < g(x)$, and proves the theorem.
\end{proof}

\fi

{\bf Parameterized Complexity.}
A {\it parameterized problem} $Q$ is a subset of $\Omega^* \times
\mathbb{N}$, where $\Omega$ is a fixed alphabet. Each instance of $Q$ is a pair $(I, \kappa)$, where $\kappa \in \Nat$ is called the {\it
parameter}. A parameterized problem $Q$ is
{\it fixed-parameter tractable} (\FPT), if there is an
algorithm, called an {\em \FPT-algorithm},  that decides whether an input $(I, \kappa)$
is a member of $Q$ in time $f(\kappa) \cdot |I|^{\bigoh(1)}$, where $f$ is a computable function and $|I|$ is the input instance size.  The class \FPT{} denotes the class of all fixed-parameter
tractable parameterized problems. A parameterized problem is {\em kernelizable}
if there exists a polynomial-time reduction that maps an instance $(I, \kappa)$ of
the problem to another instance $(I', \kappa')$ such that (1) $|I'| \leq f(\kappa)$ and $\kappa' \leq f(\kappa)$, where $f$ is a computable function, and (2) $(I,\kappa)$ is a yes-instance
of the problem if and only if $(I',\kappa')$ is. The instance
$(I',\kappa')$ is called the {\em kernel} of~$I$.

\section{A Randomized Algorithm for {\sc Rich Lines}}
\label{sec:exactfitting}
In this section, we present a randomized algorithm for {\sc Rich Lines} that achieves a better running time than Guibas et al.'s algorithm~\cite{Guibas1996} for a wide range of the parameter $\lambda$ (in the problem definition). \iflong We will show in Section~\ref{sec:lowerbounds}\fi ~\ifshort We show ($\spadesuit$) \fi that for $\lambda =\Omega(\sqrt{n \log n})$, the upper bound of $\bigoh(n \log(\frac{n}{\lambda}))$ on the running time of our randomized algorithm for {\sc Rich Lines} matches the lower bound on its time complexity that we derive in the algebraic computation trees model. 

\ifshort
We first present an intuitive low-rigor explanation of the randomized algorithm and the techniques entailed. The crux of the technical results in this section lies in Lemma~\ref{l1-bound2}. This lemma shows that, by sampling a smaller subset $S'$ of $S$ whose size depends on $\lambda$, w.h.p.~we can reduce the problem of computing the set of $\lambda$-rich lines for~$S$ to that of computing the set of $\lambda'$-rich lines for~$S'$, where $\lambda' < \lambda$. 

The algorithm exploits the above technical results as follows. Given an instance $(S, \lambda)$ of {\sc Rich Lines}, the algorithm samples a subset $S' \subseteq S$ whose size depends on $\lambda$.  Depending on the value of $\lambda$, the algorithm defines a threshold value $\lambda'$, and computes the set $L'$ of $\lambda'$-rich lines for~$S'$. As we show in this section, w.h.p.~$L'$ contains all the $\lambda$-rich lines for~$S$, and hence, by sifting through the lines in $L'$, the algorithm computes the solution to $(S, \lambda)$. 
\fi

\iflong
We first present an intuitive low-rigor explanation of the randomized algorithm and the techniques entailed, which may serve as a roadmap for navigating through this section.

Lemma~\ref{RichlineBound2} is an incidence structural lemma that will be used---together with Theorem~\ref{RichlineBound1} in Section~\ref{sec:prelim}---to upper bound the number of $\lambda$-rich lines in an instance $(S, \lambda)$ of {\sc Rich Lines}.  
Lemma \ref{RichlineBound2} and Theorem \ref{RichlineBound1} are employed for proving the key Lemma~\ref{l1-bound2}. 

The crux of the technical results in this section lies in Lemma~\ref{l1-bound2}. Intuitively speaking, this lemma shows that, by sampling a smaller subset $S'$ of $S$ whose size depends on $\lambda$, then w.h.p.~we can reduce the problem of computing the set of $\lambda$-rich lines for~$S$ to that of computing the set of $\lambda'$-rich lines for~$S'$, where $\lambda' < \lambda$.

The algorithm exploits the above technical results as follows. Given an instance $(S, \lambda)$ of {\sc Rich Lines}, the algorithm samples a subset $S' \subseteq S$ whose size depends on $\lambda$. Depending on the value of $\lambda$, the algorithm defines a threshold value $\lambda'$, and computes the set $L'$ of $\lambda'$-rich lines for~$S'$. As we show in this section, w.h.p.~$L'$ contains all the $\lambda$-rich lines for~$S$, and hence, by sifting through the lines in $L'$, the algorithm computes the solution to the instance $(S, \lambda)$. 
\fi

\begin{lemma} \ifshort {\rm ($\spadesuit$)} \fi
\label{RichlineBound2}
Let $\lambda \ge 2\sqrt{n}$. The number of 
$\lambda$-rich lines for~$S$ is at most $\frac{2n}{\lambda}$. 
\end{lemma}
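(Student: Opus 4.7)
The plan is to bound the number of $\lambda$-rich lines by an elementary double-counting argument exploiting the fact that any two distinct lines meet in at most one point, rather than invoking Szemerédi–Trotter (Theorem~\ref{IncidenceBound}); in this ``large-$\lambda$'' regime the combinatorial bound is already sharp enough to give the constant~$2$. Let $L$ denote the set of $\lambda$-rich lines for~$S$ and set $m = |L|$; the goal is to show $m \le 2n/\lambda$.

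First I would count incidences in two ways. On one hand, since every line in $L$ is $\lambda$-rich,
\begin{equation*}
I(L,S) \;\ge\; m\lambda.
\end{equation*}
On the other hand, writing $d_p$ for the number of lines of $L$ through each point $p \in S$, we have $I(L,S) = \sum_{p \in S} d_p$. The key structural fact is that any two distinct lines share at most one point, so $\sum_{p \in S} \binom{d_p}{2} \le \binom{m}{2}$, which rearranges to
\begin{equation*}
\sum_{p \in S} d_p^{\,2} \;\le\; m(m-1) + I(L,S) \;\le\; m^{2} + I(L,S).
\end{equation*}

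Next I would apply Cauchy–Schwarz to the identity $I(L,S) = \sum_{p\in S} d_p$ to obtain $I(L,S)^{2} \le n \sum_{p} d_p^{\,2} \le n\bigl(m^{2} + I(L,S)\bigr)$. Viewing this as a quadratic inequality in $I(L,S)$ and using $\sqrt{a+b} \le \sqrt{a} + \sqrt{b}$ for $a,b \ge 0$ yields the clean estimate
\begin{equation*}
I(L,S) \;\le\; \frac{n + \sqrt{n^{2} + 4nm^{2}}}{2} \;\le\; n + m\sqrt{n}.
\end{equation*}
Combining with the lower bound $I(L,S) \ge m\lambda$ gives $m(\lambda - \sqrt{n}) \le n$.

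To finish, I would use the hypothesis $\lambda \ge 2\sqrt{n}$, which implies $\lambda - \sqrt{n} \ge \sqrt{n} \ge \lambda/2$, so $m \cdot \lambda/2 \le n$ and hence $m \le 2n/\lambda$, as required. I do not anticipate a serious obstacle here: the only mildly delicate step is verifying that the slightly lossy bound $\sqrt{n^{2}+4nm^{2}} \le n + 2m\sqrt{n}$ still yields the stated constant after the hypothesis $\lambda \ge 2\sqrt{n}$ is used to absorb the $\sqrt{n}$ correction, which works because $2\sqrt{n}$ is exactly the threshold at which $\lambda - \sqrt{n}$ can be replaced by $\lambda/2$.
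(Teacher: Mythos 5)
Your argument is correct, and it takes a genuinely different route from the paper's. The paper orders the $\lambda$-rich lines $l_1,\dots,l_m$ and observes that $l_i$ covers at least $z_i-(i-1)$ points not on $l_1,\dots,l_{i-1}$ (since two lines meet in at most one point); summing over the first $\lceil 2n/\lambda\rceil$ lines and using $z_i\ge\lambda\ge 2\sqrt{n}$ forces more than $n$ covered points, a contradiction. You instead derive the elementary incidence bound $I(L,S)\le n+m\sqrt{n}$ via the pair-counting inequality $\sum_p\binom{d_p}{2}\le\binom{m}{2}$ and Cauchy--Schwarz, then combine it with $I(L,S)\ge m\lambda$. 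Both arguments rest on the same geometric fact (two distinct lines share at most one point) and both become tight precisely at the threshold $\lambda=2\sqrt{n}$; the paper's version is slightly more bare-hands (no Cauchy--Schwarz, no quadratic formula), while yours isolates a reusable incidence inequality that also explains \emph{why} $2\sqrt{n}$ is the natural cutoff. One cosmetic slip: in the final step you write $\lambda-\sqrt{n}\ge\sqrt{n}\ge\lambda/2$, but the second inequality reverses whenever $\lambda>2\sqrt{n}$. What you actually need, and what your closing sentence correctly identifies, is $\sqrt{n}\le\lambda/2$, which gives $\lambda-\sqrt{n}\ge\lambda/2$ directly; with that rewording the proof is complete.
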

\iflong
\begin{proof}
Let $L=\{l_1, l_2, ..., l_m\}$ be the set of $\lambda$-rich lines for~$P$. 
Denote by $z_i$, for $i=1,2,\ldots, m$, the number of points covered by $l_i$. The set $L$ covers at least $\sum_{i=1}^m z'_i$ points, where $z'_i=\max\{z_i-i+1, 0\}$. 
This is true since $l_1$ covers at least $z_1$ points, $l_2$ covers at least $z_2-1$ new points
(excluding at most 1 point on $l_1$), and in general, $l_i$ covers at least $z_i-i+1$ new points
(excluding at most $i-1$ points covered by $\{l_1, ..., l_{i-1}\}$) if $i\le z_i$ or 0 otherwise. 
Suppose, to get a contradiction, that $m> \frac{2n}{\lambda}$, and consider the first $m'=\lceil {\frac{2n}{\lambda}} \rceil$ 
lines. Then $\sum_{i=1}^m z'_i \ge \sum_{i=1}^{m'}z'_i =\sum_{i=1}^{m'} (z_i-i+1)$ since 
$i\le m' < z_i$. Hence, $\sum_{i=1}^{m'}z'_i\ge \lambda \cdot m' - \frac{(m'-1)m'}{2}$
$\ge 2n - \frac{(\frac{2n}{\lambda}+1)\frac{2n}{\lambda}}{2} = 2n-\frac{n}{\lambda}-\frac{2n^2}{\lambda^2}$. Since $\lambda \ge 2\sqrt{n}$, $\sum_{i=1}^{m'}z'_i \ge 2n -\frac{\sqrt{n}}{2} - \frac{n}{2} > n$ (assuming w.l.o.g.~that $n > 1$), which is a contradiction. 
Therefore, we have $m \le \frac{2n}{\lambda}$. 
\end{proof}
\fi

Throughout this section, $S$ denotes a set of $n \geq 3$ points. Let $S'(m)$ be a set formed by sampling without replacement $m\le n$ points from $S$ uniformly and independently at random.

\begin{lemma}\label{l1-bound2}  
Let $\lambda$ be an integer satisfying $140\ln^{3/2} n \le \lambda \le  n$. Let $S'(m)$ be as defined above where $m=\lceil{\frac{140 n\ln n}{\lambda}} \rceil$. Let $L_1$ be the set of $\lambda$-rich lines for~$S$, and let $L_3$ be the set of $(98\ln n)$-rich lines for~$S'(m)$. Then, with probability at least 
$1-\frac{2}{n^2}$, we have: 
$(1)$ $L_1 \subseteq L_3$, and $(2)$
if $\lambda \ge 5\sqrt{n}$, 
$|L_3| \le \frac{5n}{\lambda}$; 
if $\lambda <  5\sqrt{n}$, $|L_3|<\frac{2500n^2}{\lambda^3}$. 
\end{lemma}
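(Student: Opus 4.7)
The plan is to introduce an intermediate threshold $\alpha\lambda$ (with $\alpha\in(0,1)$ a small constant to be tuned later, say $\alpha=0.45$) and show that, with probability at least $1-2/n^2$, both of the following events hold: (E1) every line $l$ with $z_l := |l\cap S|\ge\lambda$ has $X_l := |l\cap S'(m)|\ge 98\ln n$; and (E2) every line $l$ induced by $S$ with $z_l<\alpha\lambda$ has $X_l<98\ln n$. Event (E1) immediately yields $L_1\subseteq L_3$, establishing item~(1). Event (E2) forces every line of $L_3$ to cover at least $\alpha\lambda$ points of $S$, so item~(2) reduces to applying the known upper bounds (Lemma~\ref{RichlineBound2} and Theorem~\ref{RichlineBound1}) on the number of $\alpha\lambda$-rich lines for $S$.

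To establish (E1), I would fix $l\in L_1$. Since $X_l$ counts the ``marked'' points (the $z_l$ points on $l$) in a uniform sample of size $m$ drawn without replacement from $S$, Lemma~\ref{new-chernoff} applies. We have $\mu:=E[X_l] = z_l m/n \ge \lambda m/n \ge 140\ln n$ by the choice of $m$. Taking $\mu_1 = 140\ln n\le\mu$ and $\delta = 1-98/140 = 3/10$ in part~(B) of the lemma yields $\Pr[X_l\le 98\ln n] \le (e^{-\delta}/(1-\delta)^{1-\delta})^{140\ln n} = O(n^{-7})$, and a union bound over the at most $\binom{n}{2}$ lines spanned by pairs of points of $S$ bounds $\Pr[\neg\text{(E1)}]$ by $n^{-5}$. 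For (E2), fix any line $l$ induced by $S$ with $z_l<\alpha\lambda$. Then $E[X_l] < \alpha\lambda m/n \le 140\alpha\ln n+\alpha \le 141\alpha\ln n$ for all sufficiently large $n$. Setting $\mu_2 = 141\alpha\ln n$ and $1+\delta = 98/(141\alpha)$ in part~(A) of Lemma~\ref{new-chernoff} yields $\Pr[X_l\ge 98\ln n] = O(n^{-8})$ for $\alpha=0.45$ (the exponent evaluates numerically to roughly $-8.2\ln n$), and a second union bound bounds $\Pr[\neg\text{(E2)}]$ by $n^{-6}$. Thus $\Pr[\text{(E1)}\cap\text{(E2)}]\ge 1-2/n^2$.

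Conditioned on (E1) and (E2), item~(2) follows by casing on $\lambda$. In Case~1 ($\lambda\ge 5\sqrt{n}$), we have $\alpha\lambda\ge 0.45\cdot 5\sqrt{n} = 2.25\sqrt{n}\ge 2\sqrt{n}$, so Lemma~\ref{RichlineBound2} applied to $S$ with richness $\alpha\lambda$ bounds $|L_3|$ by $2n/(\alpha\lambda) = 2n/(0.45\lambda) < 5n/\lambda$. In Case~2 ($\lambda<5\sqrt{n}$), we have $\alpha\lambda<5\sqrt{n}$ while $\alpha\lambda\ge 0.45\cdot 140\ln^{3/2}n\ge 2$, so Theorem~\ref{RichlineBound1} applies with $c=2.25$, bounding $|L_3|$ by $40\cdot(2.25)^2 n^2/(\alpha\lambda)^3 = 40\cdot(2.25)^2/(0.45)^3\cdot n^2/\lambda^3 < 2500 n^2/\lambda^3$.

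The main obstacle is the simultaneous tuning of the constant $\alpha$: it must be small enough that the upper-tail Chernoff bound driving (E2) absorbs a $\binom{n}{2}$ union bound and still beats $1/n^2$, yet large enough that Lemma~\ref{RichlineBound2} (in Case~1) and Theorem~\ref{RichlineBound1} (in Case~2) deliver the prescribed numerical constants $5$ and $2500$, respectively. The choice $\alpha=0.45$ threads this needle comfortably; a nearby value such as $0.5$ would also work after a similar computation.
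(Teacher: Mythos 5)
Your proposal is correct and follows essentially the same route as the paper's proof: the paper also introduces an intermediate richness threshold (it uses $\tfrac{2\lambda}{5}$ where you use $0.45\lambda$), applies part (B) of Lemma~\ref{new-chernoff} with $\mu_1=140\ln n$ and $\delta=3/10$ plus a union bound to get $L_1\subseteq L_3$, applies part (A) plus a union bound to exclude from $L_3$ all lines below the intermediate threshold, and then invokes Lemma~\ref{RichlineBound2} (resp.\ Theorem~\ref{RichlineBound1}) at that threshold to obtain the stated bounds on $|L_3|$. The only differences are the choice of the tuning constant and the resulting numerical constants, both of which you verify correctly.
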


\begin{proof}
Let $L_2$ be the set of lines induced by $S$ containing less than $\frac{2\lambda}{5}$ points each.
Without loss of generality, suppose that $L_1=\{l_1, \ldots, l_d \}$ and $L_2=\{l_{d+1}, \ldots, l_{d'}\}$, and note that $d \le d' \le { n \choose 2}$.
For $i \in [d']$, let $x_i$ be the number of points in $S$ covered by $l_i$, and 
let $X'_i$ be the random variable, where $X'_i$ is the number of the points in $S'(m)$ on $l_i$.

For $i \in [d]$, we have $E[X'_i]=\frac{x_i}{n}\cdot m \ge 140\ln n$ since $x_i \ge \lambda$.
Applying part (B) of the Chernoff bounds in Lemma~\ref{new-chernoff} with $\mu_1=140\ln n$, we have 
$\Pr(X'_i \le (1-3/10)\cdot 140\ln n) \le \left ( \frac{e^{-3/10}}{(1-3/10)^{1-3/10}}  
\right)^{140\ln n} \le \frac{1}{n^4}$, where the last inequality can be easily verified by a simple analysis.
Let $E_i$, for $i\in[d]$, denote the event that $X'_i \le (1-3/10)\cdot 140\ln n$. Applying the union bound, we have $\Pr(\bigcap_{i=1}^{d} \overline{E_i}) = 1-\Pr( \bigcup_{i=1}^d E_i)  \ge 1-d \cdot \frac{1}{n^4} \ge 1-\frac{1}{n^2}$. 
Let ${\cal E}=\bigcap_{i=1}^{d} \overline{E_i}$. The probability that every line $l_i \in L_1$ contains at least $98\ln n$  points of $S'(m)$ is at least  $1-\frac{1}{n^2}$. That is to say, with probability at least $1-\frac{1}{n^2}$, we have
$L_1\subseteq L_3$.

For $i=d+1, \ldots, d'$, $E[X'_i]=\frac{x_i}{n}\cdot m<\frac{x_i}{n}(\frac{140n\ln n}{\lambda}+1) 
\le 56\ln n+\frac{2\lambda}{5n}<57\ln n$, since $x_i\le \frac{2\lambda}{5}$ and $140 \ln^{3/2} n \le \lambda \le n$. 
Applying part (A) of the Chernoff bounds in Lemma~\ref{new-chernoff} with $\mu_2=57\ln n$, we get 
$\Pr(X'_i \ge (1+\frac{13}{19})\cdot 57\ln n ) \le$ $\left( \frac{e^{13/19}}{(1+13/19)^{1+13/19}} 
\right)^{57\ln n} \le \frac{1}{n^4}$, where the last inequality can be easily verified by a simple analysis.  
Consequently, via the union bound, the probability that every line $l_i \in L_2$ contains less than $96\ln n$ sampled points  is at least $1-(d'-d) \cdot \frac{1}{n^4} \ge  1-\frac{1}{n^2}$. 
It follows that, with probability at least $1-\frac{1}{n^2}$, we have $L_3\cap L_2=\emptyset$.  

Altogether, with probability at least 
$1-\frac{2}{n^2}$, $L_1\subseteq L_3$ and $L_2\cap L_3=\emptyset$.
 Recall that each line in $L_3$ covers at least $\frac{2\lambda}{5}$ points of $S$.
 If $\frac{2\lambda}{5} \ge 2\sqrt{n}$, i.e., $\lambda \ge 5\sqrt{n}$, we have $|L_3|\le \frac{2n}{2\lambda/5}=\frac{5n}{\lambda}$ by Lemma \ref{RichlineBound2}. If $\frac{2\lambda}{5} < 2\sqrt{n}$, i.e. $\lambda < 5\sqrt{n}$, by Theorem~\ref{RichlineBound1} (with $c\le 2$), we have $|L_3|<\frac{40\cdot 2^2n^2}{(2\lambda/5)^3}=\frac{2500n^2}{\lambda^3}$.
It follows that, with probability at least $1-\frac{2}{n^2}$, parts (1) and (2) of the lemma hold.
\end{proof}

\begin{algorithm}
     \textbf{Input:} a set of points $S$ and $\lambda \in \Nat$.    \\
     \textbf{Output:} The set $L$ of $\lambda$-rich lines for $S$.
      \begin{algorithmic}[1]
             \State {\bf if} {$\lambda <\ln n$ } {\bf then} apply Guibas et al.'s algorithm~\cite{Guibas1996} to compute $L$ and return $L$; 
                  \label{fitting-1}
             \State sample $x=\lceil{\frac{10n^2\ln n}{\lambda^2}}\rceil$ pairs of points $(p_1,q_1), \ldots, (p_{x}, q_{x})$
              u.a.r. from ${S \choose 2}$;   \label{fitting1}
              \State let $l_i$ be the line formed by $(p_i,q_i)$, for $i \in [x]$; let $Q_1$ be the multi-set 
              $\{l_1, l_2, \ldots, l_{x}\}$, 
              and let $Q_2$ be the set of distinct lines in $Q_1$;
              \label{fitting2}
              \State {\bf if} {$\lambda \le 140\ln^{3/2} n$ } {\bf then}  let $L=\{l\in Q_2 \mid I(l, S)\ge \lambda\}$; return $L$;  \label{fitting5}
              \State let
               $m=\lceil{\frac{140 n\ln n}{\lambda}}\rceil$, $y=98\ln{n}$;  \label{fitting6}
              \State sample $m$ points u.a.r.~from $S$ without replacement to obtain $S'(m)$;
                \label{fitting7}
             \State {\bf if} {$\lambda < 5\sqrt{n}$ } {\bf then}
               \State ~~~~ let $z=\frac{2500n^2}{\lambda^3}$; 
             \State {\bf else} let $z=\frac{5n}{\lambda}$;
             \State let $L'=\{l\in Q_2 \mid I(l, S'(m))\ge y\}$;  \label{fitting8}
             \State {\bf if} {$|L'| \le z$} {\bf then} let $L=\{l \in L'| I(l, S) \ge \lambda \}$; return $L$;    \label{fitting9}
             \State {\bf else} return $\emptyset$; \label{fitting11}
      \end{algorithmic}
      \caption{{\bf : Alg-RichLines}$(S, \lambda)$--A randomized algorithm for computing all $\lambda$-rich lines} \label{Fitting}
\end{algorithm}

Refer to {\rm\bf Alg-RichLines} for the terminologies used in the subsequent discussions.

\begin{lemma} \label{Alg-Fitting1}
Let $\lambda \in \Nat$. Let $L_1$ be the set of $\lambda$-rich for~$S$. Then, with probability at least $1-\frac{3}{n^2}$, {\rm\bf Alg-RichLines}$(S, \lambda)$ returns a set $L=L_1$. 
\end{lemma}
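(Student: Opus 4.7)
The plan is a case analysis on $\lambda$, tracking the two independent sampling steps of the algorithm. When $\lambda < \ln n$ (line~\ref{fitting-1}), Guibas et al.'s deterministic algorithm~\cite{Guibas1996} is invoked and returns $L_1$ exactly, so there is nothing to prove. In the remaining two cases, I would first isolate the event
$\mathcal{A} := \{L_1 \subseteq Q_2\}$, driven by the pair sampling in lines~\ref{fitting1}--\ref{fitting2}, and then, for the largest range of $\lambda$, combine $\mathcal{A}$ with the event provided by Lemma~\ref{l1-bound2}.

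To establish $\Pr(\mathcal{A})$, I would argue line-by-line. A fixed line $l\in L_1$ contains at least $\lambda$ points, so the probability that a single uniform pair from $\binom{S}{2}$ determines $l$ is at least $\binom{\lambda}{2}/\binom{n}{2} \ge \lambda^2/(2n^2)$ (using $\lambda \ge 2$). Since $x \ge 10n^2\ln n/\lambda^2$ independent pairs are drawn, the probability that $l \notin Q_2$ is at most $\bigl(1-\lambda^2/(2n^2)\bigr)^{x} \le e^{-5\ln n} = n^{-5}$; a union bound over $|L_1|\le \binom{n}{2}$ then yields $\Pr(\overline{\mathcal{A}}) \le \frac{1}{2n^3}$.

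For the intermediate range $\ln n \le \lambda \le 140\ln^{3/2} n$, line~\ref{fitting5} returns $\{l\in Q_2 : I(l,S)\ge \lambda\} = Q_2\cap L_1$; on $\mathcal{A}$ this equals $L_1$, so the failure probability is at most $\Pr(\overline{\mathcal{A}})$, well below $3/n^2$. For $\lambda > 140\ln^{3/2} n$, Lemma~\ref{l1-bound2} supplies a second event $\mathcal{B} := \{L_1\subseteq L_3 \text{ and } |L_3|\le z\}$ with $\Pr(\overline{\mathcal{B}}) \le 2/n^2$---here one must verify that the case split for $z$ in lines~7--9 at the threshold $5\sqrt{n}$ matches exactly the two bounds on $|L_3|$ in Lemma~\ref{l1-bound2}. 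On $\mathcal{A}\cap\mathcal{B}$ we have $L' = Q_2\cap L_3$, so $L_1\subseteq L'$ and $|L'|\le |L_3| \le z$; the test on line~\ref{fitting9} passes, and the final exact filter returns $L'\cap L_1 = L_1$. A union bound then gives success probability at least $1 - \Pr(\overline{\mathcal{A}}) - \Pr(\overline{\mathcal{B}}) \ge 1 - 3/n^2$.

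The argument is essentially bookkeeping of error probabilities across the two sampling steps; the only genuinely subtle point is verifying that $z$ always dominates $|L_3|$ on $\mathcal{B}$, ensuring the algorithm never drops into the ``return $\emptyset$'' branch on line~\ref{fitting11} when both $\mathcal{A}$ and $\mathcal{B}$ hold.
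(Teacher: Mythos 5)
Your proposal is correct and follows essentially the same route as the paper's proof: the same pair-sampling bound $\binom{\lambda}{2}/\binom{n}{2}\ge \lambda^2/(2n^2)$ giving per-line failure probability $n^{-5}$, a union bound over at most $\binom{n}{2}$ rich lines to get $L_1\subseteq Q_2$ w.h.p., and in the large-$\lambda$ regime a combination with the event from Lemma~\ref{l1-bound2} via another union bound. The only (harmless) difference is bookkeeping: you bound $\Pr(\overline{\mathcal{A}})$ by $\frac{1}{2n^3}$ where the paper uses $\frac{1}{n^3}$.
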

\begin{proof}
If $\lambda <\ln n$ then $L=L_1$ with probability 1 by Step \ref{fitting-1}, as Guibas et al.'s algorithm~\cite{Guibas1996} computes $L_1$ deterministically.

Now consider the case that $\lambda \ge \ln n$. Let $l$ be an arbitrary line in $L_1$. 
Step \ref{fitting1} samples $x$ pairs of points that determine $x$ lines. In a single sampling, 
the probability $\rho$ that $l$ is sampled is  ${\lambda \choose 2}/{n \choose 2} \ge \frac{\lambda^2}{2n^2}$ 
because $l$ covers at least $\lambda$ points of $P$. Thus, $\Pr(l\notin Q_1)= (1-\rho)^{x}\le e^{-\rho x} \le \frac{1}{n^{5}}$.
Since $|L_1|<n^2$, applying the union bound, we get $Pr(L_1 \subseteq Q_1) \geq 1-\frac{1}{n^3}$. 
Hence, we have $\Pr(L_1 \subseteq Q_2) \ge 1-\frac{1}{n^3}$ since $Q_2$ is obtained from 
$Q_1$ by removing repeated lines. Let $L_3$ be the set of $y$-rich lines 
for $S'(m)$. If $\lambda < 140\ln^{3/2} n$, then since $L_1 \subseteq Q_2$ with probability at least $1-\frac{1}{n^3}$, the algorithm returns in Step \ref{fitting5} a set $L$ that, with probability at least $1-\frac{1}{n^3}$, is equal to $L_1$ .


Finally, if $140\ln^{3/2} n \le  \lambda$, then by Lemma \ref{l1-bound2}, $L_1\subseteq L_3$ and $|L_3|\le z$ with probability at least $1-\frac{2}{n^2}$. 
Since $L'=L_3 \cap Q_2$, $|L'|\le z$ holds with probability at least $1-\frac{2}{n^2}$. Since $\Pr(L_1\subseteq Q_2)\ge 1-\frac{1}{n^3}$, we have $L_1 \subseteq (L_3\cap Q_2)$ and $|L'|\le z$ with probability at least $1-\frac{3}{n^2}$. Thus, $\Pr(L_1 \subseteq L') \ge 1-\frac{3}{n^2}$. Therefore, the algorithm returns in Step \ref{fitting9} a set $L$ equal to $L_1$ with probability at least  $1-\frac{3}{n^2}$.
\end{proof}

\ifshort
The correctness of {\rm\bf Alg-RichLines}, stated in the following theorem, follows from Lemma~\ref{Alg-Fitting1}. Its run-time analysis is tedious but straightforward. \fi

\begin{theorem} \ifshort {\rm ($\spadesuit$)} \fi
\label{thm:exactfitting}
Let $S$ be a set of $n$ points and $\lambda \in \mathbb{N}$. With probability at least $1 -\frac{3}{n^2}$, {\rm\bf Alg-RichLines}$(S, \lambda)$ solves the {\sc Rich Lines} problem. Moreover, the running time of the algorithm is:
 
\begin{enumerate}
\item [(1)] $\bigoh(n^2)$ if $\lambda< \ln n$; and
\item [(2)]   $\bigoh(n\log \frac{n}{\lambda}+\frac{n^2\log n\log \frac{n}{\lambda}}{\lambda^2})$ otherwise.
\end{enumerate}
\end{theorem}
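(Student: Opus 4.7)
The correctness claim, that with probability at least $1-3/n^2$ the algorithm returns exactly the set of $\lambda$-rich lines for $S$, is already established by Lemma~\ref{Alg-Fitting1}, so what remains is the running-time analysis. I would split it into two cases matching the two clauses of the theorem.

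For part~(1), when $\lambda < \ln n$, Step~\ref{fitting-1} invokes Guibas et al.'s algorithm~\cite{Guibas1996}, whose running time is $\bigoh(\min\{\tfrac{n^2}{\lambda}\ln\tfrac{n}{\lambda},\,n^2\})$; since $\lambda<\ln n$ and the $n^2$ branch of the minimum dominates the budget, this is $\bigoh(n^2)$, as claimed.

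For part~(2), when $\lambda \ge \ln n$, I would walk through the steps in order and bound them in terms of $x=\lceil 10 n^2\ln n/\lambda^2\rceil$ and $m=\lceil 140 n\ln n/\lambda\rceil$. Steps~\ref{fitting1}--\ref{fitting2} cost $\bigoh(x)$ just to draw pairs and deduplicate the induced lines (e.g.\ by sorting them lexicographically by $(\text{slope},\text{intercept})$), contributing $\bigoh(x\log x)=\bigoh\!\bigl(\tfrac{n^2\log n\log(n/\lambda)}{\lambda^2}\bigr)$. Step~\ref{fitting5} and Step~\ref{fitting8} both reduce to invoking Theorem~\ref{hopcroft} on the $|Q_2|\le x$ lines against either $S$ or $S'(m)$; in the dominant subcase (Step~\ref{fitting8}), the cost is $\bigoh(m\log x+x\log m+(mx)^{2/3}2^{\bigoh(\log^\ast(n+m))})$, and I would show by plugging in the values of $m$ and $x$ that the $x\log m$ term dominates and yields $\bigoh\!\bigl(\tfrac{n^2\log n\log(n/\lambda)}{\lambda^2}\bigr)$. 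Sampling $S'(m)$ in Step~\ref{fitting7} costs $\bigoh(m)=\bigoh(\tfrac{n\log n}{\lambda})$, well within budget. Finally, Step~\ref{fitting9} computes $I(l,S)$ for at most $z$ lines via Theorem~\ref{hopcroft}, costing $\bigoh(n\log z+z\log n+(nz)^{2/3}2^{\bigoh(\log^\ast(n+z))})$; I would verify in both sub-ranges ($\lambda\ge 5\sqrt n$, where $z=5n/\lambda$, and $\ln n\le\lambda<5\sqrt n$, where $z=2500 n^2/\lambda^3$) that every term is absorbed into $\bigoh(n\log(n/\lambda))+\bigoh\!\bigl(\tfrac{n^2\log n\log(n/\lambda)}{\lambda^2}\bigr)$. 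The small-$\lambda$ branch requires checking that $z\log n$ is $\bigoh(n^2\log n/\lambda^2)$ (which follows from $\lambda\ge 1$) and that $(nz)^{2/3}=\bigoh(n^2/\lambda^2)$; the large-$\lambda$ branch requires checking that $(nz)^{2/3}=\bigoh(n)$, which follows from $\lambda\ge 5\sqrt n$.

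The main obstacle, I expect, is not any single estimate but the bookkeeping: one must ensure that across the three $\lambda$-ranges ($\ln n\le \lambda\le 140\ln^{3/2}n$, $140\ln^{3/2}n<\lambda<5\sqrt n$, and $\lambda\ge 5\sqrt n$), the Hopcroft bound of Theorem~\ref{hopcroft}, instantiated with the appropriate point set and line set, always collapses to the two claimed summands. The delicate point is that the $(mx)^{2/3}$ and $(nz)^{2/3}$ terms cross the $x\log m$ and $n\log z$ terms at different thresholds of $\lambda$, so the dominant term changes; but in every range the sum remains $\bigoh\!\bigl(n\log\tfrac{n}{\lambda}+\tfrac{n^2\log n\log(n/\lambda)}{\lambda^2}\bigr)$, which together with the correctness statement of Lemma~\ref{Alg-Fitting1} yields the theorem.
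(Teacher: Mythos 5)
Your proposal is correct and follows essentially the same route as the paper's proof: correctness is delegated to Lemma~\ref{Alg-Fitting1}, part~(1) is charged to Guibas et al.'s algorithm, and part~(2) is obtained by bounding Steps~\ref{fitting1}--\ref{fitting9} via Theorem~\ref{hopcroft} with the same quantities $x$, $m$, and $z$, split over the same ranges of $\lambda$. The only remaining work is the arithmetic you already flag (checking that the $(\cdot)^{2/3}2^{\bigoh(\log^{\ast}\cdot)}$ and logarithmic terms collapse into the two claimed summands in each range), which the paper carries out exactly as you outline.
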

 
\iflong
\begin{proof}
By Lemma~\ref{Alg-Fitting1}, with probability at least $1 -\frac{3}{n^2}$, {\bf Alg-RichLines}$(P, \lambda)$ correctly returns the set of $\lambda$-rich lines for~$P$. We discuss next the running time of the algorithm. \\

\noindent {\bf Case 1. $\lambda<\ln n$.} In this case the running time of the algorithm is that of Guibas et al.'s algorithm~\cite{Guibas1996}, which is $\bigoh(n^2)$. \\

It is easy to see that Step~\ref{fitting1} takes $\bigoh(x)=\bigoh(\frac{n^2\ln n}{\lambda^2})$ time.   Step~\ref{fitting2} can be implemented by sorting the slopes of the $x$ lines, which takes $\bigoh(x\ln{x})=O (\frac{n^2\ln n}{\lambda^2} \cdot (\ln \frac{n}{\lambda}+\ln\ln n))$ time. 
Step \ref{fitting7} takes time $\bigoh(m)=\bigoh(\frac{140n\ln n}{\lambda})=\bigoh(n)$ since $n \ge \lambda\ge 140\ln^{3/2} n$. 
Steps 5, 7, 8, 9 and \ref{fitting11} take constant time. 
Note that all the above running times (for Steps 2, 3, 5, \ref{fitting7}, 7, 8, 9, \ref{fitting11}) are dominated by the running time listed in item (2) of the theorem. \\


We discuss the running time of Step~\ref{fitting5} in Case 2 below, and that of Step~\ref{fitting8}  and Step~\ref{fitting9} in both Cases 3 and 4. Note that, to determine the set of rich lines for~$S$ in Steps \ref{fitting5}, \ref{fitting8} and \ref{fitting9}, we apply Theorem~\ref{hopcroft} to compute the number of points in $S$ (or $S'(m)$)
on each of the lines in question, thus determining the set of rich lines for $S$ (or $S'(m)$).

\noindent {\bf Case 2. $\ln n \le \lambda \le 140\ln^{3/2} n$.} In this case, Step \ref{fitting5} takes time $T_1=\bigoh(x\log n+n\log x+
(nx)^{2/3}2^{\bigoh(\log^* (x+n))})$ by Theorem \ref{hopcroft}. Substituting $x=\lceil{\frac{10n^2\ln n}{\lambda^2}}\rceil$, we obtain
$T_1=\bigoh(\frac{n^2\ln^2 n}{\lambda^2}+n\ln n+\frac{n^2\ln^{2/3}n}{\lambda^{4/3}}2^{\bigoh(\log^*n)})=\bigoh(\frac{n^2\ln^2 n}{\lambda^2})$, which is dominated by the running time listed in item (2) of the theorem.

We discuss the running time of Step \ref{fitting8} in both Case 3 and 4. 
Note that $|S'(m)| =m=\lceil{\frac{140n\ln n}{\lambda}}\rceil$ and $|Q_2| \le x=\bigoh(\frac{n^2\ln n}{\lambda^2})$. 
By Theorem \ref{hopcroft}, Step~\ref{fitting8} takes time: 
\begin{eqnarray*}
T_2 &=& \bigoh(m\log x+x\log m+(xm)^{2/3} 2^{\bigoh(\log^* (x+m))}) \label{NN20} \\
&=& \bigoh(\frac{n^2\ln n}{\lambda^2}(\ln (\frac{n}{\lambda})+\ln\ln n)+\frac{n^2}{\lambda^2}\ln^{4/3}(n)2^{\bigoh(\log^* n)}).
\label{NN21}
\end{eqnarray*}
If $\lambda \le n^{2/3}$, we have $\frac{n^2}{\lambda^2}\ln^{4/3}(n)2^{\bigoh(\log^* n)}=\bigoh(\frac{n^2\log n\log \frac{n}{\lambda}}{\lambda^2})$, and
if $n^{2/3}<\lambda\le n$, we have $\frac{n^2}{\lambda^2}\ln^{4/3}(n)2^{\bigoh(\log^* n)}=\bigoh(n\log \frac{n}{\lambda})$.
Altogether, $\frac{n^2}{\lambda^2}\ln^{4/3}(n)2^{\bigoh(\log^* n)}=\bigoh(n\log \frac{n}{\lambda}+\frac{n^2\log n\log \frac{n}{\lambda}}{\lambda^2})$. Therefore, 
$T_2=\bigoh(n\log \frac{n}{\lambda}+\frac{n^2\log n\log \frac{n}{\lambda}}{\lambda^2})$, which is dominated by the running 
time listed in item (2) of the theorem.

\noindent {\bf Case 3. $140\ln^{3/2} n < \lambda < 5\sqrt{n}$.} 

Step~\ref{fitting9} takes time $T_3=\bigoh(n\log |L'|+|L'|\log n+
(n|L'|)^{2/3}2^{\bigoh(\log^*(n+|L'|))})$ by Theorem \ref{hopcroft}. 
Since $|L'|\le z=\frac{2500n^2}{\lambda^3}$, we have 
$T_3=\bigoh(n\ln n+\frac{n^2\log n}{\lambda^3}+\frac{n^2}{\lambda^2}2^{\bigoh(\log^* n)})=\bigoh(n\log \frac{n}{\lambda}+\frac{n^2\log n\log \frac{n}{\lambda}}{\lambda^2})$.
Thus, the total running time in this case is $\bigoh(n\log \frac{n}{\lambda}+\frac{n^2\log n\log \frac{n}{\lambda}}{\lambda^2})$.  \\

\noindent {\bf Case 4. $\lambda \ge 5\sqrt{n}$}.

Step~\ref{fitting9} takes time 
$T_4=\bigoh(n\log |L'|+|L'|\log n+(n|L'|)^{2/3}2^{\bigoh(\log^*(n+|L'|))})$ by Theorem \ref{hopcroft}. 
Since $|L'|\le z =\bigoh(\frac{n}{\lambda})$, 
$T_4=\bigoh(n\log \frac{n}{\lambda}+\frac{n}{\lambda}\log n+\frac{n^{4/3}}{\lambda^{2/3}}2^{\bigoh(\log^*n)})=\bigoh(n\log \frac{n}{\lambda}+\frac{n^{4/3}}{\lambda^{2/3}}2^{\bigoh(\log^*n)})=\bigoh(n\log \frac{n}{\lambda})$. The last equality holds because $\lambda \ge 5\sqrt{n}$.
 Consequently, 
 the total running time in this case is $\bigoh(n\log \frac{n}{\lambda}+\frac{n^2\log n\log \frac{n}{\lambda}}{\lambda^2})$. 
\end{proof}

\fi
 
Guibas et al.'s algorithm~\cite{Guibas1996} solves the  
{\sc Rich Lines} and the {\sc Exact Fitting} problems in the plane in time $\bigoh(\min\{\frac{n^2}{\lambda}\log \frac{n}{\lambda}, n^2\})$. Theorem \ref{thm:exactfitting} is an improvement over
 Guibas et al.'s algorithm~\cite{Guibas1996} for both problems for all values of $\lambda \geq \ln n$, and for $\lambda < \ln n$ it obviously has a matching running time. In particular, for $\ln n \le \lambda \le 140\ln^{3/2} n$, 
 the improvement could be in the order of $\frac{1}{\sqrt{\log n}}$ (i.e., the running time of {\bf Alg-RichLines} 
 is a $\frac{1}{\sqrt{\log n}}$-fraction of that in~\cite{Guibas1996}); 
 for $140\ln^{3/2} n <  \lambda < 5\sqrt{n}$, the improvement could be in the order of $\frac{\log n }{\sqrt{ n}}$; and for $\lambda \ge 5\sqrt{n}$, the improvement could be in the order of $\sqrt{\frac{\log n}{n}}$.

\section{Kernelization Algorithms for {\sc Line Cover}}
\label{sec:linecover}
In this section, we present a randomized Monte Carlo kernelization algorithm for {\sc Line Cover} that employs {\bf Alg-RichLines} developed in the previous section. We also show how the tools developed in this section can be used to obtain a deterministic kernelization algorithm for {\sc Line Cover} that employs Guibas et al.'s algorithm~\cite{Guibas1996}. Both algorithms improve the running time of existing kernelization algorithms for {\sc Line Cover}. Moreover, \iflong we will show in Section~\ref{sec:lowerbounds} that \fi the running time of our randomized algorithm comes close to the lower bound that we derive on the time complexity of kernelization algorithms for {\sc Line Cover} in the algebraic computation trees model\ifshort ~($\spadesuit$)\fi. The majority of this section is dedicated to proving the following theorem:

\begin{theorem}
\label{thm:mainlc}
There is a Monte Carlo randomized algorithm, {\rm\bf Alg-Kernel}, that given an instance $(S, k)$ of {\sc Line Cover}, in time $\bigoh(n\log k+ k^2(\log^2 k)(\log\log k)^2)$,
 returns an instance $(S', k')$ such that $|S'|\le k^2$, and such that with probability at least $1-\frac{2}{k^3}$, $(S', k')$ is a kernel of $(S, k)$. More specifically: (1) if $(S, k)$ is a yes-instance of {\sc Line Cover}, then with probability at least $1-\frac{2}{k^3}$, $(S', k')$ is a yes-instance of {\sc Line Cover}; and (2) if $(S, k)$ is a no-instance of  {\sc Line Cover} then $(S', k')$ is a no-instance of {\sc Line Cover}. The space complexity of this algorithm is $\bigoh(k^2\log^2 k)$.
\end{theorem}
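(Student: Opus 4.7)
The plan is to process $S$ in batches $B_1, B_2, \ldots$ of size $m = \Theta(k^2)$, maintaining throughout two data structures: a set $C$ of lines that have each been exactly verified to cover at least $k+1$ points of $S$ (and so must lie in every $k$-line cover), and a residual set $R$ of previously-seen points not yet covered by any line in $C$. For each arriving batch $B_i$ the algorithm first filters out every point of $B_i$ lying on some line of $C$, using the point-location structure of Theorem \ref{point-location} preprocessed on $C$, then searches the current $R$ together with the filtered batch for additional saturated lines and appends each verified one to $C$. If $|C|$ ever exceeds $k$ the algorithm returns a trivial no-instance; otherwise, after the last batch is processed, the standard {\sc Line Cover} bounding argument guarantees $|R| \le k^2$, and the algorithm outputs $(R, k - |C|)$ as the kernel.

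The core of a single batch's processing is where the ``spectrum of saturation levels'' enters. Rather than invoking {\bf Alg-RichLines} once with the strict threshold $k+1$, the algorithm sweeps a geometrically decreasing sequence $\lambda_0 > \lambda_1 > \cdots > \lambda_j = k+1$ with $\lambda_0 = \Theta(m)$, $\lambda_{t+1} = \lambda_t/2$, and $j = \bigoh(\log(m/k)) = \bigoh(\log k)$. At level $t$ it calls {\bf Alg-RichLines} with parameter $\lambda_t$ on the current residual-plus-batch set, verifies each returned line by an exact incidence count via Theorem \ref{hopcroft}, moves the confirmed saturated lines into $C$, and strips the newly-covered points from $R$. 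The descent halts for this batch as soon as either (i) the lines harvested so far have shrunk $R$ by a prescribed constant fraction of the batch size, or (ii) the size of $C$ has grown enough to mark tangible progress toward the budget of $k$. This is the win/win structure: a full descent to the tightest threshold $\lambda_j = k+1$ happens only when every higher level has already been paid for by concrete progress, so the expensive low-threshold calls are amortized against the cheap bookkeeping of earlier levels.

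For correctness, the algorithm never errs on no-instances: a line enters $C$ only after a \emph{deterministic} incidence count confirms it covers at least $k+1$ points of $S$, so $C$ consists only of necessary lines and the output instance is no whenever the input is. On yes-instances the only source of error is a failed call to {\bf Alg-RichLines}, which by Theorem \ref{thm:exactfitting} has failure probability at most $3/N^2$ on input of size $N = \Theta(m)$; by suitably choosing the constants in $m$ and observing that the total number of invocations is bounded by $\bigoh(k \log k)$ (each of $\bigoh(\log k)$ levels per batch can terminate the batch, and at most $\bigoh(k)$ batches can avoid termination before $|C|$ or $|R|$ forces a halt), a union bound yields overall failure probability at most $2/k^3$.

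The running-time bound is the main obstacle, demanding careful amortization. By Theorem \ref{thm:exactfitting}, {\bf Alg-RichLines} at level $t$ on an input of size $N = \bigoh(m)$ costs $\bigoh\bigl(N \log(N/\lambda_t) + N^2 \log(N) \log(N/\lambda_t)/\lambda_t^2\bigr)$; summing the second term across the geometric spectrum telescopes to $\bigoh(k^2 (\log^2 k)(\log\log k))$ per batch that descends to the bottom, and summing the first term over $j = \bigoh(\log k)$ levels contributes $\bigoh(m (\log k)^2)$ per batch. The win/win halting condition, together with $|C| \le k$, caps the number of full-descent batches at $\bigoh(k)$, so full-descent work totals $\bigoh(k^2 (\log^2 k)(\log\log k)^2)$ after absorbing an extra $\log\log k$ factor from the incidence bookkeeping. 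Processing the remaining batches contributes only the linear sweep term together with the $\bigoh(\log k)$-per-point point-location filtering against $C$, summing to $\bigoh(n \log k)$; combined these give the claimed $\bigoh(n\log k + k^2 (\log^2 k)(\log\log k)^2)$ time. Storing only $C$, the incidence structure of Theorem \ref{point-location}, and $\bigoh(\log k)$ workspaces of size $\bigoh(k^2 \log k)$ for the threshold spectrum keeps the space within $\bigoh(k^2 \log^2 k)$.
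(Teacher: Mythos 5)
Your high-level architecture matches the paper's: batches of $\Theta(k^2)$ uncovered points, a growing set of verified saturated lines used to filter incoming points via Theorem~\ref{point-location}, a descending sweep of richness thresholds with a win/win stopping rule, and amortization of the sweep's cost against either covered points or harvested lines. But there is a genuine gap in the running-time analysis, and it is exactly where the paper's technical work lives. Your spectrum is geometric, $\lambda_{t+1}=\lambda_t/2$ from $\Theta(k^2)$ down to $k+1$, with $j=\bigoh(\log k)$ levels, and you yourself compute a full-descent cost of $\bigoh(k^2\log^2 k)$ per batch and allow $\bigoh(k)$ full-descent batches; the product is $\bigoh(k^3\log^2 k)$, and the step asserting this "totals $\bigoh(k^2(\log^2 k)(\log\log k)^2)$ after absorbing an extra $\log\log k$ factor" does not follow. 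To get the claimed bound, the number of full descents must be $\bigoh(\mathrm{poly}(\log\log k))$, which requires that every full descent be guaranteed to harvest $\Omega(k/\mathrm{poly}(\log\log k))$ saturated lines. With $\bigoh(\log k)$ levels, the pigeonhole over levels only guarantees that some level's lines cover $\Omega(k^2/\log k)$ points, hence $\Omega(k/\log k)$ lines at the bottom level (and, at an intermediate level with threshold $\lambda_t\gg k$, only $\Omega\bigl(k^2/(\lambda_t\log k)\bigr)$ lines against a descent cost that is still $\Omega(k^2\,t^2)$ from the linear terms alone). Working this out gives at best $\bigoh(k^2\log^3 k)$ overall, and worse when the "heavy" level sits high in the spectrum without covering a constant fraction of the batch.

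The paper avoids this by confining the spectrum to a narrow band: the top threshold is about $2k\sqrt{(\log k)(\log\log k)}$ (chosen so the very first {\bf Alg-RichLines} call costs only $\bigoh(k^2\log k)$ and can be charged to the $\Theta(k^2)$ points it covers, i.e.\ to the $n\log k$ term), the bottom of the band is about $2k\log\log k$, and consecutive thresholds differ by a factor of $\sqrt{\log\log k}$, so there are only $r=\bigoh(\log\log k/\log\log\log k)$ levels. The combinatorial Lemma~\ref{distribution} then shows that either the top level covers $k^2/3$ points, or some level $i$ contains $\Omega\bigl(k^{(1+y_{i-1})/2}/r\bigr)$ lines, or the final level (threshold $k+1$) contains $\Omega(k/\log\log k)$ lines; Lemma~\ref{lemma9} converts this into a per-line amortized cost of $\bigoh(k(\log^2 k)(\log\log k)^2)$, which summed over at most $k$ lines gives the theorem's bound. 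Your proposal is missing this calibrated spectrum and the accompanying counting lemma; as written, the win/win rule is not strong enough to cap the number of expensive descents. (A secondary, smaller issue: with $\bigoh(k\log k)$ invocations each failing with probability $\Theta(1/k^4)$, your union bound gives $\bigoh(\log k/k^3)$, not $2/k^3$; the paper keeps the invocation count at $\bigoh(k\log\log k)$ and is itself a bit loose on constants here.)
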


Let $(S, k)$ be an instance of {\sc Line Cover}.  We say that a line $l$ is \emph{saturated} w.r.t.~$S$ if it is $(k+1)$-rich for~$S$. A line $l$ is \emph{unsaturated} w.r.t.~$S$ if it is not saturated.  We start by giving an intuitive explanation of the results leading to the kernelization algorithm {\rm\bf Alg-Kernel}.

The kernelization algorithm processes the set $S$ of points in ``batches" of roughly $2k^2$ uncovered points each, and for each batch $S'$, computes the saturated lines induced by $S'$ and adds them to the (partial) solution. Since processing each batch should result in computing at least one saturated line---assuming a yes-instance, the above process iterates at most $k$ times. The main task becomes to compute the saturated lines induced by a batch efficiently. One straightforward idea is to invoke {\bf Alg-RichLines} directly with $\lambda=k+1$, which, w.h.p., computes all the saturated lines in $S'$. The drawback is that {\bf Alg-RichLines} takes time $\bigoh(k^2 \log^2 k)$
 per batch, and may result in a single saturated line, and hence in an overall running time of $\bigoh(n \log k + k^3\log^2 k)$ for the kernelization algorithm.
 
 The main technical contributions of this section lie in devising a more efficient implementation of the above kernelization scheme. The improved scheme rests on two key observations: (1) the running time of {\bf Alg-RichLines} decreases as the saturation threshold (i.e., $\lambda$) of the saturated lines sought increases; and (2) assuming that a subset of the batch $S'$ needs to be covered only by saturated
 lines, then for any $\lambda < \lambda'$, it requires more saturated lines of saturation $\lambda$---where the saturation of a rich line is the number of points on it---to cover that subset of the batch than the number of saturated lines of saturation $\lambda'$.   
 
Based on the above observations, we design an algorithm {\bf Alg-SaturatedLines} that intuitively works as follows. We first partition the saturation range into intervals, thus defining a spectrum of saturation levels. Then {\bf Alg-SaturatedLines} calls {\bf Alg-RichLines} starting with the highest saturation threshold (i.e., starting with a value of $\lambda$ defining the highest saturation interval in the spectrum), and iteratively decreasing the saturation threshold until either: (1) the saturated lines computed cover ``enough'' points of the batch $S'$, or (2) the total number of saturated lines computed for the batch $S'$ is ``large enough'', thus making enough progress towards computing the $k$ lines in the line cover of $(S, k)$. 
 
 The above scheme enables us to amortize the running time of {\bf Alg-SaturatedLines} over the number of saturated lines it computes. The main kernelization algorithm, {\bf Alg-Kernel}, then calls {\bf Alg-SaturatedLines} on each batch of $2k^2$ uncovered points. As we show in the analysis, the above scheme enables a win/win situation, yielding an overall running time of $\bigoh(n\log k+ k^2(\log^2 k)(\log\log k)^2)$.  
 
We also show that, if instead of using the randomized Monte Carlo algorithm {\bf Alg-RichLines} to compute the saturated lines we use the deterministic algorithm of Guibas et al.~\cite{Guibas1996}, the above scheme yields a deterministic kernelization algorithm for {\sc Line Cover} that runs in time 
$\bigoh(n\log k+k^3(\log^3 k) \sqrt{\log\log k})$ and computes a kernel of size at most $k^2$.

We now give an intuitive low-rigor description of the technical results leading to the kernelization algorithm. Lemma \ref{distribution} is a combinatorial result showing that either the saturated lines belonging to the highest interval in the saturation spectrum cover enough points of the batch $S'$, or there is a saturation interval in the spectrum containing a ``large enough'' number of lines. Lemma~\ref{distribution} is then employed by Lemma~\ref{lemma3} to show that, w.h.p., {\bf Alg-SaturatedLines} returns a set of saturated
lines that either covers enough points of the batch $S'$, or contains a ``large enough'' number of saturated lines.  We employ  Lemma~\ref{distribution} and use amortized analysis to upper bound the running time of {\bf Alg-SaturatedLines} w.r.t.~the number of saturated lines computed by this algorithm, which we subsequently use to upper bound the running time of {\bf Alg-SaturatedLines} in Lemma~\ref{lem:kerneltime}. Finally, Theorem~\ref{thm:mainlc} employs the above results to prove the correctness of {\bf Alg-Kernel} and upper bounds its time and space complexity.  We now proceed to the details.

In what follows let $\sigma =2k^2$, and let $S' \subseteq S$ be a subset of points such that $|S'|=\sigma$.
We want to identify a subset of saturated lines w.r.t.~$S'$. We define the following notations. Let $\epsilon=\frac{\ln\ln\ln k}{\ln k}$. For $i \in \mathbb{N}$, let $y_i=1-\frac{\ln\ln k}{\ln k}-\frac{\ln\ln\ln k}{\ln k}+i\epsilon$.  Let $r$ be the minimum integer such that $k^{y_r} \ge \frac{k}{(\ln\ln k)^2}$, and note that $r=\bigoh(\ln \ln k)$. Note that we have $y_0 < y_1 < \cdots < y_r$.

We define a sequence of intervals $I_0, \ldots, I_{r+1}$ as follows: 
$I_0=[\frac{1}{k^{1+y_0}}, 1]=[\frac{\ln k(\ln\ln k)}{k^2}, 1]$, $I_i=[\frac{1}{k^{1+y_i}}, \frac{1}{k^{1+y_{i-1}}})$, 
for $i=1, 2, \ldots, r$,
and $I_{r+1}=[\frac{(k+1)k}{\sigma(\sigma-1)}, \frac{1}{k^{1+y_r}})$. Observe that the intervals $I_0, \ldots, I_{r+1}$ are mutually disjoint, and partition the interval $[\frac{(k+1)k}{\sigma(\sigma-1)}, 1]$. It is easy to verify that the lengths of the intervals $I_1, \ldots, I_{r}$ are decreasing. \iflong See Figure~\ref{fig:fptfigure} for illustration.

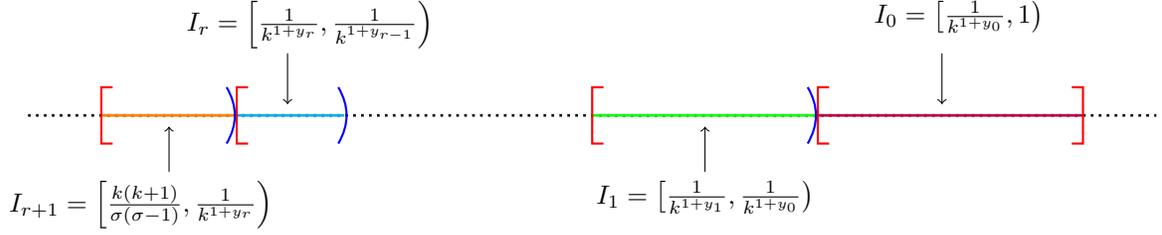
\begin{figure*}[ht]
\begin{center}
\begin{tikzpicture}[scale=0.75]
    
    \draw[dotted, color=black, line width=1] (-2,0) -- (18,0);
    
     \draw[line width=1, color=orange] (-0.7,0) -- (1.7,0);

    \draw[line width=1, color=cyan] (1.7,0) -- (3.6,0);
    
    \draw[line width=1, color=green] (8,0) -- (12,0);
    
    \draw[line width=1, color=purple] (12,0) -- (16.7,0);

    \draw [red, thick] (-0.5,-0.5) to [square left brace ] (-0.5,0.5);
    \draw [blue, thick] (1.52,-0.5) to [round right paren] (1.52,0.5);
    
    \draw [red, thick] (1.9,-0.5) to [square left brace ] (1.9,0.5);
    \draw [blue, thick] (3.5,-0.5) to [round right paren] (3.5,0.5);
    
    \draw [red, thick] (8.2,-0.5) to [square left brace ] (8.2,0.5);
    \draw [blue, thick] (11.82,-0.5) to [round right paren] (11.82,0.5);
    
    \draw [red, thick] (12.2,-0.5) to [square left brace ] (12.2,0.5);
    \draw [red, thick] (16.5,-0.5) to [square right brace] (16.5,0.5);

     \draw[shift={(0,-1)}]   node[below]{$I_{r+1} = \left[\frac{k(k+1)}{\sigma(\sigma-1)}, \frac{1}{k^{1+y_r}}\right)$};
     \draw[->](0.5,-1) -- (0.5,-0.2);
     
     \draw[shift={(3,2.2)}]   node[below]{$I_{r} = \left[\frac{1}{k^{1+y_r}}, \frac{1}{k^{1+y_{r-1}}}\right)$};
     \draw[->](2.6,1.1) -- (2.6,0.2);

     \draw[shift={(10,-1)}]   node[below]{$I_{1} = \left[\frac{1}{k^{1+y_1}}, \frac{1}{k^{1+y_{0}}}\right)$};
     \draw[->](10,-1) -- (10,-0.2);
     
     \draw[shift={(14.5,2.2)}]   node[below]{$I_{0} = \left[\frac{1}{k^{1+y_0}}, 1\right)$};
     \draw[->](14.2,1.1) -- (14.2,0.2);









%

\end{tikzpicture}
\end{center}
\caption{Illustration for the definition of the intervals  $I_0, \ldots, I_{r+1}$. 
}
\label{fig:fptfigure}

\end{figure*}
\fi

Suppose that there are $h$ saturated lines 
$l_1, \ldots, l_h$ w.r.t.~$S'$. Denote by $s_i$ the number of points in $S'$ covered by $l_i$, for $i \in [h]$. Let $\rho_i=\frac{s_i(s_i-1)}{\sigma(\sigma-1)}$, and note that $\rho_i$ belongs to one of the intervals $I_0, \ldots, I_{r+1}$. We partition the $h$ saturated lines into at most $r+2$ groups, $H'_0, \ldots, H'_{r+1}$, where $H'_i$, for $i=0, \ldots, r+1$, consists of every saturated line $l_j$, $j \in [h]$, such that $\rho_j \in I_i $. Clearly, it follows that $H'_0, \ldots, H'_{r+1}$ is indeed a partitioning of $\{l_1, \ldots, l_h\}$.

Consider {\bf Alg-SaturatedLines} for computing the saturated lines w.r.t. $S'$:

\begin{algorithm}[htbp]
     \textbf{Input:} $S'$, where $|S'|=\sigma=2k^2$; $k \in \mathbb{N}$;  and integer $r$ as defined before  \\
     \textbf{Output:} A set of points $S''$ and a set of saturated lines $L'$ 
     
      \begin{algorithmic}[1]
          
            \State {\bf for} {($i=0; i \le r+1; i++$)} {\bf do}  \label{rich7}
                    \State ~~~~{\bf if} { $i\le r$ } {\bf then} let $L'=$ {\bf Alg-RichLines}$(S', \sigma  k^{-(1+y_i)/2})$;  \label{rich8}
                    \State ~~~~{\bf else} \ let $L'=$ {\bf Alg-RichLines}$(S', k+1)$;  \label{rich88} 
                    \State ~~~~compute the set $S''\subseteq S'$ not covered by $L'$;
                    \label{rich-666}
                    \State ~~~~{\bf if} { $i=0$ and $L'$ covers at least $k^2/3$ points}
                            {\bf then} return $(S'', L')$;   \label{rich-6}
                     \State ~~~~{\bf else if} { $i\le r$  and $|L'| \ge \frac{1}{12r}k^{(1+y_{i-1})/2}$}   
                              {\bf then} return $(S'', L')$;   \label{rich10}   
                     \State ~~~~{\bf else if} {$i=r+1$ and $|L'| \ge \frac{1}{12}k^{(1+y_{r})/2}$}
                              {\bf then} return $(S'', L')$;  \label{rich12}
           \State return $(S', \emptyset)$;   \label{rich15}
      \end{algorithmic}
      \caption{{\bf : Alg-SaturatedLines}($S', k, r$)--A randomized algorithm for computing saturated lines w.r.t. $S'$} \label{richAlg}
\end{algorithm}

Now we are ready to present the kernelization algorithm, {\bf Alg-Kernel}, for {\sc Line Cover}. 

 
\begin{algorithm}[htbp]
     \textbf{Input:} $S=\{q_1, \ldots, q_n\}$; $k \in \mathbb{N}$. \\
     \textbf{Output:} an instance $(S', k')$ of {\sc Line Cover}.
      \begin{algorithmic}[1]
       \State {\bf if} {$k \leq 15$} {\bf then} return the instance $(S', k')$ described in Lemma~\ref{lem:specialcase};   \label{special}
           \State $H=\emptyset$; $S'=\emptyset$; $i=1$;           \label{ker1}
           \State construct a search structure $\Gamma_H$ for the lines in $H$ and set $\Gamma_H=\emptyset$;\label{ker22}
           \State {\bf while} { $|H|\le k$ } {\bf do}  \label{ker2}
                
                  \State ~~~~{\bf while} { $|S'|<2k^2$ and $i\leq n$ } {\bf do}\label{ker4}
                       
                      \State~~~~~~~~{\bf if} {$q_i$ is not covered by $H$ } {\bf then} add $q_i$ to $S'$ and set $i=i+1$;  \label{ker6}
                   \State ~~~~{\bf if} { $|S'|=2k^2$ } {\bf then} \label{ker7}
                        \State ~~~~~~~~let $(S', L')=$ {\bf Alg-SaturatedLines}$(S', k, r)$;  \label{ker8}
                         \State ~~~~~~~~{\bf if} {$L'=\emptyset$} {\bf then} return a (trivial) no-instance $(S', k')$; \label{ker9}
                        \State ~~~~~~~~$H=H\cup L'$; update $\Gamma_H$ for $H$; \label{ker10}
                   \State ~~~~{\bf else}
                     \State~~~~~~~~{\bf if} { $|S'| > k^2$ } {\bf then}  \label{ker12}
                        \State ~~~~~~~~~~~~$L'=$ {\bf Alg-RichLines}$(S', k+1)$; $H=H\cup L'$; \label{ker13}
                       \State ~~~~~~~~~~~~update  $\Gamma_H$ for $H$; remove the points  from $S'$ covered
                       by $L'$;  \label{ker14}
                       \State~~~~~~~~~~~~{\bf if} { $|H|>k$ or $|S'|>k^2$ } {\bf then} return a (trivial) no-instance $(S', k')$; \label{ker15} 
                  \State ~~~~~~~~return $(S', k-|H|)$;  \label{ker16}
            \State return a (trivial) no-instance $(S', k')$; \label{ker17}
      \end{algorithmic}
      \caption{{\bf : Alg-Kernel}($S, k$)--A randomized kernelization algorithm for {\sc Line Cover}} \label{Alg-PLC}
\end{algorithm}

We explain the kernelization algorithm. The kernelization algorithm works by computing w.h.p.~the set $H$ of saturated lines in $S$ and removing all points covered by these lines. Observe that, any set of more than $k^2$ points that can be covered by at most $k$ lines must contain at least one saturated line. During the execution of the algorithm, the set $S'$, which will eventually contain the kernel, contains a subset of points in $S$. We start by initializing $S'$ to the empty set, and order the points in $S$ arbitrarily. We repeatedly add the next point in $S$ (w.r.t.~the defined order) to $S'$ until either $|S'|=2k^2$, or no points are left in $S$. Afterwards, the algorithm distinguishes two cases.

If $|S'|=2k^2$, the algorithm calls {\bf Alg-SaturatedLines} to compute a subset of the saturated lines w.r.t.~$S'$. {\bf Alg-SaturatedLines} may not compute all the saturated lines in $S'$, and rather acts as a ``filtering algorithm''. This algorithm either computes a subset of saturated lines that cover at least $k^2/3$ many points in $S'$ ``efficiently'', that is more efficiently than {\bf Alg-RichLines}, which w.h.p.~computes all the saturated lines in $S'$; or computes a ``large'' set of saturated lines (a little bit less efficiently than {\bf Alg-RichLines}), thus decreasing the parameter $k$ significantly (and hence the overall execution of the algorithm). 

If $k^2 < |S'| < 2 k^2$, no more points are left in $S$ to consider. {\bf Alg-RichLines} is called at most once to compute w.h.p.~all the remaining saturated lines w.r.t.~$S'$ to return the kernel. 

We now proceed to prove the correctness and analyzing the complexity of {\bf Alg-Kernel}.
 
\begin{lemma} \ifshort {\rm ($\spadesuit$)} \fi
\label{distribution}
Given a set $S'$ of $\sigma$ points and a parameter $k$, if $S'$ can be covered by at most $k$ lines then one of the following conditions must hold: 
\begin{enumerate}
\item [(1)] $H'_0$ covers at least $\frac{\sigma-k^2}{3}$ points;  
\item [(2)] $|H'_i| \ge (\frac{\sigma-k^2}{6r\sigma})\cdot k^{(1+y_{i-1})/2}$ for some $i \in [r]$; or
\item [(3)] $|H'_{r+1}| \ge (\frac{\sigma-k^2}{6\sigma})\cdot k^{(1+y_{r})/2}$.
\end{enumerate} 
\end{lemma}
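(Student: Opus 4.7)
The plan is to prove the lemma by contrapositive: assume none of (1), (2), (3) hold, and derive a contradiction by showing that the saturated lines cannot cover enough points of $S'$. The starting observation is that, since $S'$ is covered by at most $k$ lines and every unsaturated line covers at most $k$ points, the total number of points of $S'$ covered by unsaturated lines is at most $k \cdot k = k^2$. Hence the union of the saturated lines $l_1,\ldots,l_h$ covers at least $\sigma - k^2$ points of $S'$.

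Under the negation of (1), the lines in $H'_0$ collectively cover fewer than $(\sigma-k^2)/3$ points of $S'$, so the lines in $H'_1 \cup \cdots \cup H'_{r+1}$ must cover at least $2(\sigma-k^2)/3$ points of $S'$. I would then bound the point-contribution of each individual group $H'_i$ for $i \ge 1$ from above by using the definition of $I_i$. Specifically, for $l_j \in H'_i$ with $1 \le i \le r$, the condition $\rho_j < 1/k^{1+y_{i-1}}$ translates, via $\rho_j = s_j(s_j-1)/(\sigma(\sigma-1))$ and $(s_j-1)^2 \le s_j(s_j-1)$, into the upper bound $s_j \le \sigma/k^{(1+y_{i-1})/2}$ (up to an additive constant that is absorbed since $s_j \ge k+1$ is large). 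The analogous bound for $l_j \in H'_{r+1}$ is $s_j \le \sigma/k^{(1+y_r)/2}$.

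Multiplying these per-line bounds by the assumed upper bounds on $|H'_i|$ (i.e., by the negations of (2) and (3)) and summing, the key computation is
\[
\sum_{i=1}^{r}|H'_i|\cdot \frac{\sigma}{k^{(1+y_{i-1})/2}} + |H'_{r+1}|\cdot \frac{\sigma}{k^{(1+y_r)/2}} < \sum_{i=1}^{r} \frac{\sigma-k^2}{6r} + \frac{\sigma-k^2}{6} = \frac{\sigma-k^2}{3},
\]
where the telescoping of the $k^{(1+y_{i-1})/2}$ factors between the size bound in (2) and the per-line bound on $s_j$ is the point of the partition into intervals. Since the left-hand side dominates the number of points of $S'$ covered by $H'_1 \cup \cdots \cup H'_{r+1}$ (counted with multiplicity, an upper bound on the union), we obtain that this union is strictly less than $(\sigma-k^2)/3 < 2(\sigma-k^2)/3$, contradicting the lower bound deduced from the negation of (1).

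The main obstacle I anticipate is keeping the constants tight: the per-line bound on $s_j$ derived from $\rho_j \in I_i$ carries an additive $+1$ correction, and one must verify that the resulting slack fits within the $1/6$ and $1/(6r)$ constants stated in (2) and (3). This is where the assumption that each saturated line contains at least $k+1$ points (so that $s_j - 1 \ge s_j/2$ comfortably) is useful, and where a small bookkeeping calculation—using $\sigma = 2k^2$ and the bounds on $y_i \le 1$—is required to confirm that the additive correction is negligible relative to the main term. Once that is done, the summation above closes the argument cleanly.
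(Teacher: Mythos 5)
Your proposal is correct and is essentially the paper's argument in contrapositive form: both rest on the observation that the saturated lines must cover at least $\sigma-k^2$ points, the per-line bound $s_j\lesssim \sigma/k^{(1+y_{i-1})/2}$ extracted from $\rho_j\in I_i$, and the split of $\sigma-k^2$ into thirds with the $1/(6r)$ and $1/6$ constants. The only (cosmetic) difference is that the paper argues directly—one of $H'_0$, $\bigcup_{i=1}^r H'_i$, $H'_{r+1}$ covers a third of the points, then pigeonholes to a single group—whereas you negate all three conclusions and sum the resulting upper bounds; your bookkeeping for the additive $+1$ correction indeed closes (the paper sidesteps it by using $s_j(s_j-1)>s_j^2/2$ to get the cruder bound $s_j<2\sigma/k^{(1+y_{i-1})/2}$).
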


\iflong
\begin{proof}

For an arbitrary set $H'_i$, where $i \in [r+1]$, we upper bound the number of points covered by any line in $H'_i$. For a line $l_a \in H'_i$, we have 
$\frac{s_a(s_a-1)}{\sigma(\sigma-1)}<\frac{1}{k^{1+y_{i-1}}}$ by definition. It is easy to verify that, since $2 \leq s_a \leq \sigma$, we have  
$\frac{s_a(s_a-1)}{\sigma(\sigma-1)}>\frac{s_a^2}{2\sigma^2}$. 
Thus, $s_a<\frac{\sqrt{2}\sigma}{k^{(1+y_{i-1})/2}}< \frac{2\sigma}{k^{(1+y_{i-1})/2}}.$

By the assumption of the lemma, $S'$ can be covered by at most $k$ lines. Let $C$ be a set of at most $k$ lines that cover $S'$. There are at most $k-\sum_{j=0}^{r+1}|H'_j|$ unsaturated lines in $C$ that each  
covers at most $k$ points in $S'$. Thus, at least $\sigma-k^2$ points are covered by $\bigcup_{j=0}^{r+1} H'_i$. 
If $H'_0$ covers at least $\frac{\sigma-k^2}{3}$ points, part (1) of the lemma holds, and the lemma follows. Otherwise, either $\bigcup_{i=1}^r H'_i$ 
or $H'_{r+1}$ covers at least $\frac{\sigma-k^2}{3}$ points. 

If $\bigcup_{i=1}^r H'_i$ covers at least $\frac{\sigma-k^2}{3}$ points, then there exists $j \in [r]$ such that $H'_j$ covers at 
least $\frac{\sigma-k^2}{3r}$ points. For each line $l_a \in H'_j$, we have $s_a< \frac{2\sigma}{k^{(1+y_{j-1})/2}}$.
Hence, $|H'_j| \ge  \left.\frac{\sigma-k^2}{3r} \right/  \frac{2\sigma}{k^{(1+y_{j-1})/2}}=(\frac{\sigma-k^2}{6\sigma r})\cdot  k^{(1+y_{j-1})/2}$, and part (2) of the lemma, and hence the lemma, follows.

If $H'_{r+1}$ covers at least $\frac{\sigma-k^2}{3}$ points, then for every $l_a \in H'_{r+1}$, we have
$s_a< \frac{2\sigma}{k^{(1+y_{r})/2}}$. Therefore, $|H'_{r+1}| \ge \left.\frac{\sigma-k^2}{3} \right/ \frac{2\sigma}{k^{(1+y_{r})/2}}$
$=(\frac{\sigma-k^2}{6\sigma})\cdot k^{(1+y_r)/2}$, and part (3) of the lemma follows, thus completing the proof.
\end{proof}
\fi

\begin{lemma}  \label{lemma3}
Given a set $S'$ of points and a parameter $k\ge 16$, let $L'$ be the set of lines returned by {\rm\bf Alg-SaturatedLines}$(S', k, r)$. If $S'$ can be covered with at most 
$k$ lines, then with probability at least $1-\frac{1}{k^4}$ one of the following holds: 
\begin{enumerate}
\item [(1)] $L'$ covers at least $\frac{k^2}{3}$ points; 
\item [(2)] $|L'| \ge \frac{1}{12r}k^{(1+y_{i-1})/2}$ for some $i \in [r]$; or 
\item [(3)] $|L'| \ge \frac{1}{12}k^{(1+y_{r})/2}$.
\end{enumerate}
\end{lemma}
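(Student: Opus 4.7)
Plan: My approach is to combine the deterministic combinatorial guarantee of Lemma~\ref{distribution} with a union bound on the failure probabilities of the at most $r+2$ calls to {\bf Alg-RichLines} made inside {\bf Alg-SaturatedLines}. Applying Lemma~\ref{distribution} with $\sigma=2k^2$ (so $\sigma-k^2=k^2$), I get the deterministic trichotomy that one of the following holds for the actual partition $H'_0,\ldots,H'_{r+1}$ of the saturated lines induced by $S'$: (A) $H'_0$ covers at least $k^2/3$ points of $S'$; (B) $|H'_j|\ge \frac{1}{12r}k^{(1+y_{j-1})/2}$ for some $j\in[r]$; or (C) $|H'_{r+1}|\ge \frac{1}{12}k^{(1+y_r)/2}$.

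By Theorem~\ref{thm:exactfitting}, each invocation of {\bf Alg-RichLines} on $S'$ (of size $\sigma=2k^2$) fails with probability at most $3/\sigma^2=3/(4k^4)$, so a union bound over the at most $r+2=O(\log\log k)$ iterations of the main loop shows that all calls return correctly with probability at least $1-1/k^4$ for $k\ge 16$ (using that $r=O(\log\log k)$ grows much slower than $k^4$). Denote this event by $\mathcal{E}$; the rest of the argument proceeds deterministically under $\mathcal{E}$.

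The key step is a containment claim: under $\mathcal{E}$, at each iteration $i\le r$ the set $L'$ computed by {\bf Alg-RichLines}$(S',\sigma k^{-(1+y_i)/2})$ contains $H'_0\cup\cdots\cup H'_i$, and at iteration $r+1$ the set $L'$ is exactly the set of all saturated lines, and so in particular contains $H'_{r+1}$. For any $l_a \in H'_j$ with $j\le i$, the defining inequality $\rho_a\ge 1/k^{1+y_j}\ge 1/k^{1+y_i}$ rearranges to $s_a(s_a-1)\ge \sigma(\sigma-1)/k^{1+y_i}$; combined with the integrality of $s_a$ and the easy bound $\sqrt{\sigma(\sigma-1)}>\sigma-1$ (valid since $\sigma>1$), this yields $s_a\ge \lfloor \sigma k^{-(1+y_i)/2}\rfloor$, so $l_a$ is counted among the lines returned by the call.

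Using the containment, I case on which conclusion of Lemma~\ref{distribution} holds. Under (A), at iteration $i=0$ we have $L'\supseteq H'_0$ covering at least $k^2/3$ points, so Step 5 of {\bf Alg-SaturatedLines} triggers and the returned $L'$ satisfies conclusion (1). Under (B), let $j$ be the smallest index with the property; if the algorithm has not already returned earlier (in which case the returned $L'$ already meets (1) or (2) by the same reasoning applied to the earlier iteration), then at iteration $j$ the containment gives $|L'|\ge |H'_j|\ge \frac{1}{12r}k^{(1+y_{j-1})/2}$, so Step 6 triggers and (2) holds. Under (C), if the algorithm has not returned by iteration $r+1$, the containment gives $|L'|\ge |H'_{r+1}|\ge \frac{1}{12}k^{(1+y_r)/2}$, so Step 7 triggers and (3) holds. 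Thus Step 8 of {\bf Alg-SaturatedLines} is never reached under $\mathcal{E}$, completing the proof. The main obstacle is the containment claim, namely turning the ``area-form'' lower bound $s_a(s_a-1)\ge \sigma(\sigma-1)/k^{1+y_i}$ into the cleanly-stated integer threshold $s_a\ge \sigma k^{-(1+y_i)/2}$ actually used inside {\bf Alg-SaturatedLines}; all other steps are either straightforward probability or a direct case split.
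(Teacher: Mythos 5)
Your overall strategy is the same as the paper's---apply Lemma~\ref{distribution} with $\sigma=2k^2$, establish the containment $\bigcup_{j\le i}H'_j\subseteq \mathcal{R}_i$ where $\mathcal{R}_i$ is the set of $(\sigma k^{-(1+y_i)/2})$-rich lines, and case on the trichotomy---but two steps fail as written. First, the probability accounting: a union bound over all $r+2$ calls gives success probability $1-(r+2)\cdot\frac{3}{4k^4}$, and $(r+2)\cdot\frac{3}{4}\le 1$ holds for no $r\ge 0$, so this is strictly weaker than the claimed $1-\frac{1}{k^4}$. The remark that $r=O(\log\log k)$ grows much more slowly than $k^4$ is beside the point: the relevant comparison is $(r+2)\cdot 3/4$ versus $1$, not $r$ versus $k^4$. (For $k=16$ one has $r$ on the order of $50$, so the loss is a factor of roughly $40$, and the constant matters downstream where Theorem~\ref{thm:mainlc} multiplies the per-batch failure probability by $k+1$.) The repair is an observation your own case analysis almost makes explicit: any early return automatically certifies one of conclusions (1)--(3), whether or not the call that triggered it returned the correct rich-line set, so the only bad event is that the algorithm reaches the single critical iteration identified by Lemma~\ref{distribution} and \emph{that one} call fails. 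This costs $\frac{3}{\sigma^2}<\frac{1}{k^4}$ with no union bound at all, which is how the paper gets the stated constant.

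Second, the containment step: you derive $s_a\ge\lfloor \sigma k^{-(1+y_i)/2}\rfloor$, but that does not place $l_a$ among the $(\sigma k^{-(1+y_i)/2})$-rich lines when the threshold $t=\sigma k^{-(1+y_i)/2}$ is not an integer (which it generically is not), since a line covering exactly $\lfloor t\rfloor<t$ points is not $t$-rich. Indeed, $\sqrt{\sigma(\sigma-1)}>\sigma-1$ only yields $s_a>(\sigma-1)k^{-(1+y_i)/2}=t-k^{-(1+y_i)/2}$, and integrality then gives $s_a\ge\lfloor t\rfloor$, which can fall short of $t$. The clean route (the paper's) avoids the floor entirely: since $t<\sigma$, one has $\frac{t(t-1)}{\sigma(\sigma-1)}<\frac{t^2}{\sigma^2}=\frac{1}{k^{1+y_i}}\le\frac{s_a(s_a-1)}{\sigma(\sigma-1)}$, hence $s_a(s_a-1)>t(t-1)$ and, by monotonicity of $x(x-1)$ on $x\ge 1$, $s_a>t$. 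Both defects are local and repairable without changing your architecture, but as written neither the probability bound nor the containment actually closes.
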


\begin{proof} 
For $i=0, \ldots , r$, let $\mathcal{H}_i=\bigcup_{j=0}^i H'_j$
and let $\mathcal{R}_i$ be the set of $(\sigma k^{-(1+y_i)/2})$-rich lines for~$S'$. Let $\mathcal{R}_{r+1}=\bigcup_{j=0}^{r+1} H'_j$. For each line $l_a \in \mathcal{H}_i$, we have $\frac{s_a(s_a-1)}{\sigma(\sigma-1)}\ge \frac{1}{k^{1+y_i}}$. Since $\frac{\sigma k^{-(1+y_i)/2}(\sigma k^{-(1+y_i)/2}-1)}{\sigma(\sigma-1)}$
$<\frac{1}{k^{1+y_i}}$, we have $s_a>\sigma k^{-(1+y_i)/2}$. Hence, $\mathcal{H}_i \subseteq 
\mathcal{R}_i$, for $i=0, \ldots, r$. 

By Theorem~\ref{thm:exactfitting}, with probability at least $1-\frac{3}{\sigma^2}>1-\frac{1}{k^4}$, Steps \ref{rich8}--\ref{rich88} will compute $L'$ such that $L'=\mathcal{R}_i$ in the $i$-th iteration.
If $S'$ can be covered with at most $k$ lines, then
by Lemma \ref{distribution} (applied with $\sigma=2k^2$), one of the following conditions must hold: 
$(i)$ $H'_0$ covers at least $\frac{k^2}{3}$ points; 
$(ii)$ $|H'_i| \ge \frac{1}{12r}k^{(1+y_{i-1})/2}$ for some $i \in [r]$; or 
$(iii)$ $|H'_{r+1}| \ge \frac{1}{12}k^{(1+y_{r})/2}$.  
If case $(i)$ holds,  then with probability at least $1-\frac{1}{k^4}$, {\rm\bf Alg-SaturatedLines} will stop at Step \ref{rich-6}. This is true since Step~\ref{rich8} computes 
$L'=\mathcal{R}_0$, and $H'_0 \subseteq \mathcal{R}_0$, and hence, $\mathcal{R}_0$ covers at least 
$k^2/3$ points. This proves part (1). If 
case $(i)$ does not hold and case $(ii)$ holds, then there exists  $i \in [r]$ such that $|H'_i| \ge \frac{1}{12r}k^{(1+y_{i-1})/2}$. It follows that there exists
$i\in [r]$ such that $|\mathcal{R}_i| \ge \frac{1}{12r}k^{(1+y_{i-1})/2}$. Hence, 
with probability at least $1-\frac{1}{k^4}$,
{\rm\bf Alg-SaturatedLines} will stop at Step \ref{rich10}
and part (2) follows.
Finally, if neither case $(i)$ nor case $(ii)$ holds, then case $(iii)$ 
must hold, which implies that $|\mathcal{R}_{r+1}| \ge \frac{1}{12}k^{(1+y_{r})/2}$. 
In that case, with probability at least $1-\frac{1}{k^4}$,  
{\rm\bf Alg-SaturatedLines} stops at Step \ref{rich12} (with $i=r+1$) and part (3) follows.
\end{proof}
 
\iflong 
The following lemma gives an upper bound on the running time of {\rm\bf Alg-SaturatedLines} amortized over the number of saturated lines it computes.

\begin{lemma} 
\label{lemma9}
Given a set $S'$ of points and a parameter $k \ge 16$, let $L'$ be the set of lines returned by {\rm\bf Alg-SaturatedLines}$(S', k, r)$. Then the following hold about the set $L'$ and the time complexity $T(k)$ of the
 algorithm: 
\begin{enumerate}
\item [(1)] if the algorithm stops at Step \ref{rich-6}, then $T(k) =\bigoh(k^2 \log k)$ and $L'$ covers at least $\frac{k^2}{3}$ points; 
\item [(2)]  if the algorithm stops at  Step \ref{rich10} with the iterator $i$ of the {\bf for loop} having value $j$, then $T(k)=\bigoh(k^{1+y_j}(\log^2 k) (\log\log k))$, 
 $|L'|\ge \frac{1}{12r}k^{(1+y_{j-1})/2}$, and\\ $T(k)/|L'|=\bigoh(k(\log^2 k)(\log\log k)^2)$; 
\item [(3)]  if the algorithm stops at Step \ref{rich12}, then $T(k)=\bigoh(k^2\log^2 k)$, $|L'| \ge \frac{1}{12}k^{(1+y_{r})/2}$, and 
$T(k)/|L'|=\bigoh(k(\log^2 k)(\log\log k))$; 
\item [(4)] otherwise, the algorithm stops at Step \ref{rich15}, and $T(k)=\bigoh(k^2\log^2 k)$. 
\end{enumerate}
\end{lemma}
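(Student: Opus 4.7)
The plan is to analyze {\bf Alg-SaturatedLines} iteration-by-iteration, applying Theorem~\ref{thm:exactfitting} to each call of {\bf Alg-RichLines}, and then aggregating the costs across the iterations actually executed using a geometric-sum argument based on the carefully chosen gaps $y_i - y_{i-1} = \epsilon = \ln\ln\ln k/\ln k$, which yield $k^{\epsilon} = \ln\ln k$.

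\textbf{Per-iteration bound.} For iteration $i \le r$, {\bf Alg-RichLines} is invoked with input size $\sigma = 2k^2$ and threshold $\lambda_i = \sigma k^{-(1+y_i)/2}$. Since $\sigma/\lambda_i = k^{(1+y_i)/2}/2$, part~(2) of Theorem~\ref{thm:exactfitting} gives running time $O(\sigma \log(\sigma/\lambda_i) + \sigma^2 \log\sigma \cdot \log(\sigma/\lambda_i)/\lambda_i^2) = O(k^2 \log k + k^{1+y_i} \log^2 k)$. For iteration $r+1$, with $\lambda = k+1$, the same theorem yields $O(k^2 \log^2 k)$. I will also verify that Step~\ref{rich-666} (computing the uncovered points via Theorem~\ref{point-location}) contributes only $O(|L'|^2 + \sigma \log|L'|)$, which is absorbed into the {\bf Alg-RichLines} cost once $|L'|$ is bounded using Lemma~\ref{RichlineBound2} or Theorem~\ref{RichlineBound1}.

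\textbf{Aggregation across the four cases.} For case~(1) only $i=0$ executed; since $k^{1+y_0} = k^2/(\log k \cdot \log\log k)$, the total cost is $O(k^2 \log k)$, and the branching condition at Step~\ref{rich-6} directly gives the coverage claim. For case~(2), iterations $0, 1, \ldots, j$ executed. I will sum the per-iteration bounds, exploiting that $k^{1+y_i}$ forms a geometric progression with ratio $k^{\epsilon} = \log\log k > 1$, so $\sum_{i=0}^{j} k^{1+y_i} = O(k^{1+y_j})$. Together with $r+1 = O(\log\log k)$ iterations and the observation that for $j \ge 1$ the term $k^{1+y_j}\log^2 k$ dominates $(j+1)\,k^2 \log k$, this gives $T(k) = O(k^{1+y_j}\log^2 k \cdot \log\log k)$. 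The stopping condition at Step~\ref{rich10} supplies the $|L'|$ lower bound, and the ratio $T(k)/|L'|$ will be simplified using the identity $1 + y_j - (1+y_{j-1})/2 = (1+y_{j-1})/2 + \epsilon$ together with $k^{(1+y_{j-1})/2} \le k^{(1+y_r)/2} = O(k/\log\log k)$ (immediate from the defining condition $k^{y_r} \ge k/(\log\log k)^2$), yielding the stated $O(k \log^2 k\,(\log\log k)^2)$. For cases~(3) and~(4), iterations through $r+1$ executed: the cost through iteration $r$ is $O(k^2\log^2 k/\log\log k)$ by the case-(2) analysis with $j=r$, and iteration $r+1$ adds $O(k^2\log^2 k)$, totaling $O(k^2\log^2 k)$, with $|L'| = \Omega(k/\log\log k)$ in case~(3) directly from the condition at Step~\ref{rich12}.

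The main obstacle I expect is the exponent bookkeeping---verifying that the geometric ratio $k^{\epsilon} = \log\log k$ collapses $\sum_i k^{1+y_i}$ onto its last term cleanly, and that the particular combination of exponents arising in $T(k)/|L'|$ simplifies to $O(k)$ times the stated polylog factors by invoking $k^{(1+y_r)/2} = O(k/\log\log k)$. A secondary technical point is ensuring that in every iteration the cost of computing $S''$ is dominated by the call to {\bf Alg-RichLines}, which relies on the rich-line cardinality bounds from Lemma~\ref{RichlineBound2} (applicable when $\lambda_i \ge 2\sqrt{\sigma}$, i.e.~for the early iterations) and Theorem~\ref{RichlineBound1} for the remaining range including the last iteration.
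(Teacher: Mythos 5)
Your overall plan mirrors the paper's proof: per-iteration costs come from part (2) of Theorem~\ref{thm:exactfitting} (after checking $\sigma k^{-(1+y_i)/2} > k > \ln(2k^2)$ so that part applies), the cost of Step~\ref{rich-666} is controlled via Theorem~\ref{point-location} together with the bound $|L'| = O(k)$ from Theorem~\ref{RichlineBound1}, and the four cases are handled by summing over the executed iterations and dividing by the stopping-condition lower bound on $|L'|$. Your geometric-sum aggregation (ratio $k^{\epsilon} = \log\log k$) is a slightly cleaner way to collapse $\sum_i k^{1+y_i}$ than the paper's ``multiply the dominant term by $r$'', but it changes nothing material.

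There is, however, one step in case (2) that is wrong as written. You bound $k^{(1+y_{j-1})/2} \le k^{(1+y_r)/2}$ and then claim $k^{(1+y_r)/2} = O(k/\log\log k)$, calling it ``immediate from the defining condition $k^{y_r} \ge k/(\log\log k)^2$''. That condition gives the opposite direction: it yields $k^{(1+y_r)/2} \ge k/\log\log k$, a \emph{lower} bound. The upper bound on $k^{y_r}$ comes from the minimality of $r$ --- $k^{y_{r-1}} < k/(\log\log k)^2$, hence $k^{y_r} = k^{y_{r-1}}\cdot\log\log k < k/\log\log k$ --- and this only gives $k^{(1+y_r)/2} < k/\sqrt{\log\log k}$. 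Plugging that into your chain produces $T(k)/|L'| = O(k(\log^2 k)(\log\log k)^{5/2})$, which misses the stated bound. The repair is short: since $j\le r$ you have $y_{j-1}\le y_{r-1}$, and the minimality condition gives $k^{(1+y_{j-1})/2}\le k^{(1+y_{r-1})/2} < k/\log\log k$; combined with $k^{\epsilon}=\log\log k$ and $r=O(\log\log k)$ this recovers $O(k(\log^2 k)(\log\log k)^2)$. (The paper reaches the same place by writing $1+y_j-(1+y_{j-1})/2 = \tfrac12+\tfrac{y_j}{2}+\tfrac{\epsilon}{2}$ and using $k^{y_j}\le k^{y_r}<k/\log\log k$.) Two trivial slips you may also want to fix: $\sigma/\lambda_i = k^{(1+y_i)/2}$, not $k^{(1+y_i)/2}/2$; and for $j=1$ the term $k^{1+y_j}\log^2 k = k^2\log k$ does not by itself dominate the $O(k^2\log k\log\log k)$ contributed by the up-to-$r+1$ executions of Step~\ref{rich-666}, although the stated $O(k^{1+y_j}(\log^2 k)(\log\log k))$ still absorbs it. Your case (3) use of the lower bound $k^{(1+y_r)/2}\ge k/\log\log k$ is the correct direction and is fine.
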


\begin{proof}
To start with, let us upper bound $r$ and $k^{y_r}$.   
Recall that $y_i=1-\frac{\ln\ln k}{\ln k}-\frac{\ln\ln\ln k}{\ln k}+i\epsilon$, where $\epsilon=\frac{\ln\ln\ln k}{\ln k}$. 
Hence, $k^{y_i}=k^{1-\ln\ln k/\ln k-\ln\ln\ln k/\ln k+i\epsilon}=\frac{k^{1+i\epsilon}}{(\ln k)(\ln\ln k)}$ 
$=\frac{k}{\ln k} \cdot (\ln\ln k)^{i-1} $. 
Since $r$ is the minimum integer such that $k^{y_r} \ge \frac{k}{(\ln\ln k)^2}$, 
we have $k^{y_{r-1}}<\frac{k}{(\ln\ln k)^2}$ and $r<\frac{\ln\ln k}{\ln\ln\ln k}  =\bigoh(\ln\ln k)$. 
Moreover, since $k^{y_r}=\frac{k}{\ln k} \cdot (\ln\ln k)^{r-1}$, 
we have  
$k^{y_r}=k^{y_{r-1}}\cdot (\ln\ln k)< \frac{k}{(\ln\ln k)^2} \cdot (\ln\ln k)= \frac{k}{\ln\ln k}$.

First, observe that the running time of each iteration of the {\bf for loop} in the algorithm is dominated by the running time of the {\bf if-then-else} statement in Steps \ref{rich8}-\ref{rich88} and of Step \ref{rich-666}.
We analyze the running time of these two steps next.

In Step \ref{rich-666}, we use the result in Theorem~\ref{point-location}.  
Since $L'$ is a set of saturated lines, by Theorem~\ref{RichlineBound1}, we have 
$|L'|<\frac{40 \cdot (2k^2)^2}{k^3}=160k$ (note that $\sigma=2k^2$). 
Thus, preprocessing $L'$---needed for the result in Theorem~\ref{point-location}---takes time $\bigoh(k^2)$. Since each point-location query takes time $O(\log k)$
and $|S'|=2k^2$, computing $S''$ takes time $O(k^2\log k)$. Hence, 
Step \ref{rich-666} takes time $O(k^2\log k)$.

Next, we analyze the running time of Steps~\ref{rich8}-\ref{rich88}.  
 
(1) If the algorithm stops at Step \ref{rich-6}, then
Step \ref{rich8} is executed once throughout the algorithm. Recall that $k^{y_0}=\frac{k}{(\ln k)(\ln\ln k)}$. We have 
$\sigma k^{-(1+y_0)/2}>k>\ln (2k^2)$. The second 
inequality holds because $k\ge 16$. 
Therefore, part (2) of Theorem \ref{thm:exactfitting} applies, and 
Step~\ref{rich8} takes time $T_1(k)=\bigoh(\sigma \log (\frac{\sigma}{\sigma k^{-(1+y_0)/2}})+
\frac{\sigma^2\log \sigma}{(\sigma k^{-(1+y_0)/2})^2}\cdot \log \frac{\sigma}{\sigma k^{-(1+y_0)/2}})$. 
It follows that $T_1(k)=\bigoh(k^2\log k)$. 
Since Step \ref{rich-666} takes time $\bigoh(k^2\log k)$, we 
have $T(k)=\bigoh(k^2\log k)$ and $L'$ covers at least $\frac{k^2}{3}$ points by the condition in Step \ref{rich-6}.

(2) If the algorithm stops at Step \ref{rich10} with the iterator $i$ of the {\bf for loop} having value $j$, then from Step \ref{rich10}, we have $|L'| \ge \frac{1}{12r}k^{(1+y_{j-1})/2}$.
Moreover, Step \ref{rich8} is executed $j+1$ times. Since $\sigma k^{-(1+y_i)/2}\ge \sigma k^{-(1+y_{r})/2}$ and 
$k^{y_{r}}<\frac{k}{\ln\ln k}$, we have 
$\sigma k^{-(1+y_i)/2}>k>\ln (2k^2)$ for each $i\in \{0,1,\ldots, j\}$.
By part (2) of Theorem \ref{thm:exactfitting}, the total running time $T_2(k)$ (throughout the whole execution of the algorithm) of Step \ref{rich8} is:
\begin{eqnarray}
T_2(k) &=& \bigoh(\sum_{i=0}^j (\sigma \log  (\frac{\sigma}{\sigma k^{-(1+y_i)/2}}) + \frac{\sigma^2\log \sigma}{(\sigma k^{-(1+y_i)/2})^2} \cdot \log  (\frac{\sigma}{\sigma k^{-(1+y_i)/2}})) )  \label{A71}  \\
       &=& \bigoh(k^2(\log k) (\ln\ln k)+k^{1+y_j}(\log^2 k) (\ln\ln k))    \label{A72} \\
       &=& \bigoh(k^{1+y_j}(\log^2 k) (\log\log k)) \label{N73}.
\end{eqnarray}
The second equality above is obtained because $j\le r$ and $r=\bigoh(\ln\ln k)$. The third 
equality above is obtained because $k^{y_j} \ge k^{y_1}=\frac{k}{\ln k}$. Since Step \ref{rich-666} takes time $\bigoh(k^2\log k)$ 
and it is executed at most $r+1$ times, the total running 
time of Step \ref{rich-666} is $\bigoh((k^2\log k) \cdot r)=\bigoh(k^2(\log k)(\log\log k))$. 
Hence, $T(k)=\bigoh(k^{1+y_j}(\log^2 k) (\log\log k))$.
Since $|L'| \ge \frac{1}{12r} k^{(1+y_{j-1})/2}$, we have 
\begin{eqnarray}
T(k)/|L'| &=& \bigoh(k^{1+y_j-(1+y_{j-1})/2}(\log^2 k) (\log\log k) r) \notag \\
        &=& \bigoh(k^{1/2+y_j-y_{j-1}/2}(\log^2 k)(\log\log k)^2) \notag \\
        &=& \bigoh(k^{1/2+y_j/2+(y_j-y_{j-1})/2}(\log^2 k)(\log\log k)^2) \notag \\
        &=& \bigoh(k^{1/2+y_r/2+(y_j-y_{j-1})/2}(\log^2 k)(\log\log k)^2) \label{W711} \\
        &=& \bigoh(k^{1+(y_j-y_{j-1})/2}(\log^2 k)(\log\log k)^{3/2})   \label{W72} \\
        &=& \bigoh(k(\log^2 k)(\log\log k)^2). \label{W73}
\end{eqnarray}
Equality (\ref{W711}) is obtained because $y_j \leq y_r$.
Equality (\ref{W72}) is obtained because $k^{y_r} < \frac{k}{\ln\ln k}$. Equality (\ref{W73}) is obtained because $k^{y_i}=\frac{k}{\ln^2 k}\cdot (\ln\ln k)^i$, and hence, $k^{y_j-y_{j-1}}=\ln\ln k$.  

(3) If the algorithm stops at Step \ref{rich12}, then Step \ref{rich8} is executed $r+1$ times and 
Step \ref{rich88} is executed once. From Step~\ref{rich12}, we have $|L'| \ge \frac{1}{12}k^{(1+y_{r})/2}$. 
Again, for $i\in \{0,\ldots, r\}$, $\sigma k^{-(1+y_i)/2}>k>\ln (2k^2)$. 
By part (2) of Theorem \ref{thm:exactfitting}, 
the total running time $T_3(k)$ of Step \ref{rich8} throughout the algorithm is:
\begin{eqnarray}
T_3(k) &=& \bigoh(\sum_{i=0}^r ( \sigma \log (\frac{\sigma}{\sigma k^{-(1+y_i)/2}})+
        \frac{\sigma^2\log \sigma}{(\sigma k^{-(1+y_i)/2})^2}\cdot \log (\frac{\sigma}{\sigma k^{-(1+y_i)/2}}) ))  \label{A73}  \nonumber\\
       &=& \bigoh(k^2(\log k)(\ln\ln k)+k^{1+y_r}(\log^2 k)(\ln\ln k))    \label{A74} \nonumber\\
       &=& \bigoh(k^2\log^2 k ). \nonumber \label{A75}
\end{eqnarray}
The second equality above is obtained because $r=\bigoh(\ln\ln k)$ and the third equality is obtained 
because $k^{y_r} < k/\ln\ln k$. 
Step \ref{rich88} is executed once, and by part (2) of Theorem \ref{thm:exactfitting}, takes time $\bigoh(\sigma \log \frac{\sigma}{k} + \frac{\sigma^2\log \sigma }{k^2}\cdot \log \frac{\sigma}{k})=\bigoh(k^2\log^2 k)$. 
 
Again, since Step \ref{rich-666} takes time $\bigoh(k^2\log k)$ 
and is executed at most $r+2$ times, the total running 
time of Step \ref{rich-666} is $\bigoh((k^2\log k) \cdot r)=\bigoh(k^2(\log k)(\log\log k))$. 
Hence, $T(k)=\bigoh(k^2\log^2 k)$. Since $|L'| \ge \frac{1}{12}k^{(1+y_{r})/2}$, we have 
$T(k)/|L'|= \bigoh(k^{\frac{3-y_r}{2}}\log^2 k)$. 
Since $k^{y_r}\ge \frac{k}{(\ln\ln k)^2}$, $T(k)/|L'|=\bigoh(k(\log^2 k)(\log\log k))$.

(4) If the algorithm stops at Step \ref{rich15}, then the running time is the same as that in part (3) of this lemma, and hence is $\bigoh(k^2 \log^2 k)$.  
\end{proof}

\begin{corollary}
\label{cor:spacealgrl}
Given a point set $S'$ and a parameter $k \ge 16$, {\rm\bf Alg-SaturatedLines}$(S', k, r)$ runs in space $\bigoh(k^2\log^2 k)$. 
\end{corollary}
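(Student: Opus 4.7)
My plan is to bound the space of \textbf{Alg-SaturatedLines} by its maximum space per iteration, since the for loop runs only $r+2 = O(\log\log k)$ times and auxiliary storage can be discarded or overwritten between calls. The input $S'$ occupies $O(k^2)$ space, and the accumulated output $L'$ consists only of saturated lines: by Theorem~\ref{RichlineBound1} applied to the $(k+1)$-rich lines in $\sigma = 2k^2$ points (with $c \le 1$), we have $|L'| = O(k)$, contributing only $O(k)$ space.

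The bulk of the analysis concerns one call to \textbf{Alg-RichLines} on $(S', \lambda)$ with $|S'| = \sigma = 2k^2$ and $\lambda$ ranging over the values used in \textbf{Alg-SaturatedLines}, all of which satisfy $\lambda \ge k+1$. I would account for the sampled pairs of Step~2 and the line sets $Q_1, Q_2$ of Step~3, of size $x = O(\sigma^2 \log \sigma / \lambda^2) = O(k^4 \log k / \lambda^2)$; the sample $S'(m)$ of Step~6, of size $m = O(k^2 \log k / \lambda)$; and the working storage used by the incidence computation of Theorem~\ref{hopcroft}. For the smallest choice $\lambda = k+1$ this yields $x = O(k^2 \log k)$ and $m = O(k \log k)$, while for $\lambda = \sigma k^{-(1+y_i)/2}$ with $0 \le i \le r$ we have $x = O(k^{1+y_i} \log k) \le O(k^2 \log k)$ since $k^{y_i} \le k^{y_r} \le k/\ln\ln k$ by the bound on $r$ established in the proof of Lemma~\ref{lemma9}.

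Step~\ref{rich-666} of \textbf{Alg-SaturatedLines} performs point-location via Theorem~\ref{point-location} on the set of saturated lines returned by \textbf{Alg-RichLines}; since that set has size $O(k)$ as noted above, the preprocessing costs $O(k^2)$ space. Summing all contributions and taking the maximum across the $O(\log\log k)$ iterations with space reuse, the total is $O(k^2 \log^2 k)$, where the extra logarithmic factor comfortably absorbs the auxiliary overhead implicit in the data structures of Theorem~\ref{hopcroft}. The main obstacle in formalizing this is pinning down the precise working-space requirements of the Hopcroft-style incidence algorithm of Theorem~\ref{hopcroft}, which is standard in the cited literature but needs to be checked; the stated bound $O(k^2 \log^2 k)$ is intentionally generous so that this verification reduces to confirming that the incidence machinery fits within an additional $O(\log k)$ factor beyond its input and output sizes.
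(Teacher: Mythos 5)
Your proof takes a genuinely different route from the paper's, and the difference is instructive. The paper's entire proof is two lines: the space used by an algorithm is upper-bounded by its running time, and the worst-case running time of {\bf Alg-SaturatedLines} is $\bigoh(k^2\log^2 k)$ (Case (4) of Lemma~\ref{lemma9}, where the {\bf for loop} runs to completion); hence the space bound is immediate. Your explicit bookkeeping of $Q_1$, $Q_2$, $S'(m)$, the point-location structure, and the returned line set is all correct as far as it goes, and it is more informative in that it shows most of the storage is actually $\bigoh(k^2\log k)$ or smaller. But the one item you flag as unverified---the working space of the incidence machinery of Theorem~\ref{hopcroft}---is precisely the load-bearing part of the claim, since that subroutine is where both the time and the space of {\bf Alg-RichLines} are concentrated; leaving it as ``needs to be checked'' leaves the proof incomplete. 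The space-is-at-most-time observation closes this gap for free: the time analysis in Theorem~\ref{thm:exactfitting} and Lemma~\ref{lemma9} already bounds the total cost of every invocation of Theorem~\ref{hopcroft} within a single run of {\bf Alg-SaturatedLines} by $\bigoh(k^2\log^2 k)$, so its working space cannot exceed that either, with no need to open up the black box. I would either adopt that one-line argument outright, or keep your itemized accounting but replace the final hedged sentence with the remark that the incidence computation's space is dominated by its running time, which the time analysis has already bounded.
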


\begin{proof}
This follows from the fact that the space complexity of the algorithm is upper bounded by its time complexity, and that the worst-case time complexity of the algorithm occurs in Case (4) of Lemma~\ref{lemma9}, where the {\bf for loop} in Step~\ref{rich7} iterates until the end.  
\end{proof}
\fi

One technicality ensues from the definition of the saturation intervals. Since this definition entails using the term $\ln \ln \ln k$, $\ln \ln \ln k$ must be positive, and hence $k \geq 16$. This forces a separate treatment of instances in which $k \leq 15$. Note that, since $k=\bigoh(1)$, we could opt to use a brute-force algorithm in this case, or an \FPT{}-algorithm, but those would result in a polynomial running time of a higher degree than what is desired for our purpose. Instead, we provide an efficient linear-time algorithm for this special case in the following lemma:

\begin{lemma} \ifshort  {\rm ($\spadesuit$)} \fi
\label{lem:specialcase}
Given an instance $(S, k)$ of {\sc Line Cover}, where $|S|=n$ and $k \leq 15$, there is an algorithm that computes in $\bigoh(n)$ time and $\bigoh(1)$ space a kernel $(S', k')$ for $(S, k)$ such that $|S'| \leq k^2$.
\end{lemma}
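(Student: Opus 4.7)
The plan is to exploit the fact that $k \leq 15$ means $k = \bigoh(1)$, so every quantity polynomial in $k$ is a constant. I would stream the points of $S$ one at a time, maintaining a set $H$ of already-identified saturated lines (forced into every solution) and a working set $S'$ of yet-uncovered points, preserving the invariant $|S'| \leq k^2$ after each point is processed. Note that since $|H| \leq k$ and $|S'| \leq k^2$, both structures occupy only $\bigoh(1)$ space, independent of $n$.

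Concretely, initialize $H = \emptyset$, $S' = \emptyset$, and $k' = k$. For each incoming point $q$: (i) test whether $q$ lies on some line of $H$ in $\bigoh(|H|) = \bigoh(1)$ time, and discard $q$ if so; (ii) otherwise add $q$ to $S'$; (iii) if $|S'| > (k')^2$, enter a clean-up phase: while $|S'| > (k')^2$, search for a line covering more than $k'$ points of $S'$ by enumerating all $\bigoh(|S'|^2) = \bigoh(1)$ candidate lines through pairs of points in $S'$ and counting incidences in $\bigoh(|S'|) = \bigoh(1)$ time each; if such a line $\ell$ exists, add it to $H$, delete from $S'$ all points on $\ell$, and decrement $k'$ (with the additional check that if $k' < 0$ then terminate and output a trivial no-instance); if no such line exists while $|S'| > (k')^2$, terminate and output a trivial no-instance. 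Once the stream is exhausted, output $(S', k')$.

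Correctness rests on the standard kernelization rule: a line containing more than $k'$ remaining points must be in every solution of size $k'$, so greedy inclusion is safe. Conversely, if at some moment $|S'| > (k')^2$ and no line of $S'$ covers more than $k'$ of its points, then $k'$ lines can cover at most $(k')^2 < |S'|$ points of $S'$, and since $S'$ is a subset of the still-uncovered points of $S$, the instance must be a no-instance. Upon termination the invariant gives $|S'| \leq (k')^2 \leq k^2$, as required.

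For the running-time analysis, the per-point coverage test costs $\bigoh(|H|) = \bigoh(k)$, contributing $\bigoh(nk) = \bigoh(n)$ in total. The while-loop iterations across the whole execution amortize nicely: each iteration strictly decrements $k'$, so there are at most $k$ iterations overall, each costing $\bigoh(k^4)$ for the brute-force search plus $\bigoh(k^2)$ for the point removal, giving $\bigoh(k^5) = \bigoh(1)$ additional work. The only point that needs care---essentially the sole ``obstacle''---is to verify this amortization, i.e., to confirm that the inner while loop can trigger at most $k$ times over the entire stream rather than once per incoming point, which is immediate from the monotonic decrease of $k'$. Thus the total time is $\bigoh(n)$ and the space is $\bigoh(k^2) = \bigoh(1)$.
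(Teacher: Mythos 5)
Your proposal is correct and follows essentially the same approach as the paper: process the uncovered points in chunks bounded by roughly $k^2$, greedily extract lines covering more than the remaining budget $k'$ of points (which are forced into every solution), decrement the budget, and declare a no-instance if the chunk cannot be reduced; since $k\leq 15$, all per-chunk work and storage is $\bigoh(1)$. The only difference is cosmetic --- the paper finds the saturated lines in each batch via Guibas et al.'s algorithm while you use brute-force enumeration over pairs, and your amortization over the at most $k+1$ decrements of $k'$ matches the paper's bound on the number of batch iterations.
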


\iflong
\begin{proof}
If $|S| \leq k^2$ then the instance $(S, k)$ is the desired kernel. Otherwise, initialize the solution set (of lines) $H=\emptyset$, and initialize a subset $S' \subseteq S$ to the empty set. Repeat the following process until either $|H| > k$ or until all points in $S$ have been considered.  

Add points from $S$ that are not covered by $H$ into $S'$ until $|S'|=k^2+1$. If $(S, k)$ is a yes-instance of {\sc Line Cover} then there must exist a saturated line w.r.t.~$S'$. We apply Guibas et al.'s algorithm~\cite{Guibas1996} to compute the set $L$ of saturated lines w.r.t.~$S'$. If $L = \emptyset$ or $|H| + |L| > k$, then return a trivial no-instance to {\sc Line Cover}; otherwise, add $L$ to $H$, decrease $k$ by $|L|$, and update $S'$ by removing from it all points covered by $L$.

When the above process is completed, return $(S', k)$ as the kernel. 

It is clear that the above algorithm returns a kernel of size at most $k^2$. We analyze its time and space complexity next.

Guibas et al.'s algorithm~\cite{Guibas1996} runs in time $\bigoh((|S'|^2/(k+1)) \ln (\frac{|S'|}{k+1}))=\bigoh(k^3\ln k)=\bigoh(1)$ and clearly uses $\bigoh(1)$ space. Updating $S$ can be done in $\bigoh(n)$ time and $\bigoh(1)$ space in a straightforward fashion, and all other operations can be performed in $\bigoh(1)$ time and space. Since the above process is repeated at most $k+1=\bigoh(1)$ times, the algorithm runs in $\bigoh(n)$ time and $\bigoh(1)$ space. 
\end{proof}
\fi

\begin{lemma} \ifshort {\rm ($\spadesuit$)} \fi
\label{lem:kerneltime}
Given an instance $(S, k)$ of {\sc Line Cover}, where $|S|=n$, {\rm\bf Alg-Kernel} runs in time 
$\bigoh(n\log k+ k^2(\log^2 k)(\log\log k)^2)$ and space $\bigoh(k^2\log^2 k)$.
\end{lemma}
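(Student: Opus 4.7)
The plan is to decompose the total running time of {\rm\bf Alg-Kernel} into four contributions: (i) filtering input points at Steps \ref{ker4}--\ref{ker6}; (ii) the calls to {\rm\bf Alg-SaturatedLines} at Step \ref{ker8}; (iii) the single final call to {\rm\bf Alg-RichLines} at Step \ref{ker13}; and (iv) maintaining the point-location structure $\Gamma_H$. The base case $k\le 15$ is dispatched by Step \ref{special} via Lemma \ref{lem:specialcase} in $\bigoh(n)$ time and $\bigoh(1)$ space, so in what follows I assume $k\ge 16$.

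Contribution (i) is straightforward: by Theorem \ref{point-location}, each coverage query on $\Gamma_H$ (with $|H|\le k+1$) costs $\bigoh(\log k)$, and each of the $n$ points of $S$ is examined at most once, giving $\bigoh(n\log k)$ in total. For (iii), applying Theorem \ref{thm:exactfitting} to a batch of $\bigoh(k^2)$ points with $\lambda=k+1$ yields $\bigoh(k^2\log^2 k)$. For (iv), since $|H|$ never exceeds $k+1$, I will rebuild $\Gamma_H$ only when $|H|$ doubles in size and answer queries by combining the current main structure with a small buffer of recently added lines; by the standard doubling argument this keeps the total rebuilding cost at $\bigoh(k^2)$ and the per-query cost at $\bigoh(\log k)$.

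The heart of the argument is contribution (ii), analyzed by conditioning on the four termination cases of {\rm\bf Alg-SaturatedLines} in Lemma \ref{lemma9}. Iterations ending at Step \ref{rich10} or Step \ref{rich12} have amortized cost $\bigoh(k(\log^2 k)(\log\log k)^2)$ per saturated line returned; since {\rm\bf Alg-Kernel} halts as soon as $|H|>k$ (via Step \ref{ker15}) or when {\rm\bf Alg-SaturatedLines} returns an empty set (Step \ref{ker9}), the total number of saturated lines charged across all these iterations is $\bigoh(k)$, summing to $\bigoh(k^2(\log^2 k)(\log\log k)^2)$. An iteration ending at Step \ref{rich15} yields $L'=\emptyset$, triggers Step \ref{ker9}, and thus happens at most once, contributing $\bigoh(k^2\log^2 k)$.

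The main obstacle is controlling the iterations that end at Step \ref{rich-6}, because the per-call cost $\bigoh(k^2\log k)$ would otherwise sum to $\bigoh(k^3\log k)$ if there were $\Theta(k)$ such iterations. To resolve this, let $a$ denote the number of such iterations. Each one covers at least $k^2/3$ points of the current $S'$, and therefore forces the next outer-loop iteration to draw at least $k^2/3$ fresh points from $S$ in order to restore $|S'|=2k^2$ at Steps \ref{ker4}--\ref{ker6}. Since at most $n$ fresh points are ever drawn from $S$, a charging argument yields $a=\bigoh(n/k^2)$, so the total contribution of these iterations collapses to $\bigoh(a\cdot k^2\log k)=\bigoh(n\log k)$. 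Summing the four contributions gives the time bound $\bigoh(n\log k+k^2(\log^2 k)(\log\log k)^2)$. For space, Corollary \ref{cor:spacealgrl} bounds a single invocation of {\rm\bf Alg-SaturatedLines} by $\bigoh(k^2\log^2 k)$, which dominates the $\bigoh(k^2)$ bounds for $\Gamma_H$ (by Theorem \ref{point-location}), $S'$, and $H$, yielding the stated $\bigoh(k^2\log^2 k)$ space.
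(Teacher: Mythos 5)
Your decomposition and, crucially, your analysis of contribution (ii) coincide with the paper's own argument: the paper likewise charges the Step~\ref{rich-6} calls to the $\Omega(k^2)$ points they remove (so their total is $\bigoh(n\log k)$), bounds the Step~\ref{rich10}/\ref{rich12} calls by the amortized cost per returned line from Lemma~\ref{lemma9} times the $\bigoh(k)$ total lines, and observes that an empty return happens at most once. The final call at Step~\ref{ker13} and the space bound via Corollary~\ref{cor:spacealgrl} also match.

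The genuine gap is in contribution (iv), the maintenance of $\Gamma_H$. Your scheme rebuilds the static structure of Theorem~\ref{point-location} only when $|H|$ doubles and answers queries against ``the current main structure plus a small buffer of recently added lines.'' That buffer is not small: between two consecutive rebuilds the number of un-incorporated lines can be $\Theta(|H|)=\Theta(k)$ (indeed a single batch $L'$ can contribute up to $160k$ lines, and $|H|$ can reach $161k$, not $k+1$). You give no sublinear query mechanism for the buffer, and a brute-force scan costs $\Theta(k)$ per point, turning the filtering cost of Steps~\ref{ker4}--\ref{ker6} into $\Theta(nk)$ rather than $\bigoh(n\log k)$. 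The doubling argument controls only the rebuild cost, not the query cost, so as written the $\bigoh(n\log k)$ term does not follow. The paper avoids this by using the dynamic planar point-location structure of Chan and Nekrich for $\Gamma_H$, with $\bigoh(\log k)$ queries and $\bigoh(\log^{1+\epsilon}k)$ updates, and charges the $\bigoh(k^2)$ edges of the final arrangement to get a total update cost of $\bigoh(k^2\log^2 k)$. Your approach could be repaired without dynamic structures --- e.g., rebuild the Chazelle structure from scratch once per outer-loop iteration and separately bound the number of iterations by $\bigoh(n/k^2)$ (case-(1) iterations) plus $\bigoh(k/\min|L'|)$ (case-(2)/(3) iterations), which keeps the total rebuild cost within the claimed bound --- but that count is an argument you would still need to supply.
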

\iflong
\begin{proof}
Step~\ref{special}  takes $\bigoh(n)$ time by Lemma~\ref{lem:specialcase}. Step \ref{ker1}, Step \ref{ker22}, Step \ref{ker9}, Step \ref{ker15}, Step \ref{ker16}, and Step \ref{ker17}  take time $\bigoh(1)$.  Step \ref{ker6}
takes time $\bigoh(\log k)$ per point of $S$. Thus, the overall running time of Steps \ref{ker4}--\ref{ker6} 
is $\bigoh(n\log k)$. 

Now we bound the total running time of Steps~\ref{ker10} and~\ref{ker14}. To implement these steps, we use the structure proposed by Chan and Nekrich~\cite{chan2018} as 
the search structure $\Gamma_H$, which, for a plane subdivision of size $n'$, uses space $\bigoh(n')$, 
and supports $\bigoh(\log n')$ (deterministic) query time and $\bigoh(\log^{1+\epsilon} n')$ (for any $\epsilon > 0$) (deterministic) update time. Before each execution of Step \ref{ker10}, $|H|\le k$ must hold. After executing 
Step \ref{ker10}, since 
$|S'|=2k^2$ before executing Step \ref{ker8}, by Theorem~\ref{RichlineBound1}, we have $|L'|< \frac{40 \cdot |S'|^2}{k^3}<160k$. Therefore, after Step \ref{ker10} is executed, we have 
$|H|<160k+k=161k$.  
For Step \ref{ker14}, 
Step \ref{ker14} is executed at most once. 
Before executing Step~\ref{ker14}, we have 
$|H|\le k$. At Step \ref{ker12}, 
since $k^2<|S'|<2k^2$ and 
$k<\sqrt{|S'|}$ hold at this point, by Theorem~\ref{RichlineBound1}, we have $|L'|< \frac{40 \cdot |S'|^2}{k^3}<160k$. Thus, $|H|<k+160k=161k$ after 
executing Step \ref{ker14}. Since $|H| =\bigoh(k)$, the structure $\Gamma_H$ supports $\bigoh(\log k)$ query time and $\bigoh(\log^{1+\epsilon} k)=\bigoh(\log^{2} k)$ update time. 
Since $|H| =\bigoh(k)$, the plane subdivision determined by the lines in $H$ and represented in $\Gamma_H$ will contain $\bigoh(k^2)$ edges when the algorithm terminates. It follows that the overall running time for constructing and updating $\Gamma_H$ (i.e., Steps~\ref{ker10} and~\ref{ker14}) throughout the algorithm is $\bigoh(k^2 \log^2 k)$.
At Step \ref{ker14}, removing the points in $S'$ covered by $L'$ takes time $\bigoh(k^2\log k)$. 
Hence, the total running time of Step \ref{ker10} 
and \ref{ker14} is $\bigoh(k^2\log^2 k)$.

Next, we bound the running time of Step \ref{ker8}. Note that at most one call to
{\bf Alg-SaturatedLines}
$(S', k, r)$ could return $L'=\emptyset$, because {\bf Alg-Kernel} would stop if the returned set $L'=\emptyset$; 
moreover, by Lemma \ref{lemma9}, the running time incurred in such call is $\bigoh(k^2\log^2 k)$. Therefore, we can focus now on the cases where \\ {\bf Alg-SaturatedLines}$(S',k, r)$ returns $L'$ such that $L' \neq \emptyset$. 

Each call to {\bf Alg-SaturatedLines} that returns a set $L' \neq \emptyset$ must return a set $L'$ satisfying: (1) $L'$ covers at least $k^2/3$ points; (2) $|L'| \ge \frac{1}{12r}k^{(1+y_{j-1})/2}$; or (3) $|L'| \ge \frac{1}{12}k^{(1+y_{r})/2}$. This is true due to Lemma~\ref{lemma9}: if the set $L'$ returned by {\bf Alg-SaturatedLines} is not empty (Case (4) of the lemma), then either $L'$ covers at least $k^2/3$ points (Case (1) of the lemma), 
 $|L'| \ge \frac{1}{12r}k^{(1+y_{j-1})/2}$ (Case (2) of the lemma), or  $|L'| \ge \frac{1}{12}k^{(1+y_{r})/2}$
(Case (3) of the lemma).

Consider the overall running time of the calls to {\bf Alg-SaturatedLines}$(S',k, r)$ in which {\bf Alg-SaturatedLines} returns $L'$ that covers at least $k^2/3$ points. By Lemma \ref{lemma9}, each such call to {\bf Alg-SaturatedLines}$(S', k, r)$ takes time $\bigoh(k^2\log k)=\bigoh(|S'| \log k)$. 
Since each call removes at least $k^2/3=\Omega(|S'|)$ points, it follows that the overall running time for these calls is $\bigoh(n\log k)$. 
For the running time of the calls to {\bf Alg-SaturatedLines}$(S',k, r)$ which result in $|L'| \ge \frac{1}{12r}k^{(1+y_{j-1})/2}$, note that there can be at most $(k+1)/|L'|$ such calls.
By Case (2) of Lemma \ref{lemma9}, the running time of each such call is $\bigoh(|L'| \cdot k(\log^2 k)(\log\log k)^2)$. 
Therefore, the overall running time for these calls is $\bigoh(k^2(\log^2 k)(\log\log k)^2)$. Similarly, there can be at most $(k+1)/|L'|$ 
calls to {\bf Alg-SaturatedLines}$(S',k, r)$ which result in $|L'| \ge \frac{1}{12}k^{(1+y_{r})/2}$. By Case (3) of Lemma \ref{lemma9}, the running time of each call is  $\bigoh(|L'| \cdot k(\log^2 k)(\log\log k))$. Therefore, the overall running time for these calls is $\bigoh(k^2(\log^2 k)(\log\log k))$. It follows that the total 
running time of Step~\ref{ker8} is $\bigoh(n\log k+k^2(\log^2 k)(\log\log k)^2)$.

Finally, we bound the running time of Step~\ref{ker13}. This step is executed at most once since $|S'|<2k^2$ holds, which implies that there are no points left in $S$. 
By Theorem \ref{thm:exactfitting}, with probability at least $1-\frac{3}{|S'|^2}>1-\frac{3}{k^4}$, the set
$L'$ computed at Step \ref{ker13} includes all the saturated lines w.r.t.~$S'$. 
By part (2) of Theorem \ref{thm:exactfitting}, the running time of Step \ref{ker13} is $\bigoh(k^2\log^2 k)$. 

Altogether, the running time of {\bf Alg-SaturatedLines} is $\bigoh(n\log k+k^2(\log^2 k)(\log\log k)^2)$. 

Now we analyze the space complexity of the algorithm. First, Step~\ref{special} runs in $\bigoh(1)$ space by Lemma~\ref{lem:specialcase}. Observe that, in each iteration of the {\bf while loop} in Step~\ref{ker2}, the space used by the algorithm is dominated by (1) the space used for storing the sets $H$, $L'$, and $S'$, 
(2) the space used for constructing, updating, and storing the structure $\Gamma_H$, 
and (3) the space utilized to run the two algorithms {\bf Alg-SaturatedLines} and {\bf Alg-RichLines}. Since both $|H|$ and $|L'|$ are $\bigoh(k)$, and since $|S'| \leq 2k^2$, the space used for (1) is $\bigoh(k^2)$.  From the discussion above, 
constructing and updating takes time $\bigoh(k^2\log^2 k)$ and so the space is bounded by $\bigoh(k^2\log^2 k)$. 
Since the size of $\Gamma_H$ is $\bigoh(k^2)$, the space used to store $\Gamma_H$ is $\bigoh(k^2)$~\cite{chan2018}. By Corollary \ref{cor:spacealgrl}, the space used by {\bf Alg-SaturatedLines} is $\bigoh(k^2\log^2 k)$. 
By part (2) of Theorem~\ref{thm:exactfitting}, the time used by {\bf Alg-RichLines} is $\bigoh(k^2\log^2 k)$ and so the space is bounded by $\bigoh(k^2\log^2 k)$. It follows that the space complexity of the algorithm is $\bigoh(k^2\log^2 k)$.
\end{proof}
\fi

\begin{proof}[{\bf Proof of Theorem~\ref{thm:mainlc} stated at the beginning of this section}]
The time and space complexity of the algorithm follow from Lemmas~\ref{lem:specialcase} and~\ref{lem:kerneltime}. We prove its correctness next. The correctness of Step~\ref{special} was proved separately in Lemma~\ref{lem:specialcase}, so we may assume that $k \geq 16$.

Suppose that $(S, k)$ is a no-instance of {\sc Line Cover}. Observe that whenever the algorithm includes a subset $L'$ of lines into the solution $H$ (in Steps~
\ref{ker10} and \ref{ker13}) (and updates $S'$), then the lines in $L'$ are saturated lines, and hence, must be part of \emph{every} solution to the instance $(S, k)$. Therefore, either the algorithm returns an instance in Step~\ref{ker16} that must be a no-instance by the above observation, or returns a (trivial) no-instance in Step \ref{ker9}, \ref{ker15}, or \ref{ker17}. It follows from above that if $(S, k)$ is a no-instance of {\sc Line Cover}, then {\bf Alg-Kernel} returns a no-instance $(S', k')$. This proves part (2) of the theorem.

Suppose now that $(S, k)$ is a yes-instance of {\sc Line Cover}, and hence, that $S$ can be covered by at most $k$ lines. By Step \ref{ker9}, if $L'=\emptyset$, then the algorithm will stop. Thus, Steps \ref{ker7}--\ref{ker10} will be executed at most $k+1$ times.
Consider a single execution of Steps~\ref{ker7}--\ref{ker10}. 
By Lemma \ref{lemma3}, if $S'$ can be covered with at most $k$ saturated lines, then, with probability at least $1-\frac{1}{k^4}$, {\bf Alg-SaturatedLines}$(S', k, r)$
returns a non-empty set $L'$. That is to say, 
{\bf Alg-SaturatedLines}$(S', k, r)$ fails with probability at most $\frac{1}{k^4}$. 
By the union bound, 
{\bf Alg-Kernel}$(S, k)$ fails during the execution of Steps \ref{ker7}--\ref{ker10} 
with probability at most 
$\frac{k+1}{k^4}$. 
At Step \ref{ker13}, by Theorem~\ref{thm:exactfitting}, with probability at least $1-\frac{3}{|S'|^2} > 
1-\frac{3}{k^4}$, 
{\bf Alg-RichLines}$(S',k)$ finds all the saturated lines in $S'$. After that, we have $|S'| \le k^2$. 
By the union bound, with probability at least $1-\frac{k+1}{k^4}-\frac{3}{k^4} > 1-\frac{2}{k^3}$ (since $k \geq 16$), 
{\bf Alg-Kernel}$(S, k)$ returns a kernel $(S', k')$ of $(S, k)$ satisfying $|S'| \le k^2$. This proves part (1) of the theorem. 
\end{proof}

We conclude this section by giving a deterministic kernelization algorithm for {\sc Line Cover}. 
Recall that {\bf Alg-RichLines} is a randomized algorithm for computing all $\lambda$-rich lines and that
Guibas et al.'s algorithm~\cite{Guibas1996} is a deterministic algorithm for the same purpose. 
We can replace {\bf Alg-RichLines} with Guibas et al.'s algorithm~\cite{Guibas1996} in the algorithms {\bf Alg-SaturatedLines}
and {\bf Alg-Kernel} to obtain a deterministic kernelization algorithm from {\bf Alg-Kernel} after this replacement. 
We can optimize the running time of this deterministic algorithm by fine-tuning the lengths of the defined intervals $I_0, \ldots, I_{r+1}$.

\begin{theorem} \ifshort {\rm ($\spadesuit$)} \fi
\label{thm:deterministickernel}
There is a deterministic kernelization algorithm for {\sc Line Cover} that, given an instance $(S, k)$ of {\sc Line Cover}, where $|S|=n$, the algorithm runs in time 
$\bigoh(n\log k+k^3(\log^3 k) \sqrt{\log\log k})$ and computes a kernel $(S', k')$ such that $|S'|\le k^2$.
\end{theorem}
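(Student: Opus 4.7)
The plan is to adapt the scheme of Theorem~\ref{thm:mainlc}, replacing every invocation of {\bf Alg-RichLines} inside {\bf Alg-SaturatedLines} and {\bf Alg-Kernel} by Guibas et al.'s deterministic algorithm~\cite{Guibas1996}, which computes the set of $\lambda$-rich lines in time $\bigoh(\frac{n^2}{\lambda}\log\frac{n}{\lambda})$. Since this subroutine is deterministic and exact, the failure events appearing in Lemmas~\ref{lemma3} and~\ref{lem:kerneltime} disappear and correctness becomes immediate: every line placed into $H$ is genuinely saturated w.r.t.~the current batch, and hence lies in every line cover of $(S,k)$ of size at most $k$. The only remaining question is the running time.

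The nontrivial step is to re-engineer the saturation spectrum $I_0,\ldots,I_{r+1}$ to compensate for the fact that Guibas et al.'s algorithm is slower than {\bf Alg-RichLines} by a factor of roughly $k/\log k$ at the relevant thresholds. The combinatorial scaffolding (Lemma~\ref{distribution}) is preserved verbatim, but I would retune the increment $\epsilon$ between consecutive $y_i$'s. With threshold $\lambda_i=\sigma k^{-(1+y_i)/2}$ on $\sigma=2k^2$ points, Guibas et al.'s algorithm costs $\bigoh(k^{2+(1+y_i)/2}\log k)$ per iteration; Lemma~\ref{distribution} in turn forces {\bf Alg-SaturatedLines} to halt at an index $j$ for which either the top-interval lines cover $\Omega(k^2)$ points of $S'$ or $|L'|=\Omega(k^{(1+y_{j-1})/2}/r)$. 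Minimizing the amortized cost $(\sum_{i\le j}k^{2+(1+y_i)/2}\log k)/|L'|$ then pins down the optimal spacing: choose $\epsilon=\Theta(\log\log\log k/\log k)$ so that $k^{\epsilon/2}=\Theta(\sqrt{\log\log k})$, and set $y_0,\ldots,y_r$ so that $r=\bigoh(\log\log k)$ and $k^{y_r}$ lies within a $\sqrt{\log\log k}$ factor of $k$. A direct calculation then shows that, in each of the three cases distinguished by the deterministic analog of Lemma~\ref{lemma3}, the amortized cost per saturated line is $\bigoh(k^2(\log^3 k)\sqrt{\log\log k})$.

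With this retuned spectrum the rest of the analysis mirrors Lemma~\ref{lem:kerneltime}. The $\bigoh(n\log k)$ term is unchanged, as it arises from the $\bigoh(\log k)$-time point-location queries (Theorem~\ref{point-location}) performed in Step~\ref{ker6} against the structure $\Gamma_H$ of size $\bigoh(k^2)$ and is independent of the rich-line subroutine. Summing the amortized cost over the at most $k$ saturated lines in any line cover of $(S,k)$ yields $\bigoh(k^3(\log^3 k)\sqrt{\log\log k})$; the single final Guibas call in Step~\ref{ker13} on the residual batch of fewer than $2k^2$ points contributes an additional $\bigoh(k^3\log k)$, which is absorbed into the dominant term. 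The output is therefore a kernel of size at most $k^2$ within the claimed running time.

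The main obstacle I anticipate is the calibration step: verifying that $\epsilon=\Theta(\log\log\log k/\log k)$ with $r=\bigoh(\log\log k)$ simultaneously discharges all three cases of the deterministic analog of Lemma~\ref{lemma3} at the same amortized rate, with no case blowing up. This requires tracking $(\log k)$- and $(\log\log k)$-factors carefully through the algebraic estimates exactly as in the proof of Lemma~\ref{lemma9}, since the $\sqrt{\log\log k}$ factor in the final bound is tight only under the balanced choice of $\epsilon$ identified above.
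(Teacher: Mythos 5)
Your high-level plan is the same as the paper's: replace {\bf Alg-RichLines} by Guibas et al.'s deterministic algorithm, keep the combinatorial scaffolding of Lemma~\ref{distribution} (which indeed carries over verbatim to any choice of the $y_i$'s), and re-amortize per saturated line. You also correctly identify that the step size should satisfy $k^{\epsilon/2}=\Theta(\sqrt{\log\log k})$, i.e.\ $\epsilon=\Theta(\log\log\log k/\log k)$, which is exactly the paper's choice. The gap is in the other two knobs of your calibration, and it is fatal to the running-time claim. You insist on $r=\bigoh(\log\log k)$ with $k^{y_r}$ within a $\sqrt{\log\log k}$ factor of $k$; together with your $\epsilon$ this forces $k^{y_0}=k^{y_r}\cdot k^{-r\epsilon}=k\cdot(\log\log k)^{-\Theta(\log\log k)}=k^{1-o(1)}$, so the top threshold is $\lambda_0=\sigma k^{-(1+y_0)/2}=2k\cdot(\log\log k)^{\Theta(\log\log k)}=k^{1+o(1)}$, barely above the saturation level $k+1$. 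The very first Guibas call then costs $\Theta(\frac{\sigma^2}{\lambda_0}\log\frac{\sigma}{\lambda_0})=k^{3-o(1)}\log k$ --- essentially the cost of computing all saturated lines outright --- and this term appears in \emph{every} invocation of the deterministic {\bf Alg-SaturatedLines}. It cannot be charged to a single returned line (your case~(1) may return $|L'|=1$), nor to the $\Omega(k^2)$ points removed (that would give $\Omega(nk^{1-o(1)}\log k)$, e.g.\ already for $n=k^{2.5}$ this exceeds both $n\log k$ and $k^3\log^3 k\sqrt{\log\log k}$). So under your parameters the claimed bound fails.

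The correct fix, which is what the paper does, is to push the \emph{top} of the spectrum much higher rather than to keep $r$ small: set $y_0=\frac{2\ln\ln k}{\ln k}-1$, so that $k^{1+y_0}=\ln^2 k$ and the highest threshold is $\sigma/\ln k$, making the $i=0$ Guibas call cost only $\bigoh(k^2(\log k)(\log\log k))$, which is affordable per call even when $|L'|=1$. The price is that $r$ becomes the number of multiplicative steps of size $k^{\epsilon}=\ln\ln k$ needed to climb from $k^{y_0}=(\ln k)^2/k$ up to $k/(\ln\ln k)^{5}$, namely $r=\Theta(\log k/\log\log\log k)=\bigoh(\log k)$, not $\bigoh(\log\log k)$. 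This larger $r$ is not a cosmetic detail: it is precisely where two of the three $\log k$ factors in the final bound come from (one from bounding the sum of the per-iteration costs by $r$ times the largest, one from the $1/(12r)$ loss in the guarantee $|L'|\ge\frac{1}{12r}k^{(1+y_{j-1})/2}$ of Lemma~\ref{distribution}). Your proposal, taken at face value, would yield only $\log k\cdot\mathrm{polyloglog}$ factors from these sources, i.e.\ a bound you cannot actually achieve, which is a further sign that the calibration is inconsistent. Everything else in your argument (correctness being immediate from determinism, the $\bigoh(n\log k)$ term from point location, the final $\bigoh(k^3\log k)$ Guibas call on the residual batch) is fine.
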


\iflong
\begin{proof}
Let $\epsilon=\frac{\ln\ln\ln k}{\ln k}$. For $i \in \mathbb{N}$, let 
$y_i=\frac{2\ln\ln k}{\ln k}-1+i\epsilon$. Let $r$ be the minimum integer such that $k^{y_r} \ge \frac{k}{(\ln\ln k)^5}$,
and note that $r=\bigoh(\ln k)$. The refined intervals become: $I_0=[\frac{1}{k^{1+y_0}}, 1]$, 
$I_i=[\frac{1}{k^{1+y_i}}, \frac{1}{k^{1+y_{i-1}}})$, for $i=1, 2, \ldots, r,$ and 
$I_{r+1}=[\frac{(k+1)k}{\sigma(\sigma-1)}, \frac{1}{k^{1+y_r}})$. 

Replace {\bf Alg-RichLines} with Guibas et al.'s algorithm~\cite{Guibas1996} in {\bf Alg-SaturatedLines} and {\bf Alg-Kernel},
and call the deterministic kernelization algorithm obtained from {\bf Alg-Kernel} after this replacement {\bf Alg-DetKernel}.
The correctness of {\bf Alg-DetKernel} is obvious. We analyze its running time next. 

When $k\le 15$, {\bf Alg-DetKernel} runs in linear time obviously. We assume henceforth that 
$k\ge 16$. 

First, we analyze the running time of {\bf Alg-SaturatedLines} after replacing {\bf Alg-RichLines} with Guibas et al.'s algorithm~\cite{Guibas1996}. 
In the $i$-th iteration, if $i\le r$, it follows from Guibas et al.'s algorithm~\cite{Guibas1996} that the running time of 
Steps \ref{rich8}-\ref{rich88} is $\bigoh\left( \frac{\sigma^2}{\sigma k^{-(1+y_i)/2}}
\log \frac{\sigma}{\sigma k^{-(1+y_i)/2}}\right)$;  otherwise, the running time of Steps \ref{rich8}-\ref{rich88} is 
$\bigoh(\frac{\sigma^2}{k}\log \frac{\sigma}{k})$.
 Let $T(k)$ be the running time of {\bf Alg-SaturatedLines}. We have the following:

Case (1).  if the algorithm stops at Step \ref{rich-6}, then $L'$ covers at least $\frac{k^2}{3}$ points and 
hence $|L'|\ge 1$. 
We have $T(k) = \bigoh( \frac{\sigma^2}{\sigma k^{-(1+y_0)/2}}
\log \frac{\sigma}{\sigma k^{-(1+y_0)/2}})$. Since $y_0=\frac{2\ln\ln k}{\ln k}-1$, we have 
$k^{1+y_0}=\ln^2 k$. Thus, $T(k)=\bigoh(k^2(\log k)(\log\log k))$ and $T(k)/|L'|=\bigoh(k^2(\log k)(\log\log k))$.

Case (2).  If the algorithm stops at  Step \ref{rich10} with the iterator $i$ of the {\bf for loop} having value $j$, then 
 $|L'|\ge \frac{1}{12r}k^{(1+y_{j-1})/2}$.
 We have $T(k)=\bigoh( \sum_{i=0}^j \frac{\sigma^2}{\sigma k^{-(1+y_i)/2}} \log \frac{\sigma}{\sigma k^{-(1+y_i)/2}})$. Since 
 $r=\bigoh(\ln k)$, we have $T(k)=\bigoh(\sigma (k^{(1+y_j)/2})(\log (k^{(1+y_j)/2})) \cdot \ln k)$
 $=\bigoh(k^2(\log^2 k) (k^{(1+y_j)/2}))$. Recall that, by the definition of $y_j$, we have 
 $y_j-y_{j-1}=\frac{\ln\ln\ln k}{\ln k}$. 
 Hence,  $T(k)/|L'|=\bigoh(k^2(\log^3 k)  (k^{(y_j-y_{j-1})/2}))=\bigoh(k^2(\log^3 k) \sqrt{\log\log k})$. 
 
Case (3).  If the algorithm stops at Step \ref{rich12}, then $|L'| \ge \frac{1}{12}k^{(1+y_{r})/2}$. 
As a consequence, we have $T(k)=\bigoh( \sum_{i=0}^r \frac{\sigma^2}{\sigma k^{-(1+y_i)/2}}$
$\log \frac{\sigma}{\sigma k^{-(1+y_i)/2}}$
$+\frac{\sigma^2}{k}\log \frac{\sigma}{k})=\bigoh(k^2(\log^2 k) \cdot k^{(1+y_r)/2}+k^3\log k)$.
Hence, $T(k)/|L'|=\bigoh(k^2\log^2 k + k^{3-(1+y_r)/2}\log k)$. 
Since $r$ is the minimum integer such that  $k^{y_r} \ge \frac{k}{(\log\log k)^5}$, 
we have 
$k^{y_{r-1}} < \frac{k}{(\ln\ln k)^{5}}$. Thus, $k^{y_r}=k^{y_{r-1}+\epsilon}<\frac{k}{(\ln\ln k)^4}$.
Therefore, $T(k)=\bigoh(k^3(\log k/\log\log k)^2)$ and 
$T(k)/|L'| =\bigoh(k^2\log^2 k)$.

Case (4). Otherwise, the algorithm stops at Step \ref{rich15}, and $T(k)=\bigoh(k^3(\log k/\log\log k)^2)$, 
 which is the same running time as in case (3). 
 
 Taking into account cases (1)--(4), the running time $T(k)=\bigoh(|L'| \cdot k^2(\log^3 k) \sqrt{\log\log k})$ if 
 $|L'|\ge 1$ and $T(k)= \bigoh(k^3(\log k/\log\log k)^2)$ otherwise. 
 
 Now, we upper bound the running time of {\bf Alg-DetKernel}. The 
 running time of Steps \ref{special},  \ref{ker1}, \ref{ker22},  \ref{ker4}, \ref{ker6}, 
 \ref{ker9}, \ref{ker10}, \ref{ker12}, \ref{ker14}, \ref{ker15}, \ref{ker16},  and \ref{ker17} is the same 
 as in {\bf Alg-Kernel}, which is $\bigoh(n\log k+k^2\log^2 k)$.  
 Thus, we mainly focus on analyzing
 the running time of Steps \ref{ker8} and \ref{ker13}. Similar to the proof of Lemma \ref{lem:kerneltime}, 
 at Step \ref{ker8}, at most one call to  {\bf Alg-SaturatedLines} returns $L'=\emptyset$, since {\bf Alg-DetKernel} would stop if the returned set 
 $L'=\emptyset$. Moreover, the running time incurred in such call is 
 $\bigoh(k^3(\log k/\log\log k)^2)$. 
  
 Consider the overall running time of the calls to {\bf Alg-SaturatedLines} in which $|L'|\neq \emptyset$.  
 Since there are at most $k+1$ calls to {\bf Alg-SaturatedLines} such that $|L'|\neq \emptyset$, 
 the overall running time is $\bigoh(|L'| \cdot k^2(\log^3 k) \sqrt{\log\log k} \cdot \frac{k+1}{|L'|})$
 $=\bigoh(k^3(\log^3 k) \sqrt{\log\log k})$.
 
 Step \ref{ker13} applies Guibas et al.'s algorithm~\cite{Guibas1996}, which runs in time 
 $\bigoh(\frac{|S'|^2}{k}\log \frac{|S'|}{k})=\bigoh(k^3\log k)$ since $k^2+1\le |S'|<2k^2$.
 
Altogether, the running time of {\bf Alg-DetKernel} is $\bigoh(n\log k+k^3(\log^3 k) \sqrt{\log\log k})$. 
\end{proof}
\fi

\iflong
\section{Lower Bounds}
\label{sec:lowerbounds}
In this section, we establish time-complexity lower-bound results for {\sc Line Cover} and {\sc Rich Lines} in the algebraic computation trees model~\cite{burgisser}. 

An \emph{algebraic computation tree} is a binary tree whose internal nodes represent arithmetic operations and tests. Thus, a computation step in this model is either an arithmetic
operation or a test, and the computation branches according to the outcome of the operation/test. Each leaf of the tree represents a description of the solution in terms of the input. An algebraic computation tree can be used to depict the execution of an algorithm on a specific input size in a standard way, where for each input that has the specific size, the computation of the algorithm follows a root-leaf path in the tree, performing the operations along this path. The value at a leaf is the algorithm’s output. The running time of the algorithm is the length of the root-leaf path traversed, and the worst-case running time of the algorithm for a specific input size is the depth of the tree. 

The algebraic computation trees model is a more powerful model than the real-RAM model~\cite{adt}, which is the model of computation that is most commonly used to analyze geometric algorithms~\cite{preparata}. The lower-bound results we derive in the algebraic computation trees model apply to the real RAM model as well; for more details see~\cite{adt}.

\subsection{{\sc Line Cover}}
\label{subsec:lclower}
 In this subsection, we are concerned with deriving lower bounds on the time complexity of kernelization algorithms for {\sc Line Cover} in the algebraic computation trees model. To do so, we combine a lower-bound result by Grantson and Levcopoulos~\cite{Grantson2006} on the time complexity of {\sc Line Cover}, derived using the framework introduced by Ben-Or~\cite{benor}, with a result that we prove below connecting the time complexity for solving {\sc Line Cover} to its kernelization time complexity. We remark that, since {\sc Line Cover} is NP-hard~\cite{megiddo} when the parameter $k$ is unbounded, Grantson and Levcopoulos'~\cite{Grantson2006} time complexity lower-bound result for {\sc Line Cover} is interesting only when $k$ is ``small'' relative to the input size, and should be read this way.

\begin{theorem}[Grantson and Levcopoulos~\cite{Grantson2006}]
\label{lowerb}
There exists a constant $c > 0$ such that, for every positive $n, k \in \Nat$ satisfying $k= \bigoh(\sqrt{n})$, {\sc Line Cover} requires time at least $c \cdot n \log k$ in the algebraic computation trees model.
\end{theorem}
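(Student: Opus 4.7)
The plan is to use Ben-Or's algebraic computation trees framework~\cite{benor} together with a linear-sized reduction from a simpler problem. First, I would establish an $\Omega(n \log k)$ lower bound for ``bichromatic set inclusion''---given reals $x_1, \ldots, x_n$ and $t_1, \ldots, t_k$, decide whether $\{x_1, \ldots, x_n\} \subseteq \{t_1, \ldots, t_k\}$---and then reduce this problem to \textsc{Line Cover} using only $\bigoh(n)$ additional points, which transports the lower bound over.

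For the bichromatic bound, I would restrict the input to the open region $V \subseteq \R^{n+k}$ defined by the ordering $t_1 < t_2 < \cdots < t_k$. The yes-region $W \cap V$ decomposes into the disjoint union, over maps $\phi \colon [n] \to [k]$, of the cells $L_\phi = \{(x,t) \in V : x_i = t_{\phi(i)} \text{ for all } i\}$, each a connected $k$-dimensional affine piece. Different $\phi$'s force different $x_i$-values once the $t_j$'s are distinct, so the cells are pairwise disjoint; each is closed in $V$ (cut out by equations), and its complement in $W \cap V$ is a finite union of the other cells and hence also closed, making each $L_\phi$ both open and closed in $W \cap V$, i.e., a full connected component. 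This yields $k^n$ components, and Ben-Or's theorem then delivers a depth lower bound of $\log(k^n) - (n+k) = n\log k - n - k = \Omega(n \log k)$ under $k = \bigoh(\sqrt n)$ (the sub-regime $\log k = \bigoh(1)$ is trivially absorbed by the $\Omega(n)$ cost of reading the input).

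For the reduction, map the bichromatic instance to the \textsc{Line Cover} instance whose points are the $n$ queries $(x_i, a_i)$---for distinct fixed reals $a_i > k+1$---together with the $k(k+1)$ anchor points $\{(t_j, m) : j \in [k],\ m \in [k+1]\}$, with parameter $k$. Each anchor column $x = t_j$ contains $k+1$ collinear anchors, and any non-vertical line meets a given column at most once, so no family of at most $k-1$ non-vertical lines can cover all $k+1$ anchors of a single column; the only way to cover every anchor with $k$ lines is therefore to use each of the $k$ verticals $x = t_j$, after which the query $(x_i, a_i)$ is covered precisely when $x_i$ equals some $t_j$. The reduction produces $n + k(k+1) = \bigoh(n)$ points when $k = \bigoh(\sqrt n)$ and is computable in linear time in the algebraic model, so any $o(n \log k)$-time algorithm for \textsc{Line Cover} would contradict stage one. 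The main obstacle is the careful semi-algebraic component-counting in stage one---making precise that each $L_\phi \cap V$ really is a full component of $W \cap V$ and that restricting the Ben-Or argument to inputs in $V$ is legitimate---together with confirming that the reduction's anchor gadget behaves correctly even in near-degenerate inputs where two $t_j$'s approach one another; both are standard manipulations but are where a rigorous proof would spend the bulk of its care.
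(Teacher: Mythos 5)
The paper does not prove this statement at all: Theorem~\ref{lowerb} is imported verbatim from Grantson and Levcopoulos~\cite{Grantson2006}, so there is no in-paper proof to compare against. Your argument is therefore a genuine self-contained proof of a cited result, and it closely mirrors the machinery the paper itself deploys in Section~\ref{subsec:fittinglowerbound} for the \textsc{Rich Lines} lower bound: a Ben-Or component count for a distinctness-type problem (your bichromatic inclusion plays the role of the paper's \textsc{Multiset Subset Distinctness}, and your clopen-cell decomposition of $W\cap V$ into $k^n$ affine pieces is the analogue of the paper's $W_f$ partition), followed by a linear-size reduction that encodes membership via vertical lines. The component count, the disjointness/clopenness argument, the $\log(k^n)-(n+k)=\Omega(n\log k)$ arithmetic under $k=\bigoh(\sqrt n)$, and the column gadget (each column of $k+1$ collinear anchors forces its vertical line into any $k$-line cover, exhausting the budget) are all correct.

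One point of your own framing deserves correction, though it does not sink the proof. You worry about ``near-degenerate inputs where two $t_j$'s approach one another''; near-degeneracy is harmless since the gadget argument is exact, but \emph{exact} degeneracy $t_j=t_{j'}$ genuinely breaks the reduction: with only $k'<k$ distinct columns, the cover retains $k-k'$ spare lines that can absorb a query $x_i\notin\{t_j\}$, so the \textsc{Line Cover} instance can be a yes-instance while bichromatic inclusion fails. The repair is the same device you already need to legitimize the component count: define the source problem as deciding membership in $W\cap V$ (prepending the $k-1$ comparisons $t_1<\cdots<t_k$ costs only $O(k)$ depth and Ben-Or applies directly to $W\cap V$, which is where the $k^n$ components live). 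On that restricted domain the gadget is exactly correct, and degenerate inputs never arise. With that adjustment, plus the routine handling of the constant-$k$ regime and of the corner where $k(k+1)$ is not negligible against $n$, your sketch is a valid proof of the theorem.
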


We now exploit a folklore connection between kernelization and FPT~\cite{Cygan2015,fptbook} to translate the above time-complexity lower-bound result into a kernelization time-complexity lower-bound result.  We assume that all complexity functions used are proper complexity functions, where by a \emph{proper} complexity function $f: \Nat \longrightarrow \Nat$ we mean a non-decreasing function that, on an input of length $N$, is computable in time $\bigoh(N+f(N))$ and space $\bigoh(f(N))$.

\begin{theorem}
\label{fptker}
Let $Q$ be a parameterized problem in \NP. For any proper complexity function $h$, $Q$ has a kernelization algorithm of running time $\bigoh(h(|x|, k))$, where $(x, k)$ is the input instance to $Q$, if and only if $Q$ can be solved in time $\bigoh(h(|x|, k) + g(k))$ for some proper complexity function $g(k)$.   
\end{theorem}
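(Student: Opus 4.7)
The plan is to establish the equivalence by proving each direction through a standard simulation argument. For the forward direction, suppose $Q$ admits a kernelization running in time $\bigoh(h(|x|,k))$ producing a kernel $(x',k')$ with $|x'|,k' \le f(k)$ for some computable function $f$. Since $Q \in \NP$, a brute-force search decides any instance of length $N$ in time $2^{\bigoh(N^c)}$ for some constant $c$. Applying this to $(x',k')$ yields an overall decision procedure for $Q$ running in time $\bigoh(h(|x|,k)) + 2^{\bigoh(f(k)^c)} = \bigoh(h(|x|,k) + g(k))$, where $g(k) := 2^{\bigoh(f(k)^c)}$ is itself a proper complexity function.

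For the backward direction---the more delicate one---I would employ the standard ``run-and-truncate'' trick. Let $C$ be a constant larger than the hidden $\bigoh$-constant in the assumed running time $\bigoh(h(|x|,k) + g(k))$. Given an instance $(x,k)$, simulate the assumed algorithm for exactly $C \cdot h(|x|,k)$ steps. If it halts, we know the answer, so emit as kernel a fixed trivial yes- or no-instance of constant size (which exists since $Q$ is a non-trivial decision problem in $\NP$). Otherwise, the non-termination together with the runtime bound forces $h(|x|,k) \le C' \cdot g(k)$ for some constant $C'$. Because any correct decision algorithm must read its entire input, we also have $h(|x|,k) + g(k) = \Omega(|x|)$; combining the two inequalities yields $|x| \le g'(k)$ for some computable $g'$. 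Hence $(x,k)$ itself is already a valid kernel and can simply be returned. In both cases the total work performed is $\bigoh(h(|x|,k))$, and the kernel size is bounded by $\max\{\bigoh(1), g'(k)\}$, a computable function of $k$ as required.

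The main obstacle I expect is justifying the chain of inequalities in the non-halting case of the backward direction: from ``the algorithm has not halted within $C \cdot h(|x|,k)$ steps'' one must extract a bound on $|x|$ purely in terms of $k$. This relies on two properties---that $h$ is non-decreasing and that any deciding algorithm for $Q$ must spend at least $\Omega(|x|)$ time reading the input---and on choosing $C$ large enough so that a truncated run can only fail when the $g(k)$ term dominates the sum $h(|x|,k)+g(k)$. Properness of $h$ and $g$ ensures that the constructed reduction, including the simulator's step counter, is uniformly computable within the claimed time bound. Once these points are in place, the remaining verification---in particular, that a constant-size trivial instance is a legitimate kernel---is routine.
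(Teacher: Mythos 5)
Your proof is correct, and its forward direction is identical to the paper's (kernelize, then brute-force on the kernel). The backward direction rests on the same dichotomy as the paper's---either the $h(|x|,k)$ term can pay for a full run of the decision algorithm, in which case you output a constant-size trivial kernel, or the $g(k)$ term dominates, in which case $|x|$ is bounded by a function of $k$ and the instance is already a kernel---but you implement the dichotomy differently. The paper tests the condition $g(k)\le h(|x|,k)$ explicitly, which as written requires evaluating $g(k)$; since computing $g(k)$ takes time $\bigoh(k+g(k))$, this could exceed the $\bigoh(h(|x|,k))$ budget precisely in the case $g(k)>h(|x|,k)$, and one has to patch this with a timeout on the evaluation of $g$. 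Your run-and-truncate simulation (clock the decision algorithm for $C\cdot h(|x|,k)$ steps) sidesteps evaluating $g$ altogether and makes the time accounting cleaner; the price is that you must argue the simulation overhead and the step counter stay within a constant factor, which you correctly attribute to properness of $h$. Your derivation of $|x|\le g'(k)$ in the non-halting case (from $h(|x|,k)<C'g(k)$ together with $h(|x|,k)+g(k)=\Omega(|x|)$) is the same ``$h$ is at least linear in the input size'' assumption the paper invokes, just phrased as ``the algorithm must read its input.'' Both arguments are sound; yours is marginally more careful about computability of the case test, the paper's is shorter.
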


\begin{proof}
One direction is straightforward: If $Q$ has a kernelization algorithm of running time $\bigoh(h(|x|, k))$ then $Q$ can be solved in time $\bigoh(h(|x|, k) + g(k))$ by applying the kernelization algorithm followed by a brute-force algorithm.

To prove the other direction, suppose that $Q$ can be solved in time $\bigoh(h(|x|, k) + g(k))$. Given an instance $(x, k)$ of $Q$, if $g(k) \leq h(|x|, k)$ then we solve the instance in $\bigoh(h(|x|, k)+g(k))=\bigoh(h(|x|, k))$ time to obtain a trivial kernel. Otherwise, $g(k) > h(|x|, k) \geq |x|$, where the last inequality is true since, w.l.o.g., we may assume that the function $h(|x|, k)$ is at least linear in the input encoding size. Hence, the instance $x$ has size $\bigoh(g(k)+k)$, and hence is kernelized. The above algorithm is a kernelization algorithm, computing a kernel in time $\bigoh(h(|x|, k))$. 
\end{proof}

The corollary below follows from Theorem~\ref{lowerb} and Theorem~\ref{fptker} above:

\begin{corollary}
\label{cor1}
There exists a constant $c > 0$ such that the running time of any kernelization algorithm for {\sc Line Cover} in the algebraic computation trees model is at least $c n \log k$.   
\end{corollary}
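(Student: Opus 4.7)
The plan is to prove the contrapositive via Theorem~\ref{fptker}. Suppose, for contradiction, that no such constant $c > 0$ exists; then for every constant $c' > 0$, there is a kernelization algorithm for {\sc Line Cover} running in time less than $c' \cdot n\log k$ on arbitrarily large instances. In particular, we can assume a kernelization algorithm whose running time is $h(n,k) = o(n \log k)$ on the family of instances where $k = \Theta(\sqrt{n})$ (the regime covered by Theorem~\ref{lowerb}). The key point is that {\sc Line Cover} is in \NP, so Theorem~\ref{fptker} applies to it.

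Applying Theorem~\ref{fptker} to such a hypothetical kernelization algorithm, we obtain an algorithm that decides {\sc Line Cover} in time $\bigoh(h(n,k) + g(k))$ for some proper complexity function $g(k)$. I would then instantiate this bound on a carefully chosen family of instances: for each sufficiently large $k$, pick $n = n(k)$ large enough that simultaneously $k \le \sqrt{n}$ and $g(k) \le n \log k$. Both conditions can be met simply by taking $n \ge \max\{k^2,\, g(k)/\log k\}$. On this family of inputs, the overall running time becomes $o(n \log k) + g(k) = o(n \log k)$, which directly contradicts Theorem~\ref{lowerb}, since that theorem asserts a lower bound of $c \cdot n \log k$ precisely in the regime $k = \bigoh(\sqrt{n})$.

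The conclusion follows by contradiction: the assumed kernelization running time cannot be $o(n \log k)$, so there must exist a constant $c > 0$ such that every kernelization algorithm for {\sc Line Cover} in the algebraic computation trees model requires at least $c \cdot n \log k$ time on infinitely many inputs. The main obstacle is really just verifying compatibility of the two hypotheses: ensuring that the instance family used to derive the contradiction simultaneously witnesses the Grantson-Levcopoulos lower bound (which demands $k = \bigoh(\sqrt{n})$) and absorbs the additive $g(k)$ term from Theorem~\ref{fptker} into the $o(n \log k)$ bound. Both are handled by choosing $n$ sufficiently large relative to $k$, and the latter is legitimate because $g$ is a proper (hence computable and growth-bounded) complexity function, so $n$ can be chosen to dominate $g(k)/\log k$ without violating the parameter-to-input-size relationship.
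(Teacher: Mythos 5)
Your proposal is correct and is exactly the argument the paper intends: the paper states the corollary as an immediate consequence of Theorem~\ref{lowerb} and Theorem~\ref{fptker}, and you have simply spelled out the details — using the easy direction of Theorem~\ref{fptker} to turn a hypothetical $o(n\log k)$-time kernelization into an $o(n\log k)+g(k)$-time decision algorithm, then choosing $n \ge \max\{k^2, g(k)/\log k\}$ so that the instance family lies in the $k=\bigoh(\sqrt{n})$ regime of Theorem~\ref{lowerb} while the additive $g(k)$ term is absorbed. Your attention to the compatibility of the two hypotheses (the $k=\bigoh(\sqrt{n})$ restriction and the absorption of $g(k)$) is the only nontrivial point, and you handle it correctly.
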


\begin{remark}
The above corollary implies that one cannot asymptotically improve on either of the two factors $n$ or $\log k$ in the term $n \log k$. This rules out, for instance, the possibility of a kernelization algorithm that runs in (linear) $O(n)$ time or in $\bigoh(n \log \log k)$ time. 
\end{remark}


\subsection{{\sc Rich Lines}}
\label{subsec:fittinglowerbound}

In this subsection, we derive lower-bound results on the time complexity of {\sc Rich Lines} in the algebraic computation trees model using Ben-Or's framework~\cite{benor}. We briefly describe this framework, and refer to~\cite{benor} for more information.

Ben-Or~\cite{benor} introduced a framework for proving time complexity lower bounds in the algebraic computation trees model. His framework represents the instances of a problem $Q$ as points in the $D$-dimensional Euclidean space. Based on this representation, a solution for $Q$ can be viewed as a sequence of algebraic operations, each splitting the $D$-dimensional space further into regions according to the operation applied. He then showed that if the number of connected components\footnote{A connected component here denotes a set of pints in the space every two of which are connected by a curve whose points belong to the same component.}---viewed as connected regions of the $D$-dimensional space---induced by the set of yes-instances of $Q$ of specific size $n$ is $N$, then any algebraic tree model for $Q$ must have depth at least $\Omega(\log N-D)$, thus proving a lower bound of $\Omega(\log N-D)$ on the time complexity of $Q$ in the algebraic computation trees model. As pointed before, it is well known that any lower-bound result derived using Ben-Or's framework~\cite{benor} in the algebraic computation trees model implies the same lower-bound result in the real RAM model~\cite{adt}. \\

Consider the following problem, which is a variant of the {\sc Element Distinctness} problem~\cite{benor}: 
\begin{quote}
{\sc    Multiset Subset Distinctness} \\
Given a multi-set $A=\{ a_1, a_2, \ldots, a_n \}$ and a positive integer $\lambda$, decide whether $A$ can be partitioned into $n/\lambda$ multi-subsets $A_1, A_2, \ldots, A_{n/\lambda}$, such that each subset $A_i$, where $i \in [n/\lambda]$, contains exactly $\lambda$ identical elements, and no two (distinct) multi-subsets contain identical elements.
\end{quote}
Note that, when $\lambda=1$, {\sc Multiset Subset Distinctness} problem is precisely the {\sc Element Distinctness} problem. Note also that
testing whether an instance $(A, \lambda)$ is a valid instance, and hence, whether $n$ is divisible by $\lambda$, can be trivially done in linear time, using only algebraic
operations.

\begin{theorem} \label{thm:set-partition}
There exists a constant $c > 0$ such that, for every positive $n, \lambda \in \Nat$ such that $\lambda$ divides $n$, {\sc Multiset Subset Distinctness} requires time at least $c \cdot n\log (\frac{n}{\lambda})$ in the algebraic computation trees model. 
\end{theorem}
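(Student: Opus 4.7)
The plan is to apply Ben-Or's framework, which asserts that if the yes-instance set $W$ of a decision problem, viewed as a subset of $\mathbb{R}^n$, has $N$ connected components, then any algebraic computation tree that decides membership in $W$ has depth $\Omega(\log N - n)$. I would identify $W \subseteq \mathbb{R}^n$ with the set of inputs $(a_1,\ldots,a_n)$ that are yes-instances of {\sc Multiset Subset Distinctness} at the given $\lambda$, lower-bound the number of connected components of $W$ by an explicit combinatorial count, apply Stirling's approximation, and then invoke Ben-Or to conclude.

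The combinatorial description of $W$ is as follows: a point $(a_1,\ldots,a_n)$ lies in $W$ precisely when its coordinates induce an equivalence relation on $[n]$ with exactly $n/\lambda$ classes of size $\lambda$, and distinct classes receive distinct values. I would associate one candidate region of $W$ with each ordered partition of $[n]$ into $n/\lambda$ blocks of size $\lambda$: the $i$-th block's positions share a common coordinate value $v_i$, and $v_1 < v_2 < \cdots < v_{n/\lambda}$. Each such region is the image of an open chamber of the braid arrangement in $\mathbb{R}^{n/\lambda}$ under a linear embedding into $\mathbb{R}^n$, hence is nonempty and path-connected.

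The crux—which I expect to be the main obstacle—is arguing that distinct ordered partitions give distinct connected components of $W$. Along any continuous path inside $W$ the underlying equivalence relation must remain constant: any two coordinates that are equal at one endpoint cannot become unequal in transit without momentarily producing a class of size less than $\lambda$ (violating membership in $W$), and conversely two unequal coordinates cannot merge without producing a class of size $2\lambda$. Likewise, the linear order of the $n/\lambda$ block values cannot change along a path in $W$ without two distinct block values coinciding, which again violates the size constraint. Thus the number of connected components is at least the number of ordered partitions of $[n]$ into $n/\lambda$ blocks of size $\lambda$, which equals $N \geq n!/(\lambda!)^{n/\lambda}$.

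Applying Stirling,
\[
\log N \;\geq\; \log(n!) - (n/\lambda)\log(\lambda!) \;=\; n\log(n/\lambda) - O(\log n),
\]
so Ben-Or's theorem yields a depth lower bound of $\Omega(\log N - n) = \Omega(n\log(n/\lambda) - n)$. When $\lambda \leq n/2$ we have $n\log(n/\lambda) \geq n$, so the additive $-n$ is absorbed and the bound becomes $\Omega(n\log(n/\lambda))$. The only remaining case permitted by $\lambda \mid n$ is $\lambda = n$, where $n\log(n/\lambda) = 0$ and the claim is vacuous, completing the proof for every admissible $(n,\lambda)$.
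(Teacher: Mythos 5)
Your proposal is correct and follows essentially the same route as the paper's proof: both apply Ben-Or's theorem to the yes-set, index its connected components by ordered partitions of $[n]$ into $n/\lambda$ blocks of size $\lambda$ (the paper's order-compatible maps $f:[n]\to[n/\lambda]$), count $N = n!/(\lambda!)^{n/\lambda}$ candidate components, show distinct ones cannot be joined by a path inside the yes-set, and conclude via Stirling. The one imprecision is your stated reason that two equal coordinates cannot separate along a path ``without momentarily producing a class of size less than $\lambda$'' --- the actual obstruction is an over-large class (size $\geq 2\lambda$) at the critical limit time, or equivalently that each region's preimage along the path is closed so connectedness of $[0,1]$ forces the ordered partition to be constant --- which is exactly the limit argument the paper carries out explicitly with Heine--Borel and convergent sequences.
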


\begin{proof}
For any fixed $n$ and $\lambda$, the instance $(A, \lambda)$, where $A=(a_1, \ldots, a_n)$, is represented as the point $(a_1, \ldots, a_n, \lambda)$ in the $(n+1)$-dimensional Euclidean space 
$R^{n+1}$. Denote by $W_\lambda^{n+1}$ the set of points in $R^{n+1}$ that corresponds to the set of yes-instances of {\sc Multiset Subset Distinctness}. By Ben-Or's results~\cite[\S4]{benor}, it suffices to show that the number of connected components of $W_{\lambda}^{n+1}$ 
is at least ${n\choose \lambda, \lambda, \ldots, \lambda}=\Theta(\frac{\sqrt{2\pi n}(n/e)^n}{(\sqrt{2\pi \lambda}(\lambda/e)^\lambda)^{n/\lambda}})$~\cite[\S9.6]{stirling}, as this would show that the depth of any algebraic computation tree for {\sc Multiset Subset Distinctness} is at least
$\Omega(\log{n\choose \lambda, \lambda, \ldots, \lambda}) = \Omega(n\log (\frac{n}{\lambda}))$.

Each yes-instance $(A, \lambda)$ of {\sc Multiset Subset Distinctness} corresponds to a mapping $f$ from $[n] \rightarrow [n/\lambda]$ such 
that $f(i)<f(j)$ if and only if $a_i<a_j$, 
and such that $f(i)=f(j)$ if and only if $a_i=a_j$,
and such that for each $j \in [n/\lambda]$: $|\{i\in [n] \mid f(i)=j \}|=\lambda$.
 It is easy to see that the number of such functions $f$ is ${n \choose \lambda, \lambda, \ldots, \lambda}$. 
For each such function $f$, let $W_f$ be the set of yes-instances corresponding to $f$, and let ${\cal W}$
be the set of all subsets $W_{f}$.  It is easy to verify that the sets $W_f$ in $\W$ partition 
$W_{\lambda}^{n+1}$, and that $W_f$ is a connected region/subset in $\mathbb{R}^{n+1}$, as it is the intersection of hyperplanes with a convex set/region.

We prove that, for any two different functions $f$ and $f '$, $W_{f}$ 
and $W_{f'}$ belong to two different connected component of $W_{\lambda}^{n+1}$. Assume 
to the contrary that a point $p\in W_{f}$ and a point $p' \in W_{f'}$ are 
in the same connected component of the set $W_{\lambda}^{n+1}$. Then there is a path $\Pi$ in $W_\lambda^{n+1}$ 
from $p$ to $p'$. This path $\Pi$ can be given in the parametric form as: 
$$ \Pi: \pi(t)=(a_1(t), a_2(t), \ldots, a_n(t), \lambda ), 0\le t\le 1, $$
where $\pi(0)=p$, $\pi(1)=p'$, and each $a_i(t)$, $i \in [n]$ is a continuous function 
of $t$. For an interval $I \subseteq [0,1]$, denote by $\pi(I) = \{ \pi(t) \mid t \in I\}$.

Suppose first that, for each $t\in [0,1]$, there is an open interval $I_t$ containing $t$ such that all points 
in $\pi(I_{t})$ are in the same subset of $\W$. 
Then by the Heine-Borel Theorem \cite{rudin}, we can find a finite set of open intervals
covering $[0,1]$ such that for each such open interval $I_t$, all points in $\pi(I_{t})$ 
are in the same subset of $\W$. This implies that all points on the 
path $\Pi$ are in the same subset of $\W$, contradicting the fact that the subsets 
$W_{f}$ and $W_{f'}$ are disjoint. 

%


Suppose now that there exists a $t_0 \in [0,1]$, where $\pi(t_0)$ is in some $W_{f_1}$, such that for every open interval $I$ containing $t_0$, $\pi(I)$ contains a point 
not in $W_{f_1}$. Since $\W$ is finite, we can construct a sequence $(t)_i$ in $[0, 1]$ converging to $t_0$, and such that, for each $i$, $\pi(t_i)$ belongs to the \emph{same} set 
$W_{f_2} \in \W$, where $f_1 \neq f_2$. Since  $f_1 \neq f_2$, there exist indices $z_1$ and $z_2$ such that 
$z_1 \neq z_2$,  $f_1(z_1) < f_1(z_2)$
and $f_2(z_1) > f_2(z_2)$. 
Consider the sequence of points
$$ \pi(t_r)= ( a_1(t_r), a_2(t_r), \ldots, a_n(t_r), \lambda ), \text{   for } r \ge 1. $$
Since $\pi(t_r)$ approaches $\pi(t_0)$ as $t_r \rightarrow t_0$, 
we must have 
\begin{eqnarray}
&~~~~|a_{z_1}(t_r)-a_{z_1}(t_0)| +|a_{z_2}(t_r)-a_{z_2}(t_0)| \rightarrow 0, \label{J4}
\end{eqnarray} as
$t_r \rightarrow t_0$. 
Recall that $f_1(z_1)<f_1(z_2)$ and $f_2(z_1)> f_2(z_2)$, and hence,
$a_{z_1}(t_0)<a_{z_2}(t_0)$ and $a_{z_1}(t_r)>a_{z_2}(t_r)$. 
It follows that:
\begin{alignat}{2}
&~|a_{z_1}(t_r)-a_{z_1}(t_0)| +|a_{z_2}(t_r)-a_{z_2}(t_0)| \\ 
\ge&~  |(a_{z_1}(t_r)-a_{z_1}(t_0)) -(a_{z_2}(t_r)-a_{z_2}(t_0))| \label{J1} \\  
   \ge&~ | (a_{z_2}(t_0)-a_{z_1}(t_0))+(a_{z_1}(t_r)-a_{z_2}(t_r))  | \label{J2} \\
   \ge&~ | (a_{z_2}(t_0)-a_{z_1}(t_0)) | \label{J3}. 
\end{alignat}
Observing that $a_{z_1}(t_0)$ and 
$a_{z_2}(t_0)$ are fixed, inequality (\ref{J3}) contradicts (\ref{J4}). This completes the proof. 
\end{proof}

Now, we prove a time lower bound $\Omega(n\log \frac{n}{\lambda})$ for the {\sc Rich Lines} problem via a reduction from {\sc Multiset Subset Distinctness} problem. We note that, for a wide range of values of $\lambda$, the number of $\lambda$-rich lines does not exceed the time-complexity lower bound proved below.  In particular, it follows from  Theorem \ref{RichlineBound1} and Lemma \ref{RichlineBound2} that, for $\lambda=\Omega(n^{1/3})$, the number of $\lambda$-rich lines is $O(n)$.

\begin{theorem}
\label{thm:fittinglowerbound}
There exists a constant $c_0 > 0$ such that, for every positive $n, \lambda \in \Nat$, {\sc Rich Lines} requires time at least $c_0 \cdot n\log (\frac{n}{\lambda})$ in the algebraic computation trees model.
\end{theorem}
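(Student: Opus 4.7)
The plan is to derive the $\Omega(n\log(n/\lambda))$ lower bound on {\sc Rich Lines} by composing Theorem~\ref{thm:set-partition} with a linear-time algebraic reduction from {\sc Multiset Subset Distinctness} to {\sc Rich Lines}. I would focus first on the case where $\lambda$ divides $n$; the remaining cases reduce to this one by a straightforward padding argument, and when $\lambda$ is within a constant factor of $n$ the claimed bound is $\bigoh(n)$ and is trivially satisfied by the cost of reading the input.

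For the reduction, given an instance $(A,\lambda)$ of {\sc Multiset Subset Distinctness} with $A=(a_1,\ldots,a_n)$, I would construct the {\sc Rich Lines} instance $(S,\lambda)$ with $S=\{(a_i,i):i\in[n]\}$. This is a linear-time algebraic embedding of $\mathbb{R}^{n+1}$ into $\mathbb{R}^{2n+1}$. The key combinatorial observation is that the point $p_i=(a_i,i)$ lies on the vertical line $x=a_i$, and the number of points of $S$ on the vertical line $x=v$ equals the multiplicity $c(v)$ of the value $v$ in $A$. Hence the vertical $\lambda$-rich lines of $S$ correspond bijectively to the values of $A$ whose multiplicity is at least $\lambda$. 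Because $\sum_v c(v)=n$, the number of such lines is at most $n/\lambda$, with equality if and only if every distinct value of $A$ occurs exactly $\lambda$ times---precisely the yes-condition for {\sc Multiset Subset Distinctness}.

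From any algebraic-computation-tree algorithm $\mathcal{A}$ solving {\sc Rich Lines} in time $T(n,\lambda)$ and producing the output set $L$, I would derive an algorithm for {\sc Multiset Subset Distinctness} as follows: (i) build $S$ in $\bigoh(n)$ time; (ii) run $\mathcal{A}$ on $(S,\lambda)$ to obtain $L$ in time $T(n,\lambda)$; and (iii) scan $L$, count the lines of infinite slope, and accept iff this count equals $n/\lambda$. The postprocessing step (iii) uses $\bigoh(|L|)$ algebraic operations; since any algorithm that writes $L$ to its output must incur $\Omega(|L|)$ steps, this overhead is absorbed into $T(n,\lambda)$. The composed algorithm therefore runs in $\bigoh(n)+T(n,\lambda)$ steps, and Theorem~\ref{thm:set-partition} forces $T(n,\lambda)\geq \Omega(n\log(n/\lambda))$, for an appropriate constant $c_0>0$.

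The main obstacle I anticipate is making the reduction \emph{sensitive} enough that the bound holds for \emph{every} pair $(n,\lambda)$ rather than only asymptotically: both the reduction and the postprocessing must be exact $\bigoh(n)$-step algebraic computations, and the characterization of the yes-instances through the vertical-line count must be exact rather than merely approximate. A subsidiary issue is that {\sc Rich Lines} is nominally an output rather than a decision problem, so I would formalize the composition at the level of the algebraic computation tree itself: appending the postprocessing onto each leaf of $\mathcal{A}$'s tree yields a decision tree for {\sc Multiset Subset Distinctness} whose depth exceeds that of $\mathcal{A}$ by only $\bigoh(|L|)=\bigoh(T(n,\lambda))$ additional comparisons, which suffices to transfer the lower bound.
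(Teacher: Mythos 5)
Your proposal is correct and follows essentially the same route as the paper: the reduction $a_i \mapsto (a_i,i)$ from {\sc Multiset Subset Distinctness}, deciding acceptance by counting vertical $\lambda$-rich lines against $n/\lambda$, absorbing the $\bigoh(|L|)$ postprocessing into the output cost of the {\sc Rich Lines} algorithm, and padding (the paper does so with $s<\lambda$ points on a fresh vertical line left of all inputs) to remove the divisibility assumption. The only cosmetic difference is that the paper phrases the composition as a proof by contradiction with explicitly chosen constants $c'$ and $c$, whereas you compose the computation trees directly; these are the same argument.
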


\begin{proof}
 
We prove the theorem via a Turing-reduction ${\cal T}$ from the {\sc Multiset Subset Distinctness} problem. The theorem would then follow from Theorem~\ref{thm:set-partition}. We first present the reduction.

Given an instance $(A, \lambda)=(a_1, a_2, \ldots, a_n, \lambda)$ of {\sc Multiset Subset Distinctness}, we construct the instance $(P, \lambda)$ of {\sc Rich Lines}, where $P=\{(a_i, i) \mid a_i\in A\}$.
Note that $(P, \lambda)$ can be constructed in $\bigoh(n)$ time. Observe that $(A, \lambda)$ is a yes-instance of {\sc Multiset Subset Distinctness} if and only if 
there are $n/\lambda$ vertical lines that each covers exactly $\lambda$ points of $P$. 
We can solve $(P, \lambda)$ to find the set $L$
of lines induced by $P$ that each covers at least $\lambda$ points. Then, we compute the subset $V$ of vertical lines in $L$ and accept $(A, \lambda)$ if and only if $|V|=n/\lambda$. Let $t(n, \lambda)$ be the time needed to perform this reduction ${\cal T}$.

Now to prove the theorem, we proceed by contradiction. Suppose that no such constant $c_0$ exists, and let $c$ be the universal constant in Theorem~\ref{thm:set-partition}.  Then, for every constant $c'>0$, there exist $n, \lambda \in \Nat$ such that, for \emph{all} input instances of size $n$ and parameter $\lambda$, {\sc Rich Lines} can be solved in time less than $c' \cdot n\log (\frac{n}{\lambda})$.  We observe that, under this assumption, the number of lines in the solution to each of these instances must be less than $c' \cdot n\log (\frac{n}{\lambda})$, otherwise, the running time for solving the instance would necessarily exceed $c' \cdot n\log (\frac{n}{\lambda})$. It is not difficult to see that we can choose a constant $c'>0$ and $n , \lambda \in \Nat$ such that for the specific function $t(n, \lambda)$, where $t(n, \lambda)$ is running time of the reduction ${\cal T}$ given above, we have $t(n, \lambda) + c' \cdot n\log (\frac{n}{\lambda}) < c \cdot n\log (\frac{n}{\lambda})$. Let $n, \lambda$ be the values chosen accordingly. 

Assume first that $\lambda$ divides $n$, and we explain below how the proof can be modified to lift this assumption. Given an instance $(A, \lambda)=(a_1, a_2, \ldots, a_n, \lambda)$ of {\sc Multiset Subset Distinctness}, where $A$ has $n$ elements, we reduce  $(A, \lambda)$ via reduction ${\cal T}$ to an instance $(P, \lambda)$ of {\sc Rich Lines} and solve $(P, \lambda)$ to obtain a solution to $(A,\lambda)$ in time less than $c n\log (\frac{n}{\lambda})$, contradicting Theorem~\ref{thm:set-partition}.

In the case where $\lambda$ does not divide $n$, let $n=r\cdot \lambda + s$, where $s < \lambda$, and let $n'=r \cdot \lambda$.  Observe that the lower bound for {\sc Multiset Subset Distinctness} established in Theorem~\ref{thm:set-partition} holds for the values $n', \lambda$ (since $\lambda$ divides $n'$). Given an instance $(A', \lambda)=(a_1, a_2, \ldots, a_{n'}, \lambda)$ of {\sc Multiset Subset Distinctness}, we construct the instance $(P, \lambda)$ of {\sc Rich Lines}, where $P=P' \cup S$, and $P'=\{(a_i, i) \mid a_i\in A'\}$. The set $S$ contains precisely $s < \lambda$ points and is constructed as follows. We find the smallest element $a_{min} \in A'$, and choose a number $x < a_{min}$. Define $S=\{(x,j) \mid j \in [s]\}$. It is easy to verify that $(A', \lambda)$ is a yes-instance of {\sc Multiset Subset Distinctness} if and only if the number of vertical lines, each containing at least $\lambda$ points of $P$, is $n'/\lambda$. Hence, we can decide $(A', \lambda)$ as explained in the first case above. Note that all the steps involved in the construction of $(P, \lambda)$, including the computation of the number $x$, can be carried out in linear time. Since the constant $c'$ can be chosen to be arbitrary small, it is not difficult to see that we can choose $c'$ and the values $n, \lambda$ such that the running time of the above reduction is less than $c \cdot n'\log (\frac{n'}{\lambda})$, again contradicting Theorem~\ref{thm:set-partition}. Note also that all the operations involved in the above reduction can be equivalently modeled in the algebraic computation trees model~\cite{adt}. This completes the proof. 
\end{proof}

\begin{remark}\rm
\label{rem:fittinglowerbound}
We make a few observations regrading the optimality of {\bf Alg-RichLines} in light of Theorem~\ref{thm:fittinglowerbound}. First, for any value of $\lambda =\Omega(\sqrt{n \ln n})$, by part (4) of Theorem~\ref{thm:exactfitting}, the running time of the randomized algorithm {\bf Alg-RichLines} is $\bigoh(n \ln(\frac{n}{\lambda}))$, which matches the lower bound on the {\sc Rich Lines} problem in the algebraic computation trees model asserted by Theorem~\ref{thm:fittinglowerbound}. Even though the two bounds are w.r.t.~two different models of computation, the fact that the running time of {\bf Alg-RichLines} is $\bigoh(n \ln(\frac{n}{\lambda}))$, which is very close to linear time, and that there is a matching lower bound in the algebraic computations trees model suggests that the randomized algorithm {\bf Alg-RichLines} is near optimal (if not optimal) when $\lambda =\Omega(\sqrt{n \ln n})$. 

Second, for any constant value of $\lambda \geq 1$, and any $n$, one can construct an instance for which the number of lines containing at least $\lambda$ points is $\Omega(n^2)$, as follows. Place $n$ points in a grid of $\lambda$ rows and $\frac{n}{\lambda}$ columns. For any point $p$ on the first row of the grid, there are $\frac{n}{\lambda^2}$ points on the second row such that the line passing through $p$ and one of each one of these points contains $\lambda$ points in the grid; hence in this instance there are at least $n^2/\lambda^3 = \Omega(n^2)$ lines containing at least $\lambda$ points.

This shows that the $\bigoh(n^2)$ running time of {\bf Alg-RichLines} for the case where $\lambda$ is a constant is optimal. We note that the {\sc 3-Points-On-Line} problem is known to be 3-SUM-hard~\cite{king}, suggesting that no $\bigoh(n^{2-\epsilon})$-time algorithm, for any $\epsilon > 0$, exists. However, the {\sc 3-Points-On-Line} problem admits an $o(n^2)$-time randomized algorithms w.r.t.~several computational models~\cite{demaine}. 
\end{remark}
\fi

\section{Conclusion}
\label{sec:conclusion}
\iflong
In this paper, we developed new tools that enabled us to derive upper and lower bounds on the time complexity of fundamental geometric point-line covering problems. The time and space efficiency renders our algorithms amenable to application in massive data processing and analytics.
\fi

Several interesting questions ensue from our work. First, many of the previous algorithms for {\sc Rich Lines} and {\sc Line Cover} can be lifted to higher dimensions (e.g., see~\cite{Guibas1996,things,wangjx}). We believe that it is possible to lift the results in this paper to higher dimensions as well. Second, most of the algorithms we presented are randomized Monte Carlo algorithms. It is interesting to investigate if these algorithms can be derandomized without trading off their performance guarantees by much. Finally, it is interesting to see if the sampling and optimization techniques developed in this paper can be applied to other related problems in computational geometry. We leave all the above questions as directions for future research.

\newpage
\bibliography{ref}

\end{document}